\documentclass[a4paper,onecolumn,11pt,accepted=2025-10-06]{quantumarticle}
\pdfoutput=1

\usepackage{amsmath, amsthm, amssymb, amsfonts}
\usepackage[utf8]{inputenc} % allow utf-8 input
\usepackage[T1]{fontenc}    % use 8-bit T1 fonts
\usepackage{hyperref}       % hyperlinks
\usepackage{url}            % simple URL typesetting
\usepackage{nicefrac}       % compact symbols for 1/2, etc.
\usepackage{microtype}      % microtypography
\usepackage{lipsum}
\usepackage{graphicx}       % graphics
\usepackage{multirow}
\usepackage[table]{xcolor}
\usepackage{svg} % svg images

%\graphicspath{{media/}}     % organize your images and other figures under media/ folder

% Bibliography

%\bibliographystyle{ksfh_nat}
\usepackage[numbers]{natbib}
\usepackage{comment}
\usepackage{physics}
\theoremstyle{definition}
\usepackage{mathtools}

\newtheorem{theorem}{Theorem}
\newtheorem{lemma}[theorem]{Lemma}
\newtheorem{corollary}[theorem]{Corollary}
\newtheorem{definition}[theorem]{Definition}
\newtheorem{example}[theorem]{Example}
\newtheorem{proposition}[theorem]{Proposition}
\theoremstyle{remark}
\newtheorem{remark}[theorem]{Remark}

\newtheorem{algorithm}{Algorithm}
\usepackage{graphicx}
\usepackage{changepage}
\newcommand{\umd}{QuICS, University of Maryland, College Park, MD 20742, USA.}
\begin{document}

\title{Error Correction in Dynamical Codes}
\author{Esther Xiaozhen Fu}
\affiliation{\umd}
\email{xz1@umd.edu}
\author{Daniel Gottesman}
\email{dgottesman@umd.edu}
\affiliation{\umd}
\affiliation{Computer Science Department, University of Maryland, College Park, MD 20742, USA}

\begin{abstract}
    We ask what is the general framework for a quantum error correcting code that is defined by a sequence of measurements. Recently, there has been much interest in Floquet codes and space-time codes. In this work, we define and study the distance of a dynamical code. This is a subtle concept and difficult to determine: At any given time, the system will be in a subspace which forms a quantum error-correcting code with a given distance, but the full error correction capability of that code may not be available due to the schedule of measurements associated with the code. We address this challenge by developing an algorithm that tracks information we have learned about the error syndromes through the protocol and put that together to determine the distance of a dynamical code, in a non-fault-tolerant context. We use the tools developed for the algorithm to analyze the initialization and masking properties of a generic Floquet code. Further, we look at properties of dynamical codes under the constraint of geometric locality with a view to understand whether the fundamental limitations on logical gates and code parameters imposed by geometric locality for traditional codes can be surpassed in the dynamical paradigm. We find that codes with a limited number of long range connectivity will not allow non-Clifford gates to be implemented with finite depth circuits in the 2D setting. 

\end{abstract}

\maketitle

\section{Introduction}

Quantum error correcting codes form the foundation of scalable quantum computing. The performance and experimental feasibility of codes depend on their properties. Particularly important among them are code distance, logical gates and tradeoffs between the number of logical qubits and distance. In this paper, we study these code properties in the context of the new Floquet code paradigm. 

Floquet codes extend the framework of conventional “static” stabilizer codes--those defined by fixed stabilizer or gauge groups--by introducing time-dependent measurements. The honeycomb code introduced in \cite{Hastings2021dynamically,haah_boundaries_2022} is a first example of a new paradigm of quantum codes that features periodic sequence of measurements that anti-commute with the previous round of measurements, leading to a periodic change in both the stabilizer group and the logical representatives. At each moment, the stabilizers form what's known as the \textit{instantaneous stabilizer group} (ISG). Decoding involves combining measurement outcomes from multiple rounds. One key advantage of Floquet code is the low weight measurements, which are combined to reveal the syndrome of high-weight stabilizers—unlocking the power of error correction with minimal overhead. 

Since the first paper on Floquet codes, numerous additional examples of Floquet codes have been found. This includes planar Floquet codes \cite{vuillot2021planar,haah_boundaries_2022,kesselring2022anyon,ellison2023floquet}, 2D Floquet codes defined on hyperbolic surfaces\cite{fahimniya2023faulttolerant,higgott2023constructions}, and 3D Floquet codes\cite{PhysRevB.108.205116,bauer2023topological,davydova_floquet_2023}. There are also CSS version of Floquet codes and codes with non-periodicity \cite{davydova_floquet_2023,kesselring2022anyon}. The threshold and performance of Floquet codes have been demonstrated in \cite{Gidney_2021,PRXQuantum.4.010310}. Progress has also been made in the theoretical understanding of Floquet codes, including results based on measurement quantum cellular automata \cite{aasen2023measurement}, adiabatic paths of gapped Hamiltonians \cite{aasen_adiabatic_2022}, and unitary loops \cite{sullivan2023floquet}. Later, new Floquet codes that are not based on colorable graphs have also been found, including the Floquet Bacon-Shor code \cite{alam2024} and Floquet Haah code \cite{xu2025}. 

Generalizing and broadening the idea of Floquet codes, we wish to understand the framework of quantum codes that are defined by a sequence of measurements. Our study is in part driven by insights on code deformation and code switching \cite{huang2018transversal,kubica_universal_2015} in implementing a universal set of gates: For example, code switching between Reed Muller code and Steane code enables the transversal implementation of universal logical gates \cite{anderson_fault-tolerant_2014}. As a dynamical code consists of several rounds of ISGs, it can potentially circumvent the no-go theorem by Eastin and Knill \cite{eastin_restrictions_2009} which shows that transversal operators on any non-trivial quantum code belong to a finite group, and thus cannot be universal.  We use the term dynamical code instead of Floquet code, as we do not exclude the possibility of a non-periodic sequence of Pauli measurements. Most previous works of Floquet codes have concentrated on codes defined on three-colorable graphs, here we look at more general codes: A dynamical code is more generally defined as a code that involves rounds of Pauli measurements. Each round of measurements consists of only commuting Pauli operators. In particular, a two round dynamical code is a subsystem code, such as the Bacon-Shor code \cite{bravyi_subsystem_2011}, which consists of geometrically local gauges and non-local stabilizers.  

In this work, we present a comprehensive study of the framework of dynamical code using QEC language. We formalize the concept of masking with respect to a dynamical code and introduce a notion of distance that generalizes known notions of distance of stabilizer and subsystem codes. Building on this theoretical foundation, we will also introduce an efficient algorithm to determine the syndromes and therefore the distance for a given dynamical code. Making use of the structure of this algorithm, we derive new insights into the initialization of arbitrary Floquet codes. Finally, we establish fundamental limitations by extending the Bravyi--K\"{o}nig no-go theorem to the setting of dynamical codes. See Section \ref{sec:overview} for an overview of the main results. 

The paper is structured as follows. We start with an overview of our results in Section \ref{sec:overview}. In Section \ref{sec:preliminary}, we give a brief introduction to quantum error correcting codes and define some notations for the rest of the paper. In Section \ref{sec: dynamical code}, we define distance for dynamical codes and discuss the logical operators and syndrome information in the presence of measurements. Section \ref{sec: the distance algorithm} discusses the distance algorithm: In Subsection \ref{subsec:intuition}, we give an intuition through examples to demonstrate that it can be tricky to find the distance of a given round in the dynamical code. In Subsection \ref{subsec: algorithm}, we present the algorithm for finding the unmasked distance of dynamical codes, and show that the algorithm can output all the unmasked stabilizers and permanently masked stabilizers of an ISG. In Section \ref{sec:init}, we apply the tools developed for the distance algorithm to analyze the initialization and masking properties of Floquet codes. In Section \ref{sec: geometrically-local-codes}, we generalize previous works for geometrically local stabilizer codes to dynamical codes and also present a no-go theorem for non-Clifford transversal gates on 2D geometrically local dynamical codes. 

\section{Overview}\label{sec:overview}
We now give an overview of our contributions in this work.
\paragraph{Framework of dynamical code}
In Section \ref{sec: dynamical code}, we extend the framework of quantum error correcting codes from stabilizer and subsystem codes to dynamical codes. 

Subsection \ref{subsec: distance} present a tighter notion of distance for dynamical code. Our definition captures the fact that, unlike static codes, the correctability of errors in a dynamical code depends not only on the current stabilizer group but also on the syndrome information recoverable from future measurements. The distance of a dynamical code is defined as the minimum unmasked distance over all instantaneous stabilizer groups (ISGs), where the unmasked distance quantifies the set of correctable errors under the given measurement sequence.

What makes the previous notions of distance inadequate for dynamical codes is that a measurement scheme can irreversibly remove some syndrome information from the code while other parts of the syndromes are revealed. This is the concept of masking which has been previously introduced by Gottesman \cite{gottesman2022opportunities}. For instance, a stabilizer generator can be permanently masked if its eigenvalue cannot be obtained again in the future. These are elements that have been irreversibly removed without revealing their syndrome information. 

We classify each stabilizer generator in every instantaneous stabilizer group into three categories: 

\begin{itemize}
    \item Permanently masked stabilizers: Stabilizers whose syndromes cannot be obtained using given sequence of measurements.
    \item Temporarily masked stabilizers: Stabilizers whose syndromes have yet to be obtained with the current sequence but can still be obtained if we choose further measurements carefully.
    \item Unmasked stabilizers: Stabilizers whose syndromes can be obtained from outcomes of the current sequence of measurements. 
\end{itemize}

Using this classification, we define the unmasked distance of an ISG in Definition \ref{def:unmasked distance}. We prove that this notion yields a tighter upper bound on the set of correctable errors than prior definitions, by generalizing the Knill–Laflamme condition to the setting where only partial syndrome information is available. We conjecture that the unmasked distance is tight in the absence of measurement errors. 

In Subsection \ref{subsec: Logical operators}, we provide a general theoretical framework for analyzing how logical operators and logical errors evolve over time in a dynamical code. Unlike static codes, where logical representatives are fixed, in a dynamical code the logical representatives and their values are inherently spacetime-dependent: the outcome of a logical operator can depend both on the spacetime location of errors and on the measurement outcomes of relevant stabilizers at different times.

Theorem \ref{theorem: spacetime errors} in Subsection \ref{subsec: syndromes through} shows that, in addition to where the errors happen, their temporal positions crucially affect the observed syndromes. This result highlights the need for decoding algorithms that take into account not just which qubits are erroneous, but also when those errors occur, relative to the measurement sequence.

\paragraph{Algorithm for finding distance of a dynamical code}

In Section \ref{sec: the distance algorithm}, we present the distance algorithm. 

The distance algorithm is an efficient classical algorithm that determines the set of all the syndrome information that can be learnt from a fixed measurement window (see Section~\ref{sec: the distance algorithm}). The objective is to classify the stabilizer generators of a fixed ISG into three types: \emph{unmasked}, \emph{temporarily masked}, and \emph{permanently masked}. This classification captures which syndromes can be reliably extracted in a finite time window and provides an upper bound—the \emph{unmasked distance}—on the distance of the code (Definition~\ref{def:unmasked distance}).

The algorithm proceeds by tracking the evolution of stabilizers and measurement outcomes through a set of commutation-based update rules (Subsection~\ref{subsec: algorithm}). Two sets, \( C \) and \( V \), are iteratively updated to record the transformed stabilizer generators and accumulated measurements, respectively. The unmasked stabilizers are obtained by computing the intersection \( \langle C \rangle \cap \langle V \rangle \), implemented via the Zassenhaus algorithm (Algorithm~\ref{alg: Zassenhaus algorithm}). From this, the syndromes for unmasked stabilizers are reconstructed (Algorithm~\ref{alg: unmasked stabilizers}). The algorithm also determines permanently masked stabilizers and their destabilizers (Algorithm~\ref{alg: pms and destabilizer}), as well as the temporarily masked ones (Algorithm~\ref{alg: tms}).

Our framework provides a concrete and efficient tool to determine the partial decoding power of a dynamical code under realistic measurement constraints. The algorithm runs in \( O(n^3) \) time, with the main cost coming from the subroutine to compute generator intersections. Several illustrative examples are given in Subsection~\ref{subsec:examples}.
 
The distance algorithm has immediate application to code construction for dynamical codes. For example, it can be used to check the unmasked distance of existing examples of dynamical codes and design measurement sequences to optimize the distance. The framework in the distance algorithm can be of independent interest, as it is a useful tool to analyze other dynamical code properties. In particular, we utilize it to show several results on Floquet codes with periodic measurement sequence. 

Going beyond code construction, the distance algorithm can be extended with relative ease to obtain the outcome code in \cite{delfosse2023spacetime} since the unmasked stabilizers are essentially detector cells restricted to the time window being analyzed, and that Clifford operators can be included between measurement rounds. But beyond this, we can use the distance algorithm to obtain gauge elements of a subsystem spacetime code, since we also take into account other elements including the permanently and temporarily masked stabilizers with their destabilizers. We also track which round(s) of the ISGs the syndrome information is for, and the time window one requires to obtain enough syndrome information for a given round. Although we consider errors in round 0 and assume perfect measurements (in the presence of measurement errors, it is possible for a smaller set of errors in space-time to result in a logical error \cite{vuillot2021planar}), our results can contribute to a better understanding of dynamical codes by establishing some code properties that are inherent to dynamical codes. The results here can also be extended to analyze the fault-tolerance aspect of dynamical codes and then to that of spacetime codes. A better understanding of the distance for dynamical codes will be insightful for understanding the distance and fault tolerance for a Clifford circuit, even eventually extending to include more general noise models. 

One aspect of our algorithm is that it identifies which stabilizer generators are temporarily masked and which are permanently masked.  This is critical information for a dynamical coding protocol that is described in a modular way: Temporarily masked generators from one module may usefully contribute syndrome information into a later module whereas syndrome information from permanently masked generators will never be useful, regardless of the choice of later modules.

Lastly, we remark that there have been several related works. The distance algorithm helps complete the picture by adding anti-commuting measurements into the framework of \cite{bacon_sparse_2017, gottesman2022opportunities}, which translate circuits into quantum codes. The syndrome information from the distance algorithm contains the same information as the algorithm in \cite{delfosse2023spacetime}. The connection between understanding the detector cell/spacetime code picture to designing fault tolerant circuits has also been explored in \cite{derks2024}, and the concept of fault distance (which is related to spacetime code distance) is explored in \cite{beverland2024faulttolerance}.

\paragraph{Initialization of a Floquet Code}  
In Section~\ref{sec:init}, we provide a detailed study of the initialization behavior of Floquet codes. Each cycle of a Floquet code involves measuring the same set of operators in a fixed order. Understanding this initialization process turns out to be fruitful in understanding how many rounds it takes to do a full round of syndrome measurements. 

Our first key result (Theorem~\ref{theorem: subset_floquet}) shows that the set of instantaneous stabilizer generators grows monotonically over successive cycles: specifically, after measuring a given operator $m_i$, the set of stabilizers present at that point in cycle $j$ is always a subset of those in cycle $j+1$. This result is established by applying the machinery of the \textit{distance algorithm} introduced earlier in the paper. The use of this algorithm provides clarity in proving the results that would otherwise require much more elaborate tracking of stabilizers.

Building on this, Theorem~\ref{theorem: new_floquet} formalizes how the number of new stabilizer generators can evolve from cycle to cycle. We show that if $k$ new generators are added in a given cycle, then at most $k$ additional generators may appear in the subsequent cycle. Moreover, for any specific measurement $m_i$, a growth in the number of stabilizer generators in the current cycle implies that the same growth must have been possible in the previous cycle. 

To explore the limits of how many cycles it take to fully initialize, we construct an explicit example showing that this bound is tight: that is, there exist Floquet codes that require exactly $n{-}1$ cycles to initialize a code with $n$ stabilizer generators. This is accomplished via a recursive construction of measurement sequences that add exactly one new stabilizer per cycle. Figures~\ref{fig: init_floquet} and~\ref{fig:seq_insert_floquet} illustrate the mechanics of this recursive design, showing how existing stabilizers are transformed and extended to generate the fully initialized set after a very long number of iterations of a measurement sequence.

Crucially, this initialization behavior is tightly linked to the understanding \textit{masking} and syndrome extraction. The final theorem in this section connects the number of cycles needed for full initialization with the time required to unmask all the stabilizer generators in a given instantaneous stabilizer group (ISG). We prove that after at most $k$ cycles—where $k$ is the number of cycles required to initialize the full Floquet code—all stabilizers in the ISG will be either fully unmasked or permanently masked. This connection allows us to bound the total number of required measurements to determine all unmasked stabilizers: if each cycle involves $m$ measurements and the code requires $k$ cycles to initialize, then the number of measurements needed to fully determine the unmasked part of any ISG is at most $m \cdot k$. In particular, for code families with constant initialization depth and measurements ($k = O(1)$, $m = O(1)$), the unmasking process remains efficient and scalable.

\paragraph{Generalized Bravyi-Koenig theorem to dynamical code} (Section \ref{sec: geometrically-local-codes})
For a quantum code with a fixed set of stabilizer generators, the problem of finding the distance and fault-tolerant gates has been previously studied from the angle of embedding them in Euclidean space, as \textit{locality} is very important in practical code design. Bravyi and Terhal \cite{bravyi_no-go_2009} proved an upper bound on local code distance: $d = O(L^{D-1})$. Further, Bravyi and K{\"o}nig \cite{bravyi_classification_2013} showed that a $D$-dimensional local code can only transversally implement encoded gates from the set $\mathcal{C}^{(D)}$, with $\mathcal{C}^{(1)}$ the Pauli group and $\mathcal{C}^{(j)}$ as the $j^{\mathrm{th}}$ level of the Clifford hierarchy. Thus, we are interested in answering this \textbf{question}: To what extent does these locality restrictions apply to geometrically local dynamical codes?

We answer the above question and also generalize Pastawski and Yoshida's results \cite{pastawski_fault-tolerant_2015} to give a no-go theorem for 2D geometrically local dynamical code to support transversal non-Clifford gates even with some amount of long-range connectivity and non-local stabilizers. This result also extends to higher dimensions. Specifically, we prove that if a logical qubit $q_L$ can be supported on a region sufficiently far away from the qubits with long range connectivity, then a transversal gate that implements a logical single qubit gate on $q_L$ must be from the $D^{\mathrm{th}}$ level Clifford hierarchy. Furthermore, if there is limited amount of long range connectivity, then the code cannot support a logical gate from a higher Clifford hierarchy.

\section{Background and Notation}
\label{sec:preliminary}

\subsection{Stabilizer code}
\begin{definition}
A stabilizer code $\mathcal{Q}=\mathcal{Q}(S)$ with parameters $[[n,k,d]]$ is a code defined on the Hilbert space $\mathcal{H}$ of $n$ qubits. It has $n-k$ Pauli stabilizer generators $S$ and $k$ logical qubits. The codeword space $\mathcal{C}$ is a subspace of $\mathcal{H}$, and contains the $+1$ eigenvectors for stabilizers in $S$. The non-trivial logical representatives are given by the set $N(S)\backslash \langle S\rangle$, where $N(S)$ is the normalizer of $\langle S\rangle$ defined as $ N(S) = \{p \in \mathcal{P}_n: \langle S\rangle p = p\langle S\rangle \}, \mathcal{P}_n$ is the $n$ qubit Pauli group, with $\mathcal{P}=\{\pm 1, \pm i, I,X,Y,Z\}$, $\mathcal{P}_n = \mathcal{P}^{\otimes n}$ and $\langle S \rangle$ is the group generated by $S$. $d$ is the distance, which for a stabilizer code is the smallest weight Pauli operator $L$ for which $L$ commutes with all elements of $S$ but is not itself in $\langle S\rangle$. The distance is the lowest weight of elements in $N(S) \backslash \langle S\rangle$.

\end{definition}

A code is a low-density parity-check (LDPC) code if each generator acts on a constant number of qubits and each qubit is involved in a constant number of generators. 

\subsection{Clifford Hierarchy}
 
 The Clifford hierarchy given by $\{\mathcal{C}^{(k)}, k\geq 1\}$ is defined recursively. $\mathcal{P}_n  $ forms the first level of the Clifford hierarchy and is denoted by  $ \mathcal{C}^{(1)} $. The second level of the Clifford hierarchy $ \mathcal{C}^{(2)} $ is the Clifford group and is given by the group of automorphisms of the Pauli group: $ \{U\mid UPU^{\dagger}\in \mathcal{C}^{(1)}, \forall P \in \mathcal{P}_n\} $. Higher levels in the hierarchy are defined recursively in the same way: 
 \begin{equation}
 	\mathcal{C}^{(k)} := \{U\mid UPU^{\dagger}\in \mathcal{C}^{(k-1)}, \forall P \in \mathcal{P}_n\}
 \end{equation}

\subsection{Outcome and Syndrome Information}\label{subsec: outcome and syndrome}
The outcome of a measurement $m$ is denoted by $O(m) \in \{\pm 1\}$. The outcome $O(s) \in \{\pm 1\}$ is the value of $s \in \langle S\rangle$ and in general can be obtained through the product of outcomes of measurements from different rounds of a measurement sequence. 

The syndrome for an error $E$ (which can consist of errors that occurred at different times in a sequence of measurements) for a particular stabilizer $s$ in an ISG is given by the symplectic inner product $E \odot s$ which is defined below. For a set of stabilizers with initial outcomes and a set of final outcomes at a later time, the difference in the two sets of outcomes gives the syndrome information which one can use to decode and output the best guess of the error in this time window whose syndrome matches the syndrome information that one has obtained through the measurement sequence. 

To define the symplectic inner product between two Pauli operators $A$ and $B$ in $\mathcal{P}_n$, which we denote as $A \odot B$, we first map the Pauli operators to their binary representation in a $2n$ dimensional vector space $V$: 
\begin{equation}\label{eq: pauli to binary}
    F: \mathcal{P}_n \to V, p \mapsto [a,b],
\end{equation}

Here $a$ and $b$ are row vectors with n entries, with $a = [a_1, a_2,\cdots,a_n]$, $b=[b_1,b_2,\cdots,b_n]$, such that $p$ can be written as $p =i^{ab^{T}} \prod_jX_j^{a_j} Z_j^{b_j}$ up to some phases. 
\begin{definition}
     If the vectors for $A$ and $B$ are mapped to $[a,b]$ and $[c,d]$ respectively, the symplectic inner product $A \odot B$ is given by 
    \begin{equation}
      A \odot B=   [a,b]\begin{pmatrix}
0 & I_n \\
I_n & 0 
\end{pmatrix}
[c,d]^{T} \mathrm{(mod\: 2)}
    \end{equation}
    \end{definition}

\subsection{Update rules}
For a dynamical code, its stabilizers and logical operators are updated using the following update rules, which will be used in the distance algorithm in Subsection \ref{subsec: algorithm}. These update rules are also basis independent, a property that will be relied on in proofs in Subsection \ref{subsec: proof}. 

\begin{lemma}(\textbf{Stabilizer Update Rules})\label{lemma: stab update rule}

    Let $\mathcal{S}$ be the stabilizer generators with a stabilizer state $\ket{\psi}$ being either in $+1$ or $-1$ eigenstate of the generators. 

Let $m$ be a Pauli measurement performed on $\ket{\psi}$, and denote the outcome of $m$ by $O(m)\in \{\pm 1\}$.

\begin{enumerate}
    \item If $\pm m \in \langle \mathcal{S} \rangle$, then the outcome is fixed by the eigenvalues of stabilizers for $\ket{\psi}$, and the state remains unchanged. 
    \item If $m$ anti-commutes with some elements in $\mathcal{S}$: Let $V = \{s_1, s_2,\cdots, s_l\}$ be a subset of $\mathcal{S}$ whose elements anti-commute with $m$. We replace $s_1$ with $m$ and update the rest of $V$ by $s_i \to s_i \cdot s_1$, for $2 \leq i\leq l$. $\mathcal{S} \cap V$ is now updated to $ \{O(m) \cdot m, s_2\cdot s_1, s_3 \cdot s_1,\cdots, s_l\cdot s_1\}$ 
    \item If $\pm m \notin \langle \mathcal{S} \rangle$ and $[m,s]=0 \, \forall s \in \mathcal{S}$, then we update the set of stabilizer generators: $\mathcal{S} \to \mathcal{S} \cup \{O(m) \cdot m\}$. This assumes that $m$ is not a logical operator. 
\end{enumerate}
\end{lemma}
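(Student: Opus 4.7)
The plan is to verify each of the three update rules by direct application of the projection postulate: measuring a Pauli operator $m$ with outcome $O(m)\in\{\pm 1\}$ projects the state to $\ket{\psi'} \propto (I+O(m)\,m)\ket{\psi}/2$, and for each case I check that the proposed new generating set (i) stabilizes $\ket{\psi'}$, (ii) is pairwise commuting, and (iii) is independent, so it generates a stabilizer group of the correct size.

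Cases 1 and 3 are essentially immediate. In Case 1, $\pm m$ is already a product of elements of $\mathcal{S}$, so $\ket{\psi}$ is a deterministic eigenstate of $m$ (eigenvalue given by the product of the corresponding $\pm 1$ eigenvalues of the generators), and the projector acts as the identity. In Case 3, the projector $(I+O(m)m)/2$ commutes with every $s\in\mathcal{S}$, so $\ket{\psi'}$ is still stabilized by $\mathcal{S}$; it is also a new eigenstate of $m$, so one appends $O(m)\cdot m$ to the generators. The assumption that $m$ is not a logical operator is exactly what guarantees that this appending does not fix any previously free degree of freedom in the encoded subspace, while $\pm m\notin\langle\mathcal{S}\rangle$ gives independence of the enlarged set.

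The core of the lemma is Case 2, which I would handle in three sub-steps. \emph{Commutation:} for any $2\le i\le l$, the operator $s_i s_1$ commutes with $m$ because $\{s_i,m\}=\{s_1,m\}=0$ gives $s_is_1 m = -s_i m s_1 = m s_i s_1$, and $s_i s_1$ commutes with every other element of $\mathcal{S}$ since the original group is abelian. \emph{Independence:} in the symplectic picture of Equation~\eqref{eq: pauli to binary}, the update sends $F(s_1)\mapsto F(m)$ and $F(s_i)\mapsto F(s_i)+F(s_1)$ for $2\le i\le l$; this preserves the rank of the generating set provided $F(m)$ is not in the $\mathbb{F}_2$-span of $\{F(s_j)\}$, which follows because any element of that span commutes with $s_1$ (as $\mathcal{S}$ is abelian), whereas $m$ does not. \emph{Stabilization:} $O(m)\cdot m$ stabilizes $\ket{\psi'}$ by construction; for each $i\ge 2$, $s_is_1$ commutes with the projector (by the commutation step) and $s_is_1\ket{\psi}=\epsilon_i\epsilon_1\ket{\psi}$ with $\epsilon_i,\epsilon_1\in\{\pm 1\}$ the original eigenvalues, so $s_is_1\ket{\psi'}=\epsilon_i\epsilon_1\ket{\psi'}$, matching the sign update in the statement. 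Elements of $\mathcal{S}\setminus V$ already commute with $m$ and so are unaffected.

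The main obstacle I anticipate is the bookkeeping in the symplectic independence argument: one must separate the statements ``$\pm m\notin\langle\mathcal{S}\rangle$'' and ``$F(m)$ is not in the $\mathbb{F}_2$-span of $\{F(s_j)\}$'', and note that in Case 2 the anti-commutation hypothesis automatically supplies the latter. The rest is sign-tracking through the projector, and the basis independence flagged in the statement follows because the whole construction only uses abstract Pauli multiplication and anti-commutation relations, with no reference to a particular tensor factor or Pauli basis.
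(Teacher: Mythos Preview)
The paper does not supply a proof of this lemma at all: it is stated in the ``Background and Notation'' section as a standard fact about the stabilizer formalism and then used throughout, so there is nothing in the paper to compare your argument against line by line.

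Your proposal is the standard textbook verification and is correct. One small tightening in your Case~2 independence step: the condition you actually need is that $F(m)$ is not in the $\mathbb{F}_2$-span of the \emph{remaining} generators $\{s_2 s_1,\ldots,s_l s_1\}\cup(\mathcal{S}\setminus V)$, not of the full original set $\{F(s_j)\}$. Your anti-commutation argument covers this just as well, since every element of that remaining set commutes with $s_1$ while $m$ does not, so the conclusion is unchanged; it is only the phrasing that could be sharpened.
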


\begin{lemma}(\textbf{Logical Update Rules})\label{lemma: logical update rule}

Let $L$ be a logical operator of a stabilizer group $\langle S\rangle $ and let $\ket{\psi}$ be an eigenstate of $L$.

Let $m$ be a Pauli measurement performed on $\ket{\psi}$ and denote the outcome by $O(m) \in \{\pm 1\}$.
\begin{enumerate}
    \item If $m = (-1)^a \cdot L$, then $O(m)\cdot (-1)^a$ gives the eigenvalue of $L$ for the state $\ket{\psi}$, and the logical operator remains unchanged. 
    \item If $m$ commutes with $L$, the logical operator remains unchanged. 
    \item If $m$ anti-commutes with $L$ and commutes with $\langle S\rangle $, then $L$ is updated to $O(m) \cdot m$. The new state is a $+1$ eigenstate of $O(m) \cdot m$ instead of $L$.
    \item If $m$ anti-commutes with $L$ and anti-commutes with some elements in $S$: In the stabilizer update rules, we replace an element $s_1$ with $m$ and update the rest of the elements in $S$ that anti-commute with $m$ using the $ 2^{\mathrm{nd}} $ rule in Lemma \ref{lemma: stab update rule}. For the logical operator, we update $L\to L\cdot s_1$, where $s_1$ is the element that is replaced with $m$.
\end{enumerate}
\end{lemma}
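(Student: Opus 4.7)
The plan is to treat the four cases separately and, in each, verify two things: that the updated operator $L'$ (or the unchanged $L$) commutes with every generator produced by the corresponding branch of Lemma~\ref{lemma: stab update rule} and remains outside the stabilizer group, and that $L'$ acts on the post-measurement state with the same eigenvalue $\lambda=\pm 1$ that $L$ had on the pre-measurement state. Together these show $L'$ is the correct logical representative for the updated code, and the whole proof reduces to checking commutation relations and one projector-eigenvalue identity.

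Cases 1 and 2 are nearly immediate. In case 1, $m=(-1)^a L$ makes the pre-measurement state already an eigenstate of $m$, so the measurement leaves it unchanged and returns $O(m)=(-1)^a\lambda$; solving gives $\lambda=O(m)(-1)^a$. In case 2, $[m,L]=0$, so the projector $\tfrac{1}{2}(I+O(m)\,m)$ commutes with $L$ and the $L$-eigenvalue is preserved; commutation of $L$ with each new generator (either $O(m)\cdot m$ alone, or the products $s_i\cdot s_1$ arising in the anti-commuting branch of the stabilizer rule) follows because $L$ already commutes with every factor. Case 3 is the conceptual one: $[m,\langle S\rangle]=0$ and $\{m,L\}=0$ imply that $m$ was itself a nontrivial logical of the old code, so its measurement collapses that logical degree of freedom. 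After the measurement, $O(m)\cdot m$ is pinned to $+1$ on the state, and $L$ can no longer represent a logical because it anti-commutes with this new element; setting $L'=O(m)\cdot m$ just records that the old logical direction has been initialized to this value, which is consistent with the stated property that the new state is a $+1$ eigenstate of $O(m)\cdot m$.

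Case 4 is where a short calculation is needed. With $L'=L\cdot s_1$, commutation with the updated stabilizer generators is routine: $L'$ commutes with $O(m)\cdot m$ because $L$ and $s_1$ each anti-commute with $m$; $L'$ commutes with any unchanged generator $s_i$ (those with $[s_i,m]=0$) because both $L$ and $s_1$ already did; and $L'$ commutes with each replaced generator $s_j\cdot s_1$ because $L$ commutes with $s_j$ and $s_1$ individually while $s_1$ commutes with every element of the old stabilizer group, including itself. Non-triviality is clear since $s_1\in\langle S\rangle$, so $L'\equiv L \pmod{\langle S\rangle}$, and $L\notin\langle S\rangle$. For the eigenvalue, the key identity is $L s_1\cdot m = m\cdot L s_1$, which follows from $L$ and $s_1$ each anti-commuting with $m$. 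This lets $L'$ be commuted through the projector $\tfrac{1}{2}(I+O(m)\,m)$, and then $s_1|\psi\rangle=|\psi\rangle$ together with $L|\psi\rangle=\lambda|\psi\rangle$ yields $L'|\psi'\rangle=\lambda|\psi'\rangle$.

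The main obstacle is really bookkeeping rather than depth: case 4 requires covering every flavor of updated generator produced by Lemma~\ref{lemma: stab update rule} (replaced, unchanged, and multiplied) without missing a sign, and case 3 needs a careful phrasing so the reader sees that ``$L$ is updated to $O(m)\cdot m$'' simultaneously records a new stabilizer and the fact that the logical direction has been initialized. I would also remark that the choice of $s_1$ in case 4 is only defined up to multiplication by elements of $\langle S\rangle$ that commute with $m$, which merely shifts $L'$ within its stabilizer coset, giving the basis-independence property that will be invoked later in Subsection~\ref{subsec: proof}.
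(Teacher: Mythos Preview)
The paper states this lemma as background in Section~\ref{sec:preliminary} without giving a proof, so there is no argument in the paper to compare against; these update rules are treated as standard facts about Pauli measurements on stabilizer states.

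Your proposal is essentially correct and constitutes a clean case-by-case verification. One small slip: in case~4 you write $s_1|\psi\rangle=|\psi\rangle$, but the paper's convention (Lemma~\ref{lemma: stab update rule}) allows $|\psi\rangle$ to be a $\pm 1$ eigenstate of each generator, so in general $s_1|\psi\rangle=O(s_1)|\psi\rangle$ and the eigenvalue of $L'$ on the post-measurement state is $\lambda\cdot O(s_1)$ rather than just $\lambda$. This is exactly the phase tracking that the paper later makes explicit in Theorem~\ref{theorem: logicals}, so it is worth stating carefully rather than absorbing it silently. Your remark on the non-uniqueness of $s_1$ and the resulting coset freedom is a nice addition and is indeed what underlies the basis-independence arguments in Subsection~\ref{subsec: proof}.
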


\subsection{Dynamical code and Subsystem code}
\begin{definition}
    A dynamical code is a code defined on a measurement sequence, that consists of rounds of measurements, with each round given by a set of commuting pauli measurements. After the $i^{\mathrm{th}}$ round of measurements, the code is defined by the stabilizer group that has been updated by the measurements according to the stabilizer update rules in Lemma \ref{lemma: stab update rule} defined above. The stabilizer group at this timestep is called the $i^{\mathrm{th}}$ instantaneous stabilizer group (ISG), $\langle \mathcal{S}_i\rangle$. 
    The codespace $\mathcal{C}_i$ is given by the ISG $\langle \mathcal{S}_i\rangle$ after the $i^{\mathrm{th}}$ round of measurements.

    \begin{equation}
        \mathcal{C}_i = \{ \ket{\psi} =\pm s \cdot \ket{\psi}, \forall s \in \mathcal{S}_i\}
    \end{equation}
    where the signs depend on the measurement outcomes for $s \in \mathcal{S}_i$.
    
\end{definition}

Between any two rounds of a dynamical code, we can define a subsystem code.  A subsystem code can be thought of as a stabilizer code with quantum information encoded into only a subset of the logical qubits. The logical qubits that are not used are called gauge qubits. Generally, we also do not correct for errors on the gauge qubits, so any logical operator should have a tensor product structure over $\mathcal{H}_{\mathrm{logical}}\otimes \mathcal{H}_{\mathrm{gauge}}$. However, it is also possible to have logical unitaries that do not respect this tensor product structure, then the logical operator will be dependent on the state of gauge qubits. One example is converting from a Steane code to a Reed Muller code for implementing transversal T gate. In this case, the non-Clifford logical operator depends on the state of the gauge qubits before performing the T gate.

\begin{definition}
  A subsystem code is defined by its gauge group $\mathcal{G}$. The stabilizers are given by $(Z(\mathcal{G})\cap \mathcal{G})/\mathbb{C}$. $\mathcal{G} = \langle S\rangle $ for stabilizer codes, with the centralizer of $\mathcal{G}$, $Z(\mathcal{G})$, defined as $Z(\mathcal{G}) = \{p \in \mathcal{P}_n: \forall g \in \mathcal{G}, gp =pg\}$.
\end{definition}

 \begin{definition}
     Given a set of independent stabilizer generators $S$, a destabilizer $\kappa \in K$ of a stabilizer generator $s \in S$ is a Pauli operator in $\mathcal{P}_n/\mathbb{C}$ that only anti-commutes with $s$, but commutes with $S\backslash s$ and the bare logical operators of the gauge group $\mathcal{G}=\langle S \cup K \rangle$.
 \end{definition}
 
When treating neighboring rounds as subsystem codes, we will need to discuss both the bare and dressed logical operators, whose definitions we list below: 

\begin{definition}
   Bare logical operators are logical operators that commute with all elements of the gauge group: $L_{\mathrm{bare}} = Z(\mathcal{G})$. They only act on the logical qubits but act trivially as identity on the gauge qubits.  
\end{definition}

\begin{definition}
   Dressed logical operators are logical operators that commute with the stabilizers: $L_{\mathrm{dressed}} = Z(S)$. They act on the logical qubits but may also act non-trivially on the gauge qubits. Thus, they may anti-commute with some of the elements in $\mathcal{G}$. Bare logical operators are a subset of dressed logical operators. When the code is a stabilizer code, the bare and dressed logical operators are the same. 
\end{definition}

We can define the distance for both stabilizer and subsystem code as the minimum weight of all non-trivial dressed logical operators.

\begin{definition}
    \begin{equation}
        d_{\mathrm{subsystem}}= \min \mathrm{wt}\{L_{\mathrm{dressed}}\backslash \mathcal{G}\}
    \end{equation}
\end{definition}

We can update the stabilizer group efficiently if we are performing only Clifford gates and Pauli measurements according to the Gottesman–Knill Theorem. This forms the foundation in the distance algorithm to update the stabilizer groups.

\section{Dynamical code}
\label{sec: dynamical code}

\subsection{Definition of Distance for a Dynamical Code}\label{subsec: distance}

\begin{figure}[ht]
    \centering
    \includegraphics[width=0.5\linewidth]{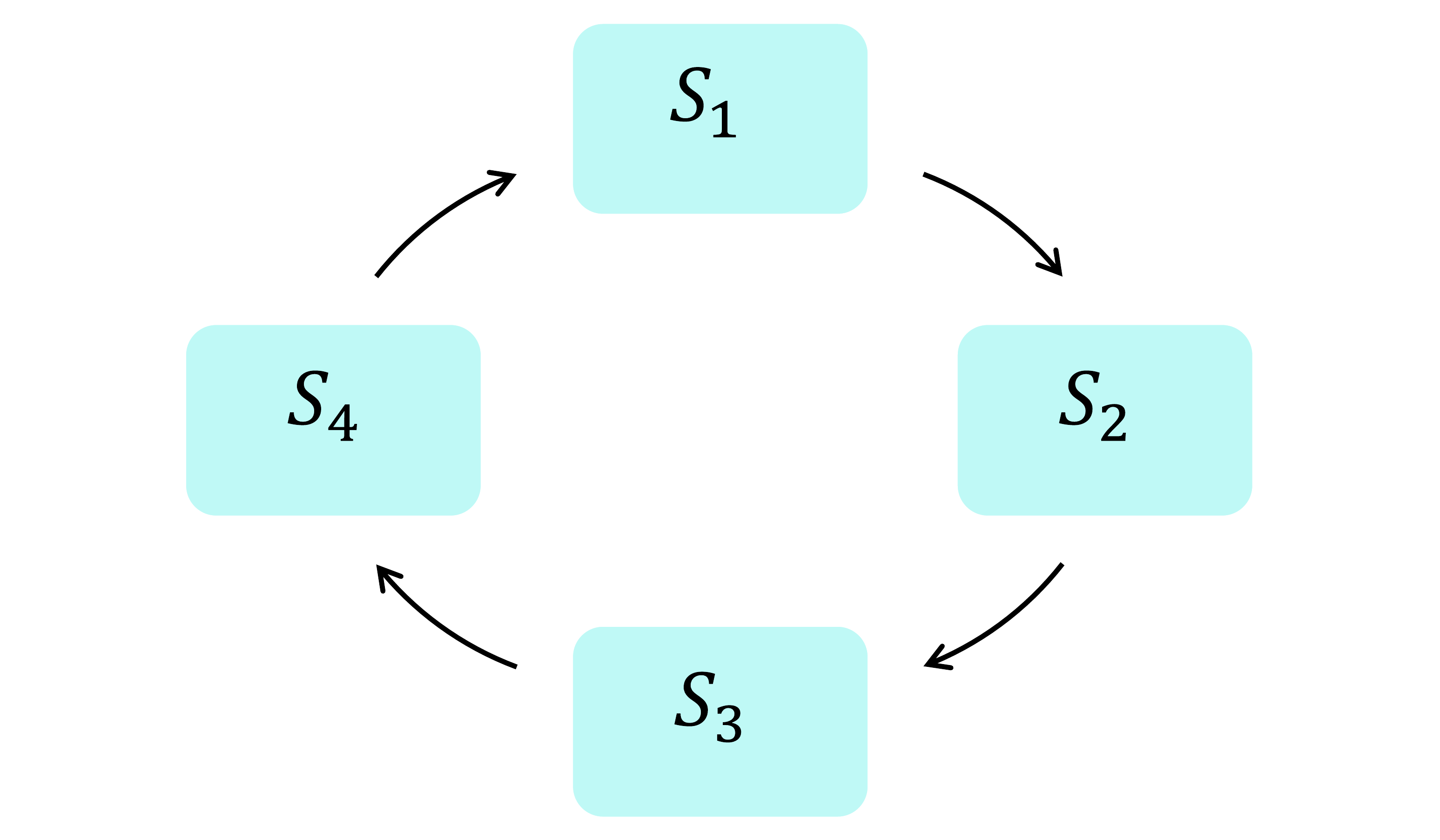} 
    
    \caption{The figure illustrates an example of a Floquet code which is a special case of dynamical codes consisting of periodic measurements (In this case, 4 rounds of measurements). Each round of measurements consists of commuting Paulis. The $i^{\mathrm{th}}$ round of measurements brings the code from the ISG defined by $S_{i-1}$ to the next ISG defined by $S_{i(\mathrm{mod} \: 4)}$.} 
    \label{fig: cyclical} 
\end{figure}

Unlike a static code with a fixed set of stabilizers or gauges, a dynamical code is defined by a sequence of measurement rounds with each round consisting of commuting Paulis. The stabilizer state evolves after each round of measurements with new representatives of logical operators. The distance of a dynamical code is upper bounded by the ISGs $S_i$, which define the codes $\mathcal{C}_i$ after $i^{\mathrm{th}}$ round of measurements (Refer to Fig \ref{fig: cyclical} for illustration). Essentially, for a dynamical code to exhibit a good distance as a whole, each $S_i$ must minimally have a good distance. Otherwise the encoded logical information will lose its protection from physical errors in the ISG with a small distance. 

We begin by outlining the problem under consideration. For a given dynamical code, what is the set of errors that an ISG can correct given the syndrome information acquired through subsequent rounds of measurements? Using this information, what is the distance of the ISG? Without loss of generality, we examine the ISG at round 0 of the code and assume that all syndromes for this ISG come from the measurement sequence. 

When defining the distance, we assume that the errors only occur at round 0 of the code and we also assume that if the errors are trivial, we know what the syndromes for the stabilizers in round 0 will be.  The assumption that all errors occur in round 0 means that this notion of distance does not fully capture fault tolerance of the dynamical code.  Instead, it is the analogue to the distance of a static quantum error-correcting code, which gives a necessary condition for fault tolerance, but not a sufficient one.  However, note that the distance algorithm has wider applicability: It works by classifying the generators of the ISG as unmasked, temporarily masked, and permanently masked, which makes no reference to the actual errors in the system.  Only the definition of distance itself uses the assumption that the errors are at round 0; the classification of stabilizer generators produced by the algorithm is still correct when errors can occur at any point.

One crucial reason for defining the distance of a dynamical code is that even though an ISG can be treated just like a stabilizer code with all the syndromes given by a complete set of stabilizer generators for the ISG, not all the syndromes can be obtained in a dynamical code. Thus, the dynamical code may not be able to realize the full error correcting capacity that is captured by the distance of the ISGs, so we cannot use the same notion of distance for stabilizer codes to define the distance of a dynamical code. Therefore, it is important to reexamine the concept of distance for that of a dynamical code.

Another possible notion of distance for dynamical codes that one can think of is by treating them as generalized subsystem codes in the sense that every neighboring pair of ISGs can be seen as a subsystem code. We can find a set of stabilizer generators for $S_i$ and $S_{i+1}$, such that there are pairs of anti-commuting stabilizers $s_\alpha \in S_i$, $s_{\beta}\in S_{i+1}$ satisfying $\{s_\alpha,s_{\beta}\}=0$ while all other pairs commute. Then these pairs give the gauge qubits. The outcomes from measuring these gauges are random, so a better formulation will be to treat the code as a subsystem code. Thus, a better upper bound for the distance of a dynamical code is given by the minimum of $d_{\mathrm{subsystem}}$ formed by each subsystem code from two neighboring rounds of ISG. 

In this work, we found an even tighter notion of distance for dynamical code using the fact that the code is defined to have error correction properties given by its sequence of measurements. This upper bound for distance is tighter and is given for a particular round by the syndrome information that can be obtained through measurements from subsequent rounds. Taking the minimum across all rounds gives a tighter upper bound for the dynamical code's distance. 

To this end, we introduce the notion of unmasked distance of the code, where we allow for the possibility that some of the syndromes cannot be obtained. The stabilizers corresponding to the unknown syndromes are called masked stabilizers.

We present an algorithm for finding the masked stabilizers and calculating the unmasked distance for round 0. Since distance finding is naturally a NP hard problem, it remains so for dynamical codes, unless a more tailored algorithm can be found for a more specific family of codes. The part of the algorithm that determines the masked and unmasked stabilizers is classically efficient.

To summarize, the various notions of distances for dynamical code are defined below. 

\begin{itemize}
    \item Distance $d_{\mathrm{ISG}}$ is given by the minimum of the distances of all ISGs. 
 
 \item Distance $d_{\mathrm{subsystem}}$ is given by the minimum of the distances of all subsystem codes formed by neighboring ISGs.
 
 \item Distance $d_{\mathrm{u}}$ is given by the minimum of the unmasked distances of all ISGs.
\end{itemize}

Note that the actual distance of the code is upper bounded by $d_{\mathrm{u}}$ with $d_{\mathrm{u}}\leq d_{\mathrm{subsystem}}\leq d_{\mathrm{ISG}}$. 

%some examples and distances of these codes 
\begin{example}
    In the Bacon Shor code \cite{Bacon_2006}, the bare logical operators are single column or row of Pauli X or Z, but the dressed logical can be up to products with any gauges. The minimum weight of both dressed and bare logical operators are the same, so $d_{\mathrm{subsystem}} = d_{\mathrm{ISG}}$. 

\end{example}
\begin{example}
Another example is the subsystem surface codes with three-qubit check operators in \cite{bravyi2013subsystem}. In this case, $ d_{\mathrm{subsystem}}\leq d_{\mathrm{ISG}}$. If the gauges are fixed in the x basis, then the distance for a $[[3L^2,2,L]]$ code is given by $L$ for $X$ logical and $2L$ for $Z$ logical, but for $d_{\mathrm{subsystem}}$, the minimum weight of dressed logical is $L$ for both $X$ and $Z$ logical operators. 
    
\end{example}
\begin{example}
    In the Hastings and Haah’s honeycomb code \cite{Hastings2021dynamically}, the weight 2 checks are the masked stabilizers for each ISG. Both inner and outer logical operators have the same weight as dressed logicals in a subsystem code for neighboring rounds of an ISG or as bare logicals in an ISG. Thus, the code satisfies $d_{\mathrm{subsystem}}\leq d_{\mathrm{ISG}}$.
\end{example}

\subsection{Masking and Unmasked Distance of a Dynamical Code}

Given a sequence of measurements for a stabilizer group $\langle S\rangle $, we can classify the generators in $S$ as one of the following three types by how accessible their syndrome information is using this sequence. 

\begin{definition}
    An unmasked stabilizer is a stabilizer whose outcome can be obtained through measurements in the sequence. 
\end{definition}

\begin{definition}
     A temporarily masked stabilizer is a stabilizer whose syndrome cannot be obtained given the sequence of measurements. However, the syndrome information has not been irreversibly lost and can potentially be extracted by creating new measurements. 
 \end{definition}

\begin{definition}
    A permanently masked stabilizer is a stabilizer whose outcome can never be obtained through the sequence of measurements or even by adding new rounds of measurements. 
\end{definition}

\begin{figure}[t]
    \centering
    \includegraphics[width=0.60 \linewidth]{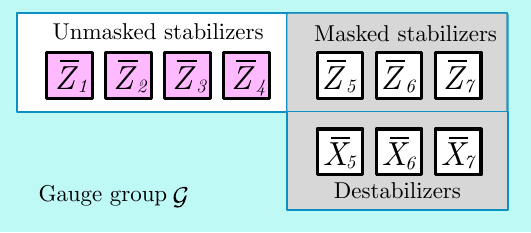} 
     
    \caption{The figure illustrates an example of a code with 7 stabilizers, out of which 4 are unmasked stabilizers (pink online) and 3 are masked stabilizers. The destabilizers are chosen to anti-commute with their respective masked stabilizers. The entire set (blue online) forms the generators of a gauge group $\mathcal{G}$. The distance of this subsystem code is given by $d_{\mathrm{subsystem}} = \min\:\mathrm{wt\:} \{\mathcal{N}(U)\backslash \mathcal{G}\}$.}
    \label{fig:gauge group}
\end{figure}

In this paper, we let $U$ denote the set of unmasked stabilizers, $T$ denote the set of temporarily masked stabilizers, $P$ denote the set of permanently masked stabilizers and $S$ the set of all stabilizer generators for a particular ISG in the measurement sequence. 

\begin{remark}
	Note that $u\cdot p \in P$ for any $u \in U, p \in P$ and $u \cdot t \in T$ for any $u \in U, t \in T$. $p \cdot t \in P$ for any $p \in P$ and $t \in T$.
\end{remark}    

\begin{remark}
	One can extend the definition of $T$ to that of a group: Define $\tilde{T} = \langle T\rangle U$. Then, $\tilde{T}$ consists of the group of unmasked and temporarily masked stabilizers. We have $U \subseteq \tilde{T}$ and $P= \langle S\rangle \backslash \tilde{T}$.
\end{remark}
    
We now show that the distance of a masked stabilizer code is given by that of a subsystem code with gauge group $\mathcal{G} $ that is in part determined by the measurement sequence. Let $W$ be the set of errors that are correctable. Errors $E, F\in W$ have to satisfy either $E^{\dagger}F \in S$ or that they give distinct unmasked syndromes, $E^{\dagger}F \notin \mathcal{N}(U)$, in order for $W$ to be correctable, where $U$ is the set of unmasked stabilizer generators defined earlier and $\mathcal{N}(U)$ is the normalizer of $\langle U\rangle $. Thus, any pair of errors $E$ and $F$ in $W$ satisfies the following condition:
\begin{equation}\label{incorrect_distance}
    E^{\dagger}F \in (\mathcal{P}_{\mathrm{n}} -\mathcal{N}(U)) \cup \langle S \rangle /\mathbb{C}
\end{equation}

The distance given Eq. \eqref{incorrect_distance} is then defined as according to \cite{gottesman2022opportunities}:

\begin{equation}
    d_{\mathrm{s}} = \min\:\mathrm{wt\:} \{\mathcal{N}(U)\backslash \langle S \rangle \}
\end{equation}

However, there are cases where even though two errors differ on the syndrome for masked generators, they do not affect our measurements of logical states. Let's assume $E^{\dagger} F \in \mathcal{N}(U) \backslash \mathcal{N}(S)$, so $E$ and $F$ have the same unmasked syndrome but different masked syndromes. If $L$ and $L'$ are two representatives for the same logical qubit, then measuring either is normally a valid way of measuring the logical value. Now, if $[E^{\dagger} F, L] = 0$, then it does not matter whether we have an error $E$ or an error $F$; in either case, we will get the same eigenvalue for $L$ when we measure it. On the other hand, if $\{E^{\dagger} F, L' \} = 0$, then the error $E$ followed by measuring $L'$ will give us a different result than the error $F$ followed by measuring $L'$. In other words, we can pick a coset of logical representatives for which $[E^{\dagger} F, L] = 0$ and this will allow us to correctly deduce the logical outcomes despite a larger set of errors that now consists of indistinguishable errors like $E$ and $F$.

If there are $l$ masked stabilizer generators, then we have $2^l$ different ways of choosing cosets of logical representatives for a given logical operator. Each choice is equivalent to picking a unique set of destabilizers for the masked stabilizers. The set of destabilizers, the masked stabilizers and the unmasked stabilizers together form the generators of the gauge group $\mathcal{G}$ (See Fig \ref{fig:gauge group}). The logical operators $L$ are chosen to commute with all elements of $\mathcal{G}$. 

\begin{definition}
    The set of allowable errors $V$ for a stabilizer group with masking satisfies the following condition:     \begin{equation}\label{correctable}
    E^{\dagger}F \in [(\mathcal{P}_{\mathrm{n}}-\mathcal{N}(U))\cup \langle S\rangle ]\cdot \mathcal{G}
    \end{equation}
    with $E,F \in V$, and the gauge group $\mathcal{G}$ defined by the unmasked stabilizers, the masked stabilizers and their destabilizers.
    
\end{definition}
 This set includes errors that can give non-trivial syndromes on the masked stabilizers too, but they are correctable as they do not give a logical error for our choice of logical representatives.

\begin{definition}
The distance is defined as the minimum weight of $E^{\dagger}F$ that does not satisfy Eq. \eqref{correctable}, which is the complement of the set $[(P_{\mathrm{n}}-\mathcal{N}(U))\cup \langle S\rangle ]\cdot \mathcal{G}$. This is exactly given by 
    \begin{equation}
    d_{\mathrm{u}_2} = \min\:\mathrm{wt\:} \{\mathcal{N}(U)\backslash \mathcal{G}\}
\end{equation}

This is also the distance of the minimum weight of the dressed logical operators, by treating the masked stabilizers and their destabilizers as gauge qubits. $d_{\mathrm{u}_2}$ is the same as the distance for subsystem code defined on $\mathcal{G}$. Interpreted this way, $L$ is by definition the bare logical operators.
\end{definition}

With the fixed choices of destabilizers, we have completely determined the set of bare logical operators, up to $U$. We see that there are 3 equivalent notions here: Choosing a gauge group $\mathcal{G}$ is the same as fixing the set of destabilizers which is also equivalent to choosing the bare logical representatives. We can see that doing so allows us to correct for more errors, since it also allows for errors that are destabilizers of the masked stabilizers, because these errors satisfy the condition that $[E^{\dagger}F, L]=0$, for all bare logical representatives $L$s of the new gauge group.

\begin{lemma}
    If the stabilizer syndrome is that of a permanently masked stabilizer, then the choice of destabilizer has been fixed by the measurement sequence.
\end{lemma}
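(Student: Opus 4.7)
The plan is to exhibit the destabilizer explicitly as an operator produced by the measurement sequence. Let $p$ be a permanently masked generator of the round-$0$ ISG $\langle S\rangle$. I would first observe that the sequence must contain at least one measurement $m$ that anti-commutes with $p$: if every measurement in the sequence commuted with $p$, then by the stabilizer update rules (Lemma \ref{lemma: stab update rule}, cases 1 and 3) the operator $p$ would remain a generator of every subsequent ISG, and one could simply insert a direct measurement of $p$ to recover $O(p)$. This would make $p$ at worst temporarily masked, contradicting permanent masking.

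Next I would take the first such $m$ in the sequence and apply the stabilizer and logical update rules (Lemma \ref{lemma: stab update rule}, case 2 and Lemma \ref{lemma: logical update rule}, case 4) at that step. These rules replace $p$ by $O(m)\cdot m$ in the generating set, and multiply every other generator of the ISG as well as every logical representative that anti-commutes with $m$ by $p$, so that after the update $m$ anti-commutes \emph{only} with $p$ among the generators of $\langle S\rangle$ and commutes with the updated bare logical representatives. These are precisely the defining properties of a destabilizer $\kappa$ of $p$ in the gauge group $\mathcal{G}=\langle S\cup K\rangle$. Thus the measurement sequence supplies a concrete destabilizer, namely $m$ (or the equivalent updated operator), rather than leaving it as a free $2$-fold choice.

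Finally, I would establish that this choice is fixed up to multiplication by stabilizers. If $\kappa'$ were another Pauli serving as a destabilizer of $p$ consistent with the same measurement sequence, then $\kappa'\cdot m$ would commute with every element of $\langle S\rangle$ and with every bare logical representative, so it would lie in $\langle S\rangle$ (it cannot be a nontrivial logical without altering the logical coset that the measurement sequence has committed to via its outcome $O(m)$). Hence $\kappa'$ and $m$ represent the same coset in $K/\langle S\rangle$, which is the sense in which the destabilizer is fixed.

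The main obstacle I anticipate is the situation where several measurements in the sequence anti-commute with $p$; one must check that choosing any one of them (rather than the first) yields the same destabilizer coset. I would handle this by noting that, after the first update, each later anti-commuting measurement differs from $m$ by an operator that has already been absorbed into the updated stabilizer group or into the destabilizer of another masked generator, so no inequivalent destabilizer for $p$ can be extracted from the remaining sequence.
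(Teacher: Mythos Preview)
Your proposal has a genuine gap. You take the \emph{first} measurement $m$ in the sequence that anti-commutes with $p$ and declare it the destabilizer, but this $m$ need not be the measurement that permanently masks $p$. In the framework of the distance algorithm, if $m$ also anti-commutes with some element already in $V$ (Case~4 of Step~2), then $p$ is merely updated to $p\cdot v$ in $C$ rather than removed; permanent masking happens only at a later Case~2 step, when some $m'$ anti-commutes with the updated version of $p$ but with nothing in $V$. The destabilizer fixed by the sequence is tied to $m'$ (suitably pulled back to round~$0$), and $m$ and $m'$ need not lie in the same coset modulo $\langle S_0\rangle$ and the other destabilizers: they can differ by a nontrivial bare logical. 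A concrete instance is $S_0=\{p=Z_1,\,q=Z_2\}$ with measurements $m_1=X_2$, $m_2=X_1Z_2$, $m_3=X_1X_3$. Here $m_2$ is your ``first $m$'' for $p$, yet $m_2$ triggers a Case~4 update (so $p$ survives as $Z_1X_2$); it is $m_3$ that removes it. One computes $m_2\cdot m_3\cdot q=X_3$, a nontrivial logical of $S_0$, so $m_2$ and $m_3$ are inequivalent candidate destabilizers.

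This same example breaks your uniqueness step and your handling of the obstacle. You argue that $\kappa'\cdot m$ ``cannot be a nontrivial logical without altering the logical coset that the measurement sequence has committed to via $O(m)$,'' but the sequence's commitment to a specific logical coset is precisely the content of the lemma, so the reasoning is circular. Likewise, your claim that later anti-commuting measurements differ from $m$ only by operators already absorbed into the gauge group fails in the example above, since $m_3$ differs from $m_2$ by $q\cdot X_3$ with $X_3$ outside $\langle S_0,\kappa_q\rangle$.

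The paper takes a different route: it observes that between any two neighbouring ISGs the anti-commuting pair $(s_1,s_2)$ is a gauge qubit with \emph{no} freedom in its definition, and then uses the basis-independence of the logical update rules to transport the resulting bare logicals backwards round by round to round~$0$ (this is what Algorithm~\ref{alg: pms and destabilizer} implements). Uniqueness of the destabilizer then follows from uniqueness of that backward propagation, rather than from a forward ``first hit'' construction.
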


\begin{proof}
    
Suppose we have 2 rounds of ISGs given by $\{S_1, S_2\}$. In the first round, the logical operators are $L_x$ and $L_z$ and one of the stabilizers is labelled $s_1$. In the second round, $s_2$ is measured and $s_2$ anti-commutes with only $s_1$. This means that $s_1$ is a permanently masked stabilizer, and the code forms the usual subsystem code, with $\mathcal{G}=\langle S_1 \rangle\cup \{s_2\}$, that we are familiar with. The bare logical operators are those that commute with both $s_2$ and $s_1$ and belong to $N(\mathcal{G})\backslash \mathcal{G}$. The gauge qubit formed here is the one represented by $s_1$ and $s_2$ and the Hilbert space can be decomposed as $\mathcal{H}_{\mathrm{gauge}}\otimes \mathcal{H}_{\mathrm{logical}}$.

For a code with more than two rounds, the above scenario is true in any neighboring rounds of ISGs. The anti-commuting pairs of stabilizers from neighboring rounds of ISGs define the gauge qubits. We can treat them as logical qubits that we did not use to encode logical information and define bare logical operators that commute with the logical operators for the gauge qubits. Since the stabilizer update rules uniquely determine how the logical operators evolve with measurements, we can work backwards to retrieve the bare logical operators for each round, till we reach the ISG in round 0 of the dynamical code. An algorithm that does this is algorithm \ref{alg: pms and destabilizer} given in Subsection \ref{subsec: algorithm}.

\end{proof}

On the other hand, if it is a temporarily masked stabilizer, we can optimize the unmasked distance over all possible choices of the destabilizers for this stabilizer. 

This protocol gives us the most optimal choice of destabilizers, which together with $S$, defines a gauge group $\mathcal{G}$. From here, we can obtain the unmasked distance of a dynamical code. We summarize our discussion below as a key definition for the unmasked distance of an ISG given a particular sequence of measurements. 

\begin{definition} \label{def:unmasked distance}
For a masked stabilizer code with $l$ masked stabilizers, its distance is given by:
\begin{equation}
    d_{\mathrm{u}} \coloneqq \min\:\mathrm{wt}\{ \mathcal{N}(U)\backslash \mathcal{G}\}
\end{equation}
  
Here $\mathcal{G}$ depends partly on the freedom in the choice of destabilizers for the temporarily masked stabilizers and partly on the measurement sequence which fixes the destabilizers for the permanently masked stabilizers.

\end{definition}

We conclude this subsection with an example to illustrate $d_{\mathrm{u}}$, by considering a stabilizer code with masking.

\begin{example}
    As a simple illustration, the Shor code is a [[9,1,3]] code with logical operators given below.
    \begin{eqnarray*}
        z_{\mathrm{L}} &=& x_1x_2x_3\\
        x_{\mathrm{L}} &=& z_1z_4z_7 
    \end{eqnarray*}
    And the following 8 stabilizers:
    \begin{eqnarray*}
      &&z_1 z_2, z_2 z_3, z_4 z_5,z_5 z_6, z_7 z_8, z_8 z_9\\ &&x_1 x_2 x_3 x_4 x_5 x_6, x_4 x_5 x_6 x_7 x_8 x_9
    \end{eqnarray*}
    If we mask the stabilizer generator $z_1z_2$, then the error $x_1$ and the trivial error are now indistinguishable.  We can define the unmasked distance as a subsystem code with an extra gauge operator $x_1$. This choice of gauge fixes the bare logical $x$ and $z$ of the code to be $x_1 x_2 x_3$ and $z_2 z_4 z_7$. Thus, the unmasked distance is $2$, as the smallest non-trivial dressed logical is $x_2 x_3$. Suppose we choose a different gauge $x_2 x_3$; then the bare logical operators change to $x_1 x_2 x_3$ and $z_1 z_4 z_7$. The unmasked distance becomes $1$. 
\end{example}

\subsection{Logical Operators of a Dynamical Code} \label{subsec: Logical operators}

\begin{figure}[ht]
    \centering
    \includegraphics[width= \linewidth]{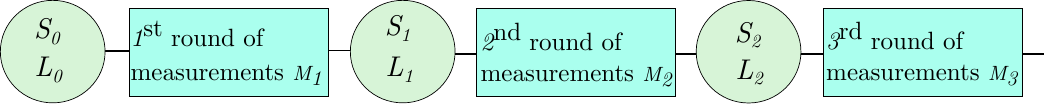} 
     
    \caption{The figure illustrates the first three rounds of a sequence of measurements. $S_i$ is the set of stabilizer generators of the ISG in the $i^{\mathrm{th}}$ round and $L_i$ is the updated logical operator from a given logical representative $L_0$ in the $0^{\mathrm{th}}$ round.}
    \label{fig: logical update}
    
\end{figure}
To recover the encoded logical information in a dynamical code, we need to take into account both the errors and the measurement outcomes.

In the absence of Pauli errors, the logical outcome can be obtained by the logical update rules. Here, we generalize the computation of logical outcome to include Pauli errors in all rounds of a dynamical code. It can be shown that the logical outcome depends on the temporal order of measurements and errors. This suggests that in general protecting the encoded state can be more challenging for a dynamical code than a stabilizer code. The scenario that we are interested in where errors only occur in the first ISG of the dynamical code for the setup described in Section \ref{sec: the distance algorithm} is included as a special case.

\begin{theorem}\label{theorem: logicals}
Suppose we initialize a dynamical code that consists of $l$ rounds of measurements in an eigenstate of $L_0$, which is fixed as our choice of logical representative for the $0^{\mathrm{th}}$ ISG, with the notations illustrated in Figure \ref{fig: logical update}. Each round of measurements consists of commuting Pauli operators which form the set $M_i$ for the $i^{\mathrm{th}}$ round. Let $E_k$ be the net error from the time after the $k^{\mathrm{th}}$ round of measurements till the final round of the code. Then, $E_{0}$ is the net error since initialization till the final round of the code. The Pauli operators $L_{0}$ and $L_{l}$ are related by $L_{l} = L_{0}\cdot s_0\cdot \prod_{i=1}^{l-1} m_i$ up to $\pm 1$ sign difference. $m_i$ is an element of $\langle M_i \rangle$ and $O(m_i)$ is the outcome of $m_i$ which can be obtained through the outcomes of elements in $M_i$. $s_0$ is an element of $\langle S_0 \rangle$ which is the stabilizer group for round $0$ of the dynamical code and $O(s_0)$ is the outcome of $s_0$. $A \odot B$ is the binary symplectic product of A with B as defined in Subsection \ref{subsec: outcome and syndrome}. 

The outcome of the logical $L_l$ is given by the following equation. 

\begin{equation}\label{eq: logicals of dynamical code}
O(L_{l}) = O(L_{0}) \cdot (-1)^{L_{0} \odot E_{0}} \cdot \left(\prod_{i=1}^{l-1} O(m_i) \right)\cdot O(s_0) \cdot (-1)^{E_0 \odot s_0}\cdot (-1)^{\sum_{k = 1}^{l-1} E_k \odot m_k}
\end{equation}

\end{theorem}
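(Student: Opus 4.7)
The plan is to split the proof into a structural identity for $L_l$ and an outcome calculation that separates the ideal (error-free) contribution from a Heisenberg-picture sign correction. First I would establish that there exist $s_0 \in \langle S_0\rangle$ and elements $m_i \in \langle M_i\rangle$ for $1\le i\le l-1$ such that
\begin{equation*}
L_l \;=\; \epsilon\,L_0\,s_0\,\prod_{i=1}^{l-1} m_i
\end{equation*}
as Pauli operators, where $\epsilon\in\{\pm 1\}$. This follows by induction on $l$ using the Logical Update Rules (Lemma \ref{lemma: logical update rule}): at each round the logical representative is either left unchanged or multiplied by an element of $\langle M_i\rangle$ (from rules 3 and 4), and the initial factor $s_0$ captures the freedom to choose a different logical representative modulo $\langle S_0\rangle$.

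In the error-free case, the initial state is a simultaneous eigenstate of $L_0$ and $\langle S_0\rangle$ with known eigenvalues $O(L_0)$ and $O(s_0)$, the round-$i$ measurement yields $O(m_i)^{\mathrm{ideal}}$, and the final measurement of $L_l$ yields $O(L_l)^{\mathrm{ideal}}$. Multiplicativity of Pauli eigenvalues applied to the identity above then gives $O(L_l)^{\mathrm{ideal}} = \epsilon\cdot O(L_0)\cdot O(s_0)\cdot \prod_i O(m_i)^{\mathrm{ideal}}$. To include errors, I would invoke the standard principle that a Pauli error $F$ accumulated just before measuring a Pauli $P$ multiplies the outcome by $(-1)^{F\odot P}$ relative to the ideal case. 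Denoting by $f_i$ the net error from initialization to just before round $i$, the definition of $E_k$ forces $f_i = E_0 E_i$ as Paulis (up to phase), hence $O(m_i) = O(m_i)^{\mathrm{ideal}}\cdot (-1)^{(E_0 E_i)\odot m_i}$ and $O(L_l) = O(L_l)^{\mathrm{ideal}}\cdot (-1)^{E_0 \odot L_l}$ at the final time.

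Substituting both corrections into the ideal formula and using bilinearity of $\odot$ over $\mathbb{F}_2$-addition, the exponent becomes $E_0\odot L_0 + E_0\odot s_0 + \sum_i \bigl[E_0\odot m_i + (E_0 E_i)\odot m_i\bigr]$. The key identity
\begin{equation*}
E_0 \odot m_i + (E_0 E_i)\odot m_i \;\equiv\; E_i \odot m_i \pmod{2},
\end{equation*}
which is immediate from $\mathbb{F}_2$-linearity of $\odot$, collapses the per-round sign contribution into $\sum_k E_k\odot m_k$, reproducing the stated formula. The sign $\epsilon$ is absorbed because the theorem fixes $L_l$ only up to $\pm 1$, so we may choose that sign to make $\epsilon = 1$.

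The hard part will be the first step: proving the structural identity $L_l = \epsilon\,L_0\,s_0\prod m_i$ with a concrete choice of the factors $m_i$ produced by the update rules. Rule 4 of Lemma \ref{lemma: logical update rule} introduces stabilizer factors at each round that are stabilizers of the \emph{current} ISG, not of $\langle S_0\rangle$, so one must inductively show that these can be pushed back and absorbed into a single element $s_0 \in \langle S_0\rangle$ (using the invertibility of the stabilizer-update map). Once this bookkeeping is done, the remainder of the argument is a short symplectic-algebra calculation, and the sign corrections from errors follow directly from the Heisenberg-picture commutation rules underlying the stabilizer formalism.
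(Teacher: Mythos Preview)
Your approach is correct and reaches the same formula, but it is organized differently from the paper's proof. The paper proceeds forward in time with errors already present: it writes the round-by-round identity
\[
O(L_l) \;=\; O(L_0)\,(-1)^{L_0\odot E|_0}\,\prod_{i=0}^{l-1} O(s_i)\big|_{i+1}\,\prod_{i=1}^{l}(-1)^{L_i\odot E|_i},
\]
where $s_i\in\langle S_i\rangle$ is the factor inserted by the logical update and $O(s_i)|_{i+1}$ already carries the error dependence, and only afterwards substitutes $s_i = s_0\prod_{j\le i} m_j$ and rearranges. You instead separate the problem into (i) an error-free identity and (ii) a Heisenberg-picture sign correction for each measured operator, then combine them with the $\mathbb{F}_2$-linearity identity $E_0\odot m_i + (E_0E_i)\odot m_i \equiv E_i\odot m_i$. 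Your split makes the origin of each of the three error exponents in Eq.~\eqref{eq: logicals of dynamical code} transparent, whereas the paper's single ``rearrange'' step hides this.

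One point deserves more care. Your line ``multiplicativity of Pauli eigenvalues applied to the identity above then gives $O(L_l)^{\mathrm{ideal}} = \epsilon\,O(L_0)\,O(s_0)\prod_i O(m_i)^{\mathrm{ideal}}$'' is not immediate: the final ideal state $|\psi_l\rangle$ is in general \emph{not} a simultaneous eigenstate of the individual factors $L_0$, $s_0$, $m_i$, since later measurement rounds can anti-commute with earlier $m_i$. What is true is the round-by-round statement $O(L_{i+1}) = O(L_i)\,O(s_i)$ on $|\psi_i\rangle$ (because $L_{i+1}$ commutes with $M_{i+1}$ and $|\psi_i\rangle$ is an eigenstate of both $L_i$ and $s_i$), followed by the decomposition of each $O(s_i)$ into an $\langle S_0\rangle$-part and measurement outcomes from earlier rounds. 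This is exactly the inductive ``pushing back'' you flag as the hard part, so your instinct about where the work lies is right; just note that it is needed for the eigenvalue identity, not only for the operator identity $L_l=\epsilon L_0 s_0\prod m_i$.
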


The case where errors only occurred in round $0$ follows as a special case. 

\begin{corollary}
The outcome of the logical if errors only occurred in round 0 of the code is given by: 
    \begin{equation}\label{eq: logicals of dynamical code simplified}
O(L_{l}) = O(L_{0}) \cdot (-1)^{L_{0} \odot E} \cdot \left(\prod_{i=1}^{l-1} O(m_i) \right) \cdot O(s_0) \cdot (-1)^{ E \odot s_0},
    \end{equation}
where $E$ is the Pauli error that occurred in round 0 of the dynamical code. 

\end{corollary}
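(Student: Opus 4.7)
The strategy is to derive the corollary as a direct specialization of Theorem \ref{theorem: logicals} by imposing the hypothesis that all errors occur in round $0$. Under this hypothesis, only two simplifications are required to reduce Eq. \eqref{eq: logicals of dynamical code} to Eq. \eqref{eq: logicals of dynamical code simplified}, and no new physical content beyond the theorem is needed.

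First, I would identify $E_{0}$ with $E$. By the theorem, $E_{0}$ denotes the net error accumulated from initialization through the end of the protocol; since the single error $E$ is, by hypothesis, the only one that occurs and it happens in round $0$, the cumulative error $E_{0}$ equals $E$. This rewrites the factors $(-1)^{L_{0} \odot E_{0}}$ and $(-1)^{E_{0} \odot s_{0}}$ in Eq. \eqref{eq: logicals of dynamical code} as $(-1)^{L_{0} \odot E}$ and $(-1)^{E \odot s_{0}}$, matching two of the factors in the target formula.

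Second, I would argue that the trailing factor $(-1)^{\sum_{k=1}^{l-1} E_{k} \odot m_{k}}$ in Eq. \eqref{eq: logicals of dynamical code} reduces to $+1$. For each $k \ge 1$ the operator $E_{k}$ is by definition the net error from after round $k$ to the final round; but under the hypothesis no error occurs after round $0$, and in particular none occurs after any round $k \ge 1$, so $E_{k} = I$ for every $1 \le k \le l-1$. Consequently every symplectic product $E_{k} \odot m_{k}$ vanishes, the exponent is $0 \pmod 2$, and the factor drops out. Substituting the two simplifications into Theorem \ref{theorem: logicals} immediately yields Eq. \eqref{eq: logicals of dynamical code simplified}.

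I do not expect any real obstacle in this proof, since the corollary is essentially bookkeeping. The only subtle point is to make the placement convention of $E$ explicit: $E$ is regarded as occurring after the round-$0$ preparation but strictly before any round-$k$ measurement with $k \ge 1$, which is exactly what forces $E_{0} = E$ while $E_{k} = I$ for $k \ge 1$. Once this convention is stated, the corollary follows from Theorem \ref{theorem: logicals} by direct substitution.
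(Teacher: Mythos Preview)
Your proposal is correct and matches the paper's approach: the paper simply states that the corollary follows as a special case of Theorem~\ref{theorem: logicals} without spelling out the substitution, and your argument makes explicit exactly the two reductions (identifying $E_{0}=E$ and observing $E_{k}=I$ for $k\ge 1$) that collapse Eq.~\eqref{eq: logicals of dynamical code} to Eq.~\eqref{eq: logicals of dynamical code simplified}.
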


\begin{proof}[Proof of Theorem \ref{theorem: logicals}]
    
    The outcome of a logical representative changes in two ways. First, because of measurements, the Pauli operator for the logical representative is updated by some stabilizers in the current ISG. Secondly, after the measurements, some errors can happen on the physical qubits and this can change the logical outcome. We explicitly explain how both steps can change the logical outcome. 

    During the $i^{\mathrm{th}}$ round of measurements, the logical representative is updated according to the logical update rules in Lemma \ref{lemma: logical update rule}. In the case where we make a measurement that anti-commutes with both the logical representative and a stabilizer generator $s$ in the current ISG, the logical is updated from $L \to L\cdot s$ then the outcome of the new logical is now given by $O(L) \cdot O(s)$. $s$ is a stabilizer in the current round, and it can be expressed as a product of stabilizers from the $0^{\mathrm{th}}$ ISG and measurements prior to the $i^{\mathrm{th}}$ round. The outcome $O(s)$ is the outcome of the stabilizer $s$ if it is measured at the current time step, which can be different from the initial measurement outcomes because of errors that could have occurred in between. 
    
    After the round of measurements, some errors could have occurred on the code. If the errors anti-commute with the updated logical representative, then the outcome for the logical operator will acquire a sign flip from the errors. 

    By applying these updates for all $l$ rounds of the dynamical code, the outcome for the logical operator in the last round can be summarized in a single equation as follows: 
    
    \begin{equation}\label{eq: logicals rearranged}
    O(L_{l})= O(L_{0})\cdot (-1)^{L_{0} \odot E\mid_0} \prod_{i = 0}^{l-1}O(s_i)\mid_{i+1} \prod_{i = 1 }^{l} (-1)^{L_{i} \odot E\mid_{i}},
    \end{equation}
    
    where $s_i \in \langle S_i \rangle$ is a stabilizer element from $i ^{\mathrm{th}}$ ISG that is multiplied to the logical operator through $(i+1)^{\mathrm{th}}$ round of measurements, and $O(s_i)\mid_{i+1} $ is the outcome of $s_i$ immediately prior to the $(i+1)^{\mathrm{th}}$ round of measurements. $L_{i}$ is the logical operator from the $i^{\mathrm{th}}$ ISG and $E\mid_{i}$ is the error that occurred in the  $i^{\mathrm{th}}$ ISG.

    From here, we note that $s_i \in \langle S_i\rangle$ and $s_i$ is a stabilizer element that has been updated by the stabilizer update rules, that is $s_i$ is the product of some elements $s_0 \in \langle S_0\rangle$ and measurements from the previous rounds. Thus, we can reexpress each $s_i$ as $s_i = s_0 \cdot \prod_{j = 1}^{i} m_j$ for some $s_0 \in \langle S_0\rangle$ and some $m_j \in \langle M_j \rangle$. Note that $m_j$ need not be an actual measurement, but it is in the group generated by the measurements from the $j^{\mathrm{th}}$ round of measurements. After replacing the stabilizers in Eq. \eqref{eq: logicals rearranged} with this, we can rearrange the elements to obtain Eq.\eqref{eq: logicals of dynamical code}.

\end{proof}
    We give an example to illustrate that we only need to keep track of the commutation relation between a measurement made at time $t$ and the errors that happened after time $t$. 
\begin{example}
   	We can construct an example where an error $e$ commutes with a logical operator $L_1$ in an earlier round but only anti-commutes with the updated logical operator, $L_1 \cdot s_2$, in a later round. Because the measurement $s_2$ enters the stabilizer group at a later round than when the error occurred, in this case, the logical outcome is unaffected by the error that has occurred on the code. This can be counter-intuitive since the error $ e $ anti-commutes with the updated logical.

    Suppose in round 1, we have a logical operator $L_1$, stabilizer $s_1$ in $S$ and error $e$ that occurred before round 1. Let $e\ket{\psi}$ be the stabilizer state in round 1 after $e$ occurs. In round 2, the logical operator remains as $L_1$, $s_2$ is a new stabilizer that anti-commutes with $e$ but commutes with $s_1$ and in round 3, we measure $s_3$ that anti-commutes with both $L_1$ and $s_2$ so that the new logical in rounds 3 is now updated to $L_1 \cdot s_2$ and we have $s_3$ in the stabilizer group. This is illustrated in the table below:
    
\begin{center}
    \begin{tabular}{ |p{2cm}|p{2cm}|p{2cm}|p{2cm}|  }
 \hline
 ISG & Stabilizers & Logical & Events\\
 \hline 
    $1$ & $s_1,\cdots$ & $L_1$ & \cellcolor[HTML]{d66793} Error $e$\\
 \hline
 \multicolumn{4}{|l|}{Measurement: Measure $s_2$, $\{e,s_2\}=0$} \\
 \hline
 $2$ & $s_1,s_2,\cdots$ & $L_1$ & \\
 \hline
    \multicolumn{4}{|l|}{Measurement: Measure $s_3$, $\{s_3,s_2\}=0, \{s_3,L_1\}=0$} \\
\hline
$3$& $s_1, s_3,\cdots $&$L_1 \cdot s_2$ &\\
 \hline
\end{tabular}
\end{center}

Assuming $[L_1,e]=0$, even though $L_1\cdot s_2$ anti-commutes with $e$, the correct outcome of the logical operator is just given by $= O(L_1) \cdot O(s_2)$, independent of the error $e$ that happened earlier in the measurement schedule. 

\begin{align*}
&(L_1\cdot s_2) \left(1\pm s_3\right)\left(1\pm s_2\right)e \ket{\psi}\\
= &\left(1\pm s_3\right) (L_1\cdot s_2)\left(1\pm s_2\right)e \ket{\psi}\\
= &\left(1\pm s_3\right) (\pm 1)\left(1\pm s_2\right)L_1 \cdot e \ket{\psi}\\
= & O(s_2)\cdot O(L_1) \left(1\pm s_3\right)\left(1\pm s_2\right)e \ket{\psi}\: (\mathrm{assuming\:}[L_1,e]=0)
\end{align*}
    
\end{example}

In the equation above, we omitted the normalization factor, and projected the encoded state onto the eigenspace of $s_2$ and $s_3$, with the $\pm$ signs dependent on the outcomes of $s_2$ and $s_3$, i.e. if $O(s_2)= 1$, then the projector is $(1+s_2)$.

This shows that the logical outcome depends on the order of occurrence between the measurements and the errors. The key point is that when an error anti-commutes with a measurement but the error occurred first, the outcome of the measurement is unaffected by the error. Thus the logical outcome in round 3 is only given by $O(L) \cdot O(s_2)$, even though $s_2$ anti-commutes with $e$.

\begin{example}
On the other hand, if the error $e$ occurred in round 2, after measuring $s_2$, we see a change in the logical outcome. Here, we again assume that $ [L_1, e] = 0 $

\begin{center}
    \begin{tabular}{ |p{2cm}|p{2cm}|p{2cm}|p{2cm}|  }
 \hline
 ISG & Stabilizers & Logical & Events\\
 \hline 
    $1$ & $s_1,\cdots$ & $L_1$ & \\
 \hline
 \multicolumn{4}{|l|}{Measurement: Measure $s_2$, $\{e,s_2\}=0$} \\
 \hline
 $2$ & $s_1,s_2,\cdots$ & $L_1$ &\cellcolor[HTML]{d66793} Error $e$\\
 \hline
    \multicolumn{4}{|l|}{Measurement: Measure $s_3$, $\{s_3,s_2\}=0, \{s_3,L_1\}=0$} \\
\hline
$3$& $s_1, s_3,\cdots$&$L_1 \cdot s_2$ &\\
 \hline
\end{tabular}
\end{center}
\begin{align*}
&(L_1\cdot s_2) \left(1\pm s_3\right)e \left(1\pm s_2\right)\ket{\psi}\\
= &\left(1\pm s_3\right) e (-1)(L_1\cdot s_2)\left(1\pm s_2\right)\ket{\psi}\\
= &\left(1\pm s_3\right) e (-1)(\pm 1)\left(1\pm s_2\right)L_1 \ket{\psi}\\
= & -1 \cdot O(s_2)\cdot O(L_1) \left(1\pm s_3\right) e\left(1\pm s_2\right) \ket{\psi}\: (\mathrm{assuming\:}[L_1,e]=0)
\end{align*}
    
\end{example}

We also use Floquet code as an example to show how a logical representative's outcome is obtained. 

\begin{example}[Hastings and Haah honeycomb code]
    For a logical representative of an outer logical operator in the honeycomb code, the logical operator is updated by some weight 2 checks in each measurement round. The logical outcome in the $(i+1)^{\mathrm{th}}$ round is given by the product of the outcome of the logical operator in the $i^{\mathrm{th}}$ round and the outcomes of the weight 2 checks in the $i^{\mathrm{th}}$ round that are multiplied to the logical operator. Further, if the errors that occurred after the $(i+1)^{\mathrm{th}}$ round of measurements anti-commute with the new logical representative, then the outcome of the logical representative will flip sign.  
\end{example}

In general, we must keep track of the space-time positions of errors to recover the logical information.

Specifically, we need information of the following:
\begin{enumerate}
    \item Whether errors that occurred in round $i$ commute with the logical operators in round $i$. 
    \item The outcomes of stabilizers immediately prior to being multiplied to the logical operators. 
\end{enumerate}

\subsection{Syndromes through Measurements in a Dynamical Code}\label{subsec: syndromes through}

Subsection \ref{subsec: Logical operators} shows the importance of deducing the time that errors occur, in order to correctly compute the logical outcome. In this subsection, we show that the temporal positions of errors also affect the syndromes of the errors. 

In a static stabilizer code, the syndrome for an error only depends on the commutation relations between the stabilizers and the errors. In a dynamical code, if we allow for errors to occur in all rounds of the dynamical code, the syndromes for an error also depend on the temporal position of the error. 

\begin{theorem}\label{theorem: spacetime errors}
    Let $s \in S_0$ be a stabilizer whose outcome is obtained through subsequent rounds of measurements, $s = \prod_{i = 1}^{f} m_i$ where $m_i \in \langle M_i \rangle$ and $\langle M_i \rangle$ is the group generated by the measurements $m_i$ in the $i^{\mathrm{th}}$ round of measurements. Let $E =\prod_{i = 0}^{f} e_i$ be the error that occurred on a dynamical code, where $e_i$ is the error that occurred in the $i^{\mathrm{th}}$ ISG.

    The syndrome of $ E $ for $s$ is $(-1)^a$, with $a$ given by:

    \begin{eqnarray}
        &&e_0 \odot \prod_{i=1}^f m_i + e_1 \odot \prod_{i=2}^f m_i + \cdots+ e_{f-1} \odot m_f \mod \: 2\\
        &=& \prod_j \left(e_j \odot \prod_{i=j+1}^f m_i\right) \mod \: 2
    \end{eqnarray}
    
\end{theorem}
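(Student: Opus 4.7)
The plan is to compare the product $\prod_{i=1}^{f} O(m_i)$, through which the outcome of $s$ is reconstructed, against its error-free value $O(s_0)$, and to attribute every sign flip to a specific error $e_j$ acting before a specific later measurement $m_k$. The overall sign discrepancy will then be the syndrome.

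First I would establish a single-error, single-measurement rule: if an error $e$ is applied immediately before measuring a Pauli $m$, the outcome $O(m)$ acquires a factor of $(-1)^{e \odot m}$ relative to the outcome that would have been obtained without the error. This is a short density-matrix calculation using the projectors $P_{\pm}^{(m)} = (I \pm m)/2$: if $[e,m]=0$ then $P_{\pm}^{(m)} e = e P_{\pm}^{(m)}$ so probabilities are unchanged, and if $\{e,m\}=0$ then $P_{\pm}^{(m)} e = e P_{\mp}^{(m)}$ so the two outcomes swap. In either case, the post-measurement state can be written as $e \cdot |\psi'\rangle$ where $|\psi'\rangle$ is the state that would have arisen without the error, so the error persists past the measurement. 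By induction, the state just before $m_k$ equals $\bigl(\prod_{j<k} e_j\bigr) |\psi_k^{\mathrm{ideal}}\rangle$, and consequently
\begin{equation*}
O(m_k) = O^{\mathrm{ideal}}(m_k) \cdot (-1)^{\left(\prod_{j<k} e_j\right) \odot m_k} = O^{\mathrm{ideal}}(m_k) \cdot (-1)^{\sum_{j=0}^{k-1} e_j \odot m_k},
\end{equation*}
using that $\odot$ is linear under Pauli multiplication in its arguments.

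Next I would multiply these identities over all rounds. The error-free product $\prod_{k=1}^{f} O^{\mathrm{ideal}}(m_k)$ equals $O(s_0)$ by the stabilizer update rules, since the hypothesis $s = \prod_k m_k$ is exactly the statement that $O(s)$ is recoverable from the subsequent measurement outcomes. Therefore
\begin{equation*}
\prod_{k=1}^{f} O(m_k) = O(s_0) \cdot (-1)^{\sum_{k=1}^{f}\sum_{j=0}^{k-1} e_j \odot m_k}.
\end{equation*}
Swapping the order of summation rewrites the exponent as $\sum_{j=0}^{f-1}\sum_{k=j+1}^{f} e_j \odot m_k$, and pulling the sum inside $\odot$ (again by bilinearity) gives $\sum_{j=0}^{f-1} e_j \odot \prod_{k=j+1}^{f} m_k \pmod 2$, which is exactly the exponent $a$ in the claim. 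Observe that $e_f$ is absent because no measurement follows it.

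The main obstacle is the inductive push-through of errors past projective measurements: one must check carefully that an error which anticommutes with a measurement does not get "absorbed" by the projection but continues to act on the post-measurement state and thereby influences subsequent outcomes. Once this is justified, the rest of the proof is bookkeeping — reindexing a double sum and applying bilinearity of $\odot$. A secondary subtlety is that each $m_k$ is a product of mutually commuting Pauli measurements in round $k$ rather than a single projective measurement; one must verify the flip rule for such an $m_k \in \langle M_k \rangle$, which follows from the commutativity within $M_k$ and the fact that $\odot$ is linear in each argument when Paulis are composed.
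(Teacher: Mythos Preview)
Your proposal is correct and follows essentially the same approach as the paper. The paper's proof is a one-paragraph sketch that isolates a single error $e_i$ and observes that it flips the reconstructed outcome of $s$ precisely when it anticommutes with the product of the later measurements $\prod_{j>i} m_j$; you carry out the same idea but with more care, explicitly verifying the projector push-through and tracking the accumulated error on the state before reindexing the double sum.
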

\begin{proof}

    Consider the error $e_i$ in round $i$ of the ISG. The syndrome for $e_i$ is obtained by the outcome of $s$. We assume that $\{e_i,s\}=0$. If $e_i$ occurs before $s$ is measured, it has no effect on the outcome of $s$. If $e_i$ occurs somewhere in the middle of these measurements $m_i$, then the outcome of $s$ changes sign only if $e_i$ commutes with an even number of $m_j$'s, for $1\leq j\leq i$, that are measured before $e_i$ occurs, that is if $e_i$ anti-commutes with $\prod_{j=i+1}^f m_j$
         
\end{proof}

As the decoding problem is much harder for the situation where errors occur in all rounds of the dynamical code and can depend on the details of the dynamical code, we consider only the simpler scenario with errors happening only in round $0$ of the dynamical code, so that the outcomes for stabilizers in round 0 correspond only to syndromes of errors in round 0. From here, we can obtain an upper bound in the distance of the code. 

\section{The Distance Algorithm}\label{sec: the distance algorithm}

We consider a dynamical code $\mathcal{Q}$ with $l$ rounds of measurements. The ISGs are denoted by $S_0, S_1, \cdots, S_l$ starting with the ISG $S_0$ in round $0$. The sets of commuting measurements in each round are denoted by $M_1, M_2,\cdots$, where $M_i$ is the set of measurements made after the $(i-1)^{\mathrm{th}}$ ISG. 

The $l$ rounds of measurements can also be interpreted as the window of measurements for obtaining the syndrome information to decode for errors in round $0$. We assume that some errors occurred in round $0$ and no more errors happen in the later rounds of measurements and we only use $l$ rounds of measurements as our measurement window for learning the syndrome bits in round $0$. To gain a better understanding of the distance of the ISG in round $0$, we need to know for each syndrome bit in round $0$, which of the following 3 cases does it fall under given only the first $l$ rounds of measurements:
\begin{enumerate}
    \item The syndrome bit can be learnt using the given measurement sequence. Stabilizers for these syndrome bits are unmasked and we denote this set by $U$. 
    \item The syndrome bit will be erased and never learnt given the measurement sequence. Stabilizers for these are permanently masked, and this set is denoted by $P$. 
    \item The syndrome bit will not be erased but we cannot learn it in the $l$ measurement rounds. Stabilizers for these are temporarily masked and can potentially be unmasked if we increase the measurement window with more measurement rounds and this set is denoted by $T$.
\end{enumerate}

To understand why this classification may be difficult, we give some intuition in Subsection \ref{subsec:intuition} through examples that illustrate that whether a syndrome is unmasked or not can depend on both the scheduling and the choice of the Pauli measurements. This motivates the need for an efficient classical algorithm that can exhaustively search for all the unmasked, permanently masked and temporarily masked stabilizer generators, and the outcomes for the unmasked stabilizers, so that one can determine the upper bound for the distance of a dynamical code. We present this algorithm in Subsection \ref{subsec: algorithm}. Some examples are given in Subsection \ref{subsec:examples} and the proof is provided in Subsection \ref{subsec: proof}.

\subsection{The Intuition}
\label{subsec:intuition}

Measurement order is crucial in deciding if a stabilizer is unmasked or not. Here we use an ordered set so that the measurements are made in the order they appeared in the set. For example, $\{x_1x_2, x_3x_4\}$ means measuring $x_1x_2$ first followed by $x_3x_4$. 

\begin{example}

We consider a high weight stabilizer $s=x_1 x_2 x_3 x_4 x_5 x_6$, whose outcome is obtained using the following measurements $\{x_1x_2, x_3 x_4, x_5 x_6\}$ in the order they appear. However, suppose we insert a non-commuting Pauli in between: $\{x_1x_2,z_2z_3, x_3 x_4, x_5 x_6\}$. The second measurement anti-commutes with the first measurement but commutes with the stabilizer. In this case, we cannot obtain the outcome for the stabilizer. 

\end{example}

There are also scenarios where new measurements anti-commute with some of the previous measurements, yet the outcome from the previous measurements can still be used to obtain the outcome of the stabilizer. 

\begin{example}
    Suppose the order of measurements is given by $\{x_1x_2, x_3 x_4, z_2z_3, x_5 x_6\}$. It turns out that in this case the stabilizer outcome can be obtained. 
\end{example}

One can also construct examples where the stabilizer $s=x_1 x_2 x_3 x_4 x_5 x_6$ is no longer in the current ISG but its outcome can still be obtained at a later round. 

\begin{example}\label{ex: anti-commute}
    We consider $\{x_5x_6, z_6z_7, x_1x_2, x_3x_4\}$ as a measurement sequence. In this case even though $z_6z_7$ anti-commutes with both the stabilizer and $x_5x_6$, the outcome for the stabilizer $x_1 x_2 x_3 x_4 x_5 x_6$ can be obtained. 
\end{example}

Thus, we see interesting situations where even though the original stabilizers are no longer present in the code, we can still perform some measurements to obtain the outcome of the original stabilizers.

In addition, it is also possible to invent scenarios where two stabilizers end up looking "the same", but we can still obtain the outcomes for each of them. 

\begin{example}
    Consider a 4 qubit example with $S_1 = \{s^{(1)}_1 = x_1   x_2   z_3, s^{(2)}_1 = z_3   x_4, s^{(3)}_1= x_4, s^{(4)}_1 = x_1   x_2\}$. Suppose we measure $m_3 = x_3  z_4$ and $m_4 = z_1   x_3$. Then, we will obtain the following: $S_2 =\{s_2^{(1)}=s^{(1)}_1  s^{(3)}_1 = x_1  x_2  z_3   x_4, s_2^{(2)}=s^{(2)}_1   s^{(4)}_1 = x_1  x_2  z_3   x_4, s_2^{(3)} = x_3  z_4, s_2^{(4)} = z_1   x_3\}$.
\end{example} 

What is interesting here is that even though we ended up with the two stabilizers evolving to look exactly the same, it is still possible to use future measurements, such as $ x_1   x_2   z_3   x_4$ to obtain the outcomes for both stabilizers $s^{(1)}_1$ and $s^{(2)}_1$. 

In the above examples, it may seem somewhat obvious after some computation, but when presented with a series of measurements, it can be hard to tell what are all the generators that have known outcomes from a series of different measurements.

\subsection{Outline and Notation}
\label{subsec:notation}
\begin{figure}[ht]
    \centering
    \includegraphics[width=0.6\linewidth]{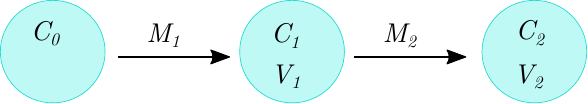} 
    
    \caption{An illustration of the first three rounds of a sequence of measurements that result in the updated sets of stabilizer generators. The union of $C_i$ and $V_i$ gives an overcomplete set of generators for $\langle S_i \rangle$ which is the ISG at round $i$ or after the $i^{\mathrm{th}}$ round of measurements. $M_i$ is the set of measurements made in the $i^{\mathrm{th}}$ round of measurements.}
    \label{fig:floquet code} 
\end{figure}

The goal is to find the unmasked distance of an ISG in a dynamical code given a time window of $l$ measurement rounds to extract its syndrome information. Without loss of generality, we set this to be the $0^{\mathrm{th}}$ ISG, denoted as $S_0$. 

To find the unmasked and masked stabilizers from $\langle S_0\rangle$, we construct two sets $C$ and $V$. $C$ will keep track of the evolution of the stabilizer generators from $S_0$ and $V$ will keep track of the measurements from the measurement sequence. Both sets will be updated by the measurements in the sequence. $C_i$ refers to the set after the $i^{\mathrm{th}}$ round of measurements. We start with $C_0 = S_0$, the initial set of stabilizer generators that are linearly independent at round 0. $V$ is initialized as the empty set at round 0. $V_i$ refers to the set $V$ at the time after the $i^{\mathrm{th}}$ round of measurements. The union of $C_i$ and $V_i$ gives an overcomplete set of generators for the ISG $S_i$ at round $i$ or after the $i^{\mathrm{th}}$ round of measurements (See Figure \ref{fig:floquet code}). We use $C$ and $V$ when we are referring to the sets at the current time-step if the subscript is not specified. For example, at the time step right before the $i^{\mathrm{th}}$ round of measurements $C=C_{i-1}$, $V = V_{i-1}$.  The details of how both sets are updated along with the outcomes of their elements will be provided in the algorithm. 

Elements in $C$ and $\tilde{U}$ have stabilizer generators from $\langle S_0\rangle$ associated with them. For example, if $s\in C$ has $s_0$ associated with it, then $s\cdot s_0$ is a product of measurements and has a known outcome denoted by $O(s\cdot s_0)$. 

Each element $m$ in $V$ has a known outcome denoted by $O(m)$. In general, elements in $V$ can be found in $\langle \cup_{i=1}^{l} M_i \rangle $ and are not necessarily the product of measurements from any particular round. 

It is also important to note that for simplicity, the elements in $C$ and $V$ are unsigned Pauli operators. The signs are taken care of in the outcomes of the measurements: If a measurement has a negative sign, the measurement is tracked with a positive sign with the sign of its outcome flipped. 

$\tilde{U}$ and $\tilde{P}$ are initialized as empty sets to record down the relevant elements from the algorithm. The key ingredient is that we can find all the unmasked stabilizers from $\langle S_0\rangle$ through finding the generators of $\langle C \rangle \cap \langle V \rangle $ which are added to $\tilde{U}$ and then used to compute the unmasked stabilizers which form the set $U$. During the updates from the measurements in the sequence, a stabilizer can become permanently masked and is added to the set $\tilde{P}$. This is then used to compute the set of permanently masked stabilizers $P$ and their destabilizers $K$. By the end of the sequence, $U$, $P$ and $K$ are the output sets which give a complete set of unmasked stabilizers with their syndromes,  permanently masked stabilizers and their destabilizers respectively. We also obtained the set of temporarily masked stabilizers $T$ from $C_l$. $\langle U \cup T \cup P\rangle = \langle S_0 \rangle $ and $U \cup T\cup P$ is a complete and linearly independent set of stabilizers for the $0^{\mathrm{th}}$ ISG.

With this notation in hand, we describe the distance algorithm, by which the classification of stabilizers can be performed in $ O(n^3) $ time. The Zassenhaus algorithm suggested below requires $ O(n^3) $ time, whereas the rest of the algorithm requires $ O(n^2) $ time. 

\subsection{The Distance Algorithm}
\label{subsec: algorithm}

\textbf{Input:} A measurement sequence $M_1, M_2, \cdots, M_l$ on the n qubit code with initial ISG $S_0$ which defines the dynamical code $\mathcal{Q}$.
\\
\textbf{Output:} The set of permanently masked stabilizer generators $P$ and the set of destabilizers $K$, the set of unmasked stabilizer generators and their syndromes $U$, the set of temporarily masked stabilizer generators $T$ and the unmasked distance $d_u$ of the $0^{\mathrm{th}}$ ISG of $\mathcal{Q}$.

\begin{enumerate}

\item Initialize $C = S_0$ and also initialize $\tilde{U}$, $\tilde{P}$ and $V$ as empty sets.
 
\item For all rounds of measurements $i = 1,2,\cdots, l$, do the following:  

\begin{enumerate}

     \item For all measurements $m$ in round $i$, we update the sets $C$ and $V$ according to one of the 4 cases, depending on the commutation relations of $m$ with the elements in $C$ and $V$.

\begin{adjustwidth}{1.25 cm}{}
    Case 1: $[c,m]=[v,m]=0, \forall c \in C, \forall v \in V$
        \begin{adjustwidth}{0.5cm}{}
        Update rule: Add $m$ to $V$ if $m\notin \langle V\rangle $ and record down $O(m)$.
        \end{adjustwidth}   
        
        Case 2: $\{c,m\} =[v,m]=0$, for some $c \in C$ and $\forall v  \in V$
        \begin{adjustwidth}{0.5cm}{}
        Update rule: Add $m$ to $V$ and record $O(m)$. Remove an element $c_j$ in $C$ where $c_j$ anti-commutes with $m$. Then, for all $i$ such that elements $c_i\in C$ anti-commute with $m$, do the following update $c_i \to c_i \cdot c_j$.

        For each update of the form $c_i \to c_i \cdot c_j$: If $c_i $ has the stabilizer generator $s_i \in \langle S_0 \rangle$ associated with it and $c_j$ has the generator $s_j \in  \langle S_0 \rangle$ associated with it, then $c_i \cdot c_j$ has the generator $s_i \cdot s_j$ associated with it. Further, the outcome of $c_i \cdot c_j \cdot s_i \cdot s_j$ is recorded down as a product of $O(c_i \cdot s_i)$ and $O(c_j \cdot s_j)$. 
        
        For the element $c_j$ which is removed, denoting its associated stabilizer by $s_j$, add $c_j$ to $\tilde{P}$.
        
        \end{adjustwidth} 
        
        Case 3: $[c,m]=\{v,m\}=0 , \forall c \in C$ and for some $v \in V$
        
        \begin{adjustwidth}{0.5cm}{}
        
        Update rule: Update $V$ according to the stabilizer update rules and record the outcomes for the updated elements in $V$. Explicitly, add $m$ to $V$ and record $O(m)$ for $m$. Remove an element $m_j$ in $V$ where $m_j$ anti-commutes with $m$. Then, for all $i$ such that the elements $m_i \in V$ anti-commute with $m$, do the update: $m_i \to m_i \cdot m_j$. 

        For each updated element of the form $m_i \to m_i \cdot m_j$: $m_i \cdot m_j$ has the outcome $ O(m_i) \cdot O(m_j)$.
        
        \end{adjustwidth}
        
        Case 4: $\{c,m\}=\{v,m\}=0$ for some $c\in C$ and for some $v \in V$
        
        \begin{adjustwidth}{0.5cm}{}
        	
        Update rule: Add $m$ to $V$ and record down $O(m)$ for $m$. Remove an element $m_j$ in $V$ where $m_j$ anti-commutes with $m$. Then, for all elements $m_i\in V$ that anti-commute with $m$, do the following update: $m_i \to m_i \cdot m_j$ with $O(m_i)\cdot O(m_j)$ as its outcome. For all elements $c_i \in C$ that anti-commute with $m$, do the following update: $c_i \to c_i \cdot m_j$. If $s_i\in \langle S_0 \rangle$ is the stabilizer associated with $c_i$, then the stabilizer associated with $c_i \cdot m_j$ is $s_i$. If the operator and its outcome for $c_i \cdot s_i$ is given by $\tilde{m},O(\tilde{m})$ respectively, then for $c_i \cdot m_j$, the operator $\tilde{m}\cdot m_j$ has an outcome given by $O(\tilde{m})\cdot O(m_j)$.
        
        \end{adjustwidth}
    
\end{adjustwidth}  

    \end{enumerate} 
\item Calculate the generators for $\langle C \rangle \cap \langle V \rangle = \tilde{U}$ (A suggested algorithm for this is the \hyperlink{Zassenhaus algorithm}{Zassenhaus algorithm} in Subsection \ref{sub: unmasked and zassenhaus alg}.). If an element in $C$ is updated to an identity, it will be added to $\tilde{U}$ too. If $ C $ has redundant generators, then each redundant generator gives an identity that will be added to $\tilde{U}$: Suppose $ c_1 \in C $ is redundant and $ c_1 = \prod_{\alpha} c_\alpha$. Then, $ c_1 \cdot \prod_{\alpha} c_\alpha  = I$ is recorded into $\tilde{U}$.

\item Compute the unmasked stabilizers with their syndromes for the elements of $\tilde{U}$ and add them to $U$, using algorithm \ref{alg: unmasked stabilizers}. 

\item Compute the permanently masked stabilizers and their destabilizers in the $0^{\mathrm{th}}$ ISG using algorithm \ref{alg: pms and destabilizer}. The permanently masked stabilizers in the $0^{\mathrm{th}}$ ISG form the ordered set $P$, and their corresponding destabilizers form the ordered set $K$, such that the $j^{\mathrm{th}}$ element in $K$ is the destabilizer for the $j^{\mathrm{th}}$ element in $P$.

\item Compute the set of temporarily masked stabilizers $T$ using algorithm \ref{alg: tms}.

\item Compute the unmasked distance $d_u$ of the $0^{\mathrm{th}}$ ISG of $\mathcal{Q}$ given by Definition \ref{def:unmasked distance}. 
\end{enumerate}

\begin{remark}
    Elements in $C$ can look the same but may not be identical, as they may have different stabilizer generators associated with them. Thus, they cannot be treated as the same elements, but must be separately updated. For example, given an initial set of stabilizer generators consisting of $s_0^{(i)}$ and $s_0^{(j)}$, where the subscripts indicate that they are generators in the $0^{\mathrm{th}}$ round, we can perform some rounds of measurements. Following the steps in the algorithm, we may end up with $s_k^{(i)} = s_k^{(j)}$ after the $k^{\mathrm{th}}$ round of measurements, with $s_0^{(i)}$ updated to $s_k^{(i)}$ and $s_0^{(j)}$ updated to $s_k^{(j)}$. $s_k^{(i)}$ has an associated stabilizer $s_0^{(i)}$ but $s_k^{(j)}$ has an associated stabilizer $s_0^{(j)}$, and the known measurement outcomes given by $O(s_0^{(i)}\cdot s_k^{(i)})$ for $s_k^{(i)}$ and $O(s_0^{(j)}\cdot s_k^{(j)})$ for $s_k^{(j)}$ are clearly different. If we measure $p = s_k^{(j)} = s_k^{(i)}$, then we can separately obtain the syndromes for both $s_0^{(i)}$ and $s_0^{(j)}$, given by $O(p)\cdot O(s_0^{(i)}\cdot s_k^{(i)})$ and $O(p) \cdot O(s_0^{(j)}\cdot s_k^{(j)})$ respectively. In this case, despite looking alike, the original stabilizers associated with them are different, so they are tracked in $C$ as different stabilizer generators with separate labels. Only for Case 2 do we see the number of elements in $C$ decreases by 1. 
\end{remark}

\begin{remark}
    In our update of the measurements and their outcomes for the elements of $V$ and $C$, we can choose to not compute the products, but keep a running list of the measurements and their outcomes. This can be useful if we want to record information of the constituent measurements that give the syndrome information of the stabilizers in $\langle S_0 \rangle$.
\end{remark}

\begin{remark}[Temporarily masked stabilizers]
A temporarily masked stabilizer in $T$ can potentially be unmasked or permanently masked by increasing the measurement window.

\end{remark}

\begin{remark}[Comparison with outcome code \cite{delfosse2023spacetime}]
    The syndrome information obtained through the distance algorithm corresponds to the outcome code but limited to one round of ISG. Applying the distance algorithm to all ISGs gives the syndromes for all the unmasked stabilizers and this is exactly the outcome code. Here, we have provided a concrete algorithm to obtain the syndromes of the unmasked stabilizers for each ISG. 
    
    However, the main difference is that the distance algorithm distinguishes between the temporarily masked and the permanently masked stabilizers in an ISG. This is not present in the outcome code. Based on the type of masking, we can obtain the destabilizers for the masked stabilizers of each ISG, which allow us to determine a tighter upper bound in the distance of the dynamical code given by the unmasked distance.

\end{remark}
\subsubsection{Unmasked Stabilizers and their Syndromes} \label{sub: unmasked and zassenhaus alg}

\begin{algorithm}\label{alg: Zassenhaus algorithm} \hypertarget{Zassenhaus algorithm}{Zassenhaus algorithm}.
\\
\textbf{Input:} $C,V$.
\\
\textbf{Output:} Generators of $\langle C \rangle \cap \langle V \rangle$

Let  $ \tilde{V} $ be a basis set for $ \langle V \rangle $.

\begin{eqnarray*}
	&&\tilde{V} =\{v_1, \cdots, v_m\}\\
	&&C = \{c_1,\cdots, c_k\}
\end{eqnarray*}

Using the map $F$ defined in Eq. \eqref{eq: pauli to binary}, the elements $c_i, v_i$ are expressed as 2n dimensional vectors $F(v_i) =[a_i, b_i],F(c_i) =[d_i, e_i] $.

The following block matrix of size $((m+k)\times 4n)$ is created: 
\begin{equation}
	\begin{pmatrix}
		a_{1}&b_{1}&a_{1}&b_{1}\\
		\vdots&\vdots&\vdots&\vdots\\
		a_{m}&b_{m}&a_{m}&b_{m}\\
		d_{1}&e_{1}&0&0\\
		\vdots&\vdots&\vdots&\vdots\\
		d_{k}&e_{k}&0&0
	\end{pmatrix}
\end{equation}

Using elementary row operations, this matrix is transformed to the row echelon form:

\begin{equation}\label{eq: row echelon form}
	\begin{pmatrix}
		\alpha_{1}&\beta_{1}&\bullet&\bullet\\
		\vdots&\vdots&\vdots&\vdots\\
		\alpha_{q}&\beta_{q}&\bullet&\bullet\\
		0&0&e_1&f_1\\
		\vdots&\vdots&\vdots&\vdots\\
		0&0&e_j&f_j\\
		0&0&0&0\\
		\vdots&\vdots&\vdots&\vdots\\
		0&0&0&0
	\end{pmatrix}
\end{equation}

Here, $[\alpha_{i},\beta_{i}]$ is a non-zero 2n-dimensional vector and $\bullet$ stands for an arbitrary n-dimensional vector. 

The generators of $\langle C \rangle \cap \langle V \rangle$ are given by the set $\{ F^{-1}([e_i,f_i])\}, 1\leq i\leq j$ and the identities from the $ 0 $ vector rows of Eq. \eqref{eq: row echelon form}, as each $ 0 $ vector corresponds to a different associated stabilizer generator. 

\end{algorithm}

\begin{algorithm} \label{alg: unmasked stabilizers} Unmasked stabilizers.
\\
\textbf{Input:} $\tilde{U}$.
\\
\textbf{Output:} $U$

\begin{enumerate}

\item For all elements $u_j$, $j=1,2,\cdots$, in $\tilde{U}$, do the following:
    \begin{enumerate}
    	\item Let $ s_j $ be the stabilizer generator in $\langle S_0 \rangle$ associated with $u_j$. Add $s_j$ to $U$ if $s_j \notin \langle U\rangle$. 

        \item Let the outcome of $u_j$ be $O(u_j)$. Since $ u_j \in \langle V\rangle$, $O(u_j)$ can be computed using products of outcomes of measurements in $V$. The outcome for $u_j \cdot s_j$ is also known since this is a product of measurements with known outcomes. The syndrome information for $s_j$ is given by:

        \begin{eqnarray}
        	O(s_j) &=& O(u_j)\cdot O(u_j\cdot s_j) 
        \end{eqnarray}
        Associate this syndrome information with $ s_j \in U$
    \end{enumerate}
\item Return the updated set $U$.
\end{enumerate}
\end{algorithm}

\subsubsection{Permanently Masked Stabilizers and their Destabilizers}

We show how to find the destabilizers corresponding to the elements in $\tilde{P}$ in the $0^{\mathrm{th}}$ ISG.

Recall that to update a logical operator from the $i^{\mathrm{th}}$ ISG to the $(i+1)^{\mathrm{th}}$ ISG, we multiple a representative of the logical operator with elements in $\langle S_i \rangle$ so that the new representative commutes with both $\langle S_i \rangle$ and $\langle S_{i+1} \rangle$. This follows from the logical update rules in Lemma \ref{lemma: logical update rule}. To reverse this process, i.e. obtain the logical operator of the $i^{\mathrm{th}}$ ISG from a representative of the logical operator in the $(i+1)^{\mathrm{th}}$ ISG, we simply multiply to it elements in $\langle S_{i+1} \rangle$ so that the new representative commutes with both ISGs. 

Given an element $p\in \tilde{P}$ and its destabilizer $\kappa$ that we found through one of the rounds of measurements, we can treat this pair as logical representatives for a gauge qubit. Then, we can apply the reverse logical update rules to obtain the logical representatives for this gauge qubit in $S_0$. The details for this are provided as an algorithm below. 

\begin{algorithm}\label{alg: pms and destabilizer} Permanently masked stabilizer and its destabilizer.
\\
\textbf{Input:} The measurement sequence $M_1,M_2,\cdots, M_l$ on the n qubit code $ \mathcal{Q} $ with initial ISG $S_0$ and the set $\tilde{P}$.
\\
\textbf{Output:} $ P,K $
\begin{enumerate}
	
\item  Define a new set $R_0 = S_l$, where $S_l$ is the set of stabilizer generators for the $l^{\mathrm{th}}$ ISG.  

\item For all rounds of measurements $j = 1,2,\cdots, l$, do the following:  

\begin{enumerate}
     \item Initialize $r_{l-j+1}$ as an empty set.
     \item For all elements that are removed in step 2 of the distance algorithm using the measurements in $M_j$, add them to $r_{l-j+1}$. 
\end{enumerate}

\item Define the dynamical code $\mathcal{Q}_2$ on the n qubit code initialized as $R_0$ with the measurement sequence given by $r_1, r_2, \cdots, r_l$. $R_i$ denotes the $i^{\mathrm{th}}$ ISG of $\mathcal{Q}_2$. Let $R$ denote the set of stabilizer generators at the current time step. 

\item Initialize $L$ as an empty set to keep track of the gauge operators for $\mathcal{Q}_2$ .

\item For all rounds of measurements $r_j$ with $j = 1, \cdots, l$, do the following: 

	\begin{enumerate}
		\item For all measurements $m_i $ in $r_j$ with $i = 1, \cdots, q$:
		
			\begin{enumerate}

				\item If $m_i $ is an element in  $\tilde{P}$, that is if $m_i$ is removed from $\langle C \rangle$ using the measurements in $M_{l-j+1}$, then update $L$ and $R$ using the logical update rules in Lemma $\ref{lemma: logical update rule}$ and stabilizer update rules in Lemma \ref{lemma: stab update rule} respectively. Explicitly, remove an element $s$ in $R$ where $s$ anti-commutes with $m_i$. For all elements $ p $ in $ L, R$ that anti-commute with $m_i$, update $ p \to p \cdot s $. $ (m_i,s) $ forms the logical operators for a gauge qubit and is added to $L$. 
				
				\item If $m_i$ is not an element in $\tilde{P}$, then update the logical operators in $L$ and stabilizer generators in $R$ according to the logical update rules in Lemma $\ref{lemma: logical update rule}$ and the stabilizer update rules in Lemma \ref{lemma: stab update rule} respectively. 
			\end{enumerate}  
	\end{enumerate}

\item The permanently masked stabilizers and their destabilizers in the $0^{\mathrm{th}}$ ISG are given by the pairs of operators in $L$, where the first element in each pair is added to $P$ and the second added to $K$.

\end{enumerate}
\end{algorithm}

\subsubsection{Temporarily Masked Stabilizers}
\begin{algorithm} \label{alg: tms} Temporarily masked stabilizers.
\\
\textbf{Input:} $C_l, U$.
\\
\textbf{Output:} The set of temporarily masked stabilizer generators $T$.

\begin{enumerate}

\item Initialize $T$ as an empty set. 

\item For all elements $c_j$, in $C_l$, do the following:
    \begin{enumerate}
        \item Let $s_j$ be the stabilizer generator in $\langle S_0 \rangle$ associated with $c_j$. If $s_j \notin \langle U\rangle$, add $s_j$ to $T$.
    \end{enumerate}
\item Return $T$.
\end{enumerate}
\end{algorithm}

\subsection{Examples for the Distance Algorithm}
\label{subsec:examples}
\begin{example}[Illustration using Example \ref{ex: anti-commute}]
	Consider the stabilizer $ s = x_1 x_2 x_3 x_4 x_5 x_6 $ and the measurement sequence given by $ M_1 =  \{x_5x_6\}$, $ M_2 = \{z_6z_7\} $ and $ M_3 =\{x_1x_2, x_3x_4\} $. Then, applying the algorithm, we find the following: 
	
	\begin{center}
		\begin{tabular}{ |p{1.5cm}|p{3cm}|p{5cm}|  }
			\hline
			Round & $C$ & $V$  \\
			\hline 
			$0$ & $x_1 x_2 x_3 x_4 x_5 x_6$ &  \\
			\hline
			$1$ & $x_1 x_2 x_3 x_4 x_5 x_6$ & $x_5x_6$\\
			\hline
			$2$ & $x_1 x_2 x_3 x_4 $ & $z_6z_7$ \\
			\hline
			$3$ & $x_1 x_2 x_3 x_4 $ & $z_6z_7,x_1x_2, x_3x_4$ \\
			\hline
		\end{tabular}
		
	\end{center}
	$\langle C\rangle \cap \langle V\rangle = \{x_1 x_2 x_3 x_4 \}$

	Using algorithm \ref{alg: unmasked stabilizers}, we can substitute into the equation $ O(s_j) = O(u_j)\cdot O(u_j\cdot s_j) $ with $ O(u_j) = O(x_1 x_2 x_3 x_4 ) $ and $ O(u_j\cdot s_j) = O(x_5x_6) $. This gives the outcome for $ x_1 x_2 x_3 x_4 x_5 x_6 $ which is an unmasked stabilizer for this measurement sequence. 
\end{example}
\begin{example}[Illustration using a simplified honeycomb code]

    \begin{figure}[ht]
    \centering
    \includegraphics[width=0.35 \linewidth]{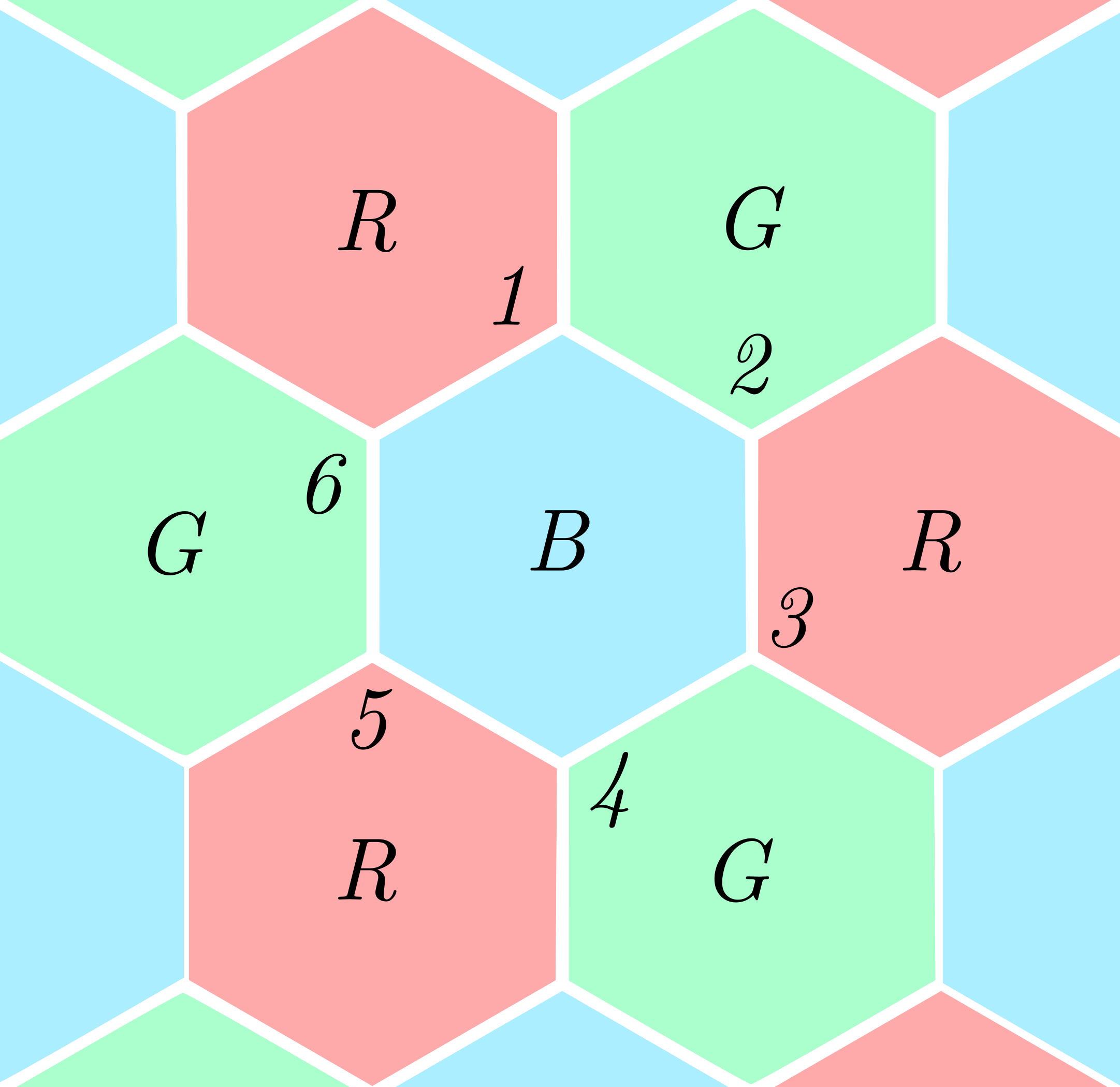} 
    
    \caption{An illustration of the three colorable honeycomb lattice. The plaquettes are labelled R for red, B for blue and G for green. The vertices of the center blue plaquette are labelled from 1 to 6.}
    \label{fig:illustration dist alg} 
    \end{figure}

    Consider a code built on a honeycomb lattice illustrated in Figure \ref{fig:illustration dist alg}. For the sole purpose of illustrating how the distance algorithm works, we focus on the center blue plaquette. Suppose the stabilizer $A = z_1 z_2 z_3 z_4 z_5 z_6 \in S_0$, and the dynamical code consists of measurement sequences $M_1 = \{x_1x_2,x_3x_4,x_5x_6\}$, i.e. the lines connecting the red plaquettes, and $M_2 = \{y_2y_3,y_4y_5,y_6y_1\}$, i.e. the lines connecting the green plaquettes. 

    Then, using the algorithm, we obtain the following table. 

    \begin{center}
    \begin{tabular}{ |p{1.5cm}|p{3cm}|p{5cm}|  }
    \hline
        Round & $C$ & $V$  \\
    \hline 
        $0$ & $z_1 z_2 z_3 z_4 z_5 z_6$ &  \\
    \hline
        $1$ & $z_1 z_2 z_3 z_4 z_5 z_6$ & $x_1x_2,x_3x_4,x_5x_6$\\
    \hline
        $2$ & $z_1 z_2 z_3 z_4 z_5 z_6$ & $x_1x_2x_3x_4x_5x_6,y_2y_3,y_4y_5,y_6y_1$ \\
    \hline
    \end{tabular}
    \end{center}
    Calculating $\langle C\rangle \cap \langle V\rangle$, we found the outcome for the blue plaquette in round $2$ of the code, with $O(A)=O(x_1x_2x_3x_4x_5x_6)O(y_2y_3)O(y_4y_5)O(y_6y_1)$. We conclude that the blue plaquette is an unmasked stabilizer in the $0^{\mathrm{th}}$ ISG. 
\end{example}
\begin{example}[Hastings and Haah's Honeycomb code \cite{Hastings2021dynamically}]
    Generalizing to this code from the previous example, we can substitute $ S_0 $ of the algorithm with that of any ISG post-initialization. The algorithm shows that the plaquettes of this ISG are the unmasked stabilizers, whose outcomes are fully revealed within a window of four rounds. The weight 2 checks in this ISG are the permanently masked stabilizers, with destabilizers given by the anti-commuting checks of the next round of measurements. To calculate the unmasked distance, the gauge group is defined with these anti-commuting checks as destabilizers.
\end{example}

\begin{example}[Dynamic automorphism codes \cite{davydova2023quantum}]
    Our algorithm can verify if a given dynamic automorphism code defined by a sequence of automorphisms of color code has good error correcting properties. This can be done by running the distance algorithm for each ISG of the code. We can specify a time window for obtaining the syndrome information and terminate the algorithm after a predefined number of rounds of measurements. The unmasked distance is then determined based on the set of unmasked, temporarily and permanently masked stabilizers. 
    
    In \cite{davydova2023quantum}, the authors introduced padding sequences between two automorphisms. One can observe that adding padding changes some of the temporarily masked stabilizers to unmasked stabilizers. For example, in the padded $\phi_{(\mathrm{gb})}$ automorphism, padding converts some temporarily masked stabilizers like $R_2^{(1)},R_2^{(2)}$ into unmasked stabilizers whose syndromes can be obtained through measurements in the padded rounds. 
\end{example}

\subsection{Proof for the Distance Algorithm}
\label{subsec: proof}
In this subsection, we prove several components of the distance algorithm. We show that it outputs a complete set of masked and unmasked stabilizer generators. We also show several results that are important in the construction of the algorithm and its proof. 

We first show that every stabilizer in $U$ gives a syndrome for the stabilizers in $\langle S_0\rangle $. Then, we show that for a given measurement protocol, the set of syndromes from $S_0$ that can be obtained using the distance algorithm is complete. 

\begin{theorem}\label{theorem: known outcome}
    Every element in $U$ is an unmasked stabilizer in $\langle S_0\rangle $ with known syndrome. 
\end{theorem}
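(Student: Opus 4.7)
The plan is to reduce the claim to two loop invariants on the bookkeeping sets $C$ and $V$ maintained by the algorithm. Let $\mu_i$ denote the multiset of measurements performed through the end of round $i$. I will argue by induction on $i$ that after the $i$-th round:
(i) for every $v \in V_i$, $v$ is equal (as an unsigned Pauli) to a product of operators in $\mu_i$, and the corresponding product of recorded outcomes $O(v)$ equals the eigenvalue that would be obtained if $v$ were measured on the current state; and
(ii) for every $c \in C_i$ with associated initial stabilizer $s_c \in \langle S_0 \rangle$, the product $c \cdot s_c$ is a product of operators in $\mu_i$, with its cumulative tracked outcome $O(c \cdot s_c)$ equal to the corresponding product of measurement outcomes. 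Both (i) and (ii) are vacuously true at $i=0$, since $V_0 = \emptyset$ and every $c \in C_0 = S_0$ has $c \cdot s_c = I$.

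The inductive step requires checking each of the four cases in the algorithm for a new measurement $m$ in round $i+1$. Case 1 adds $m$ to $V$ with outcome $O(m)$, preserving (i) trivially and leaving $C$ untouched. Case 3 updates $V$ by the ordinary stabilizer update rules, so each new $v' = v \cdot m_j$ satisfies $O(v') = O(v) O(m_j)$, preserving (i) because the state has not changed in the subspace relevant to $C$. In Case 2, the update $c_i \to c_i \cdot c_j$ preserves (ii) because
\begin{equation}
(c_i c_j)(s_i s_j) = (c_i s_i)(c_j s_j),
\end{equation}
and both factors on the right are products in $\mu_i \subset \mu_{i+1}$ by the inductive hypothesis, with their outcomes multiplying to give the recorded outcome. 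The only subtle case is Case 4, where $m$ anti-commutes with some $c \in C$ and some $v \in V$; here one removes $m_j \in V$, updates $c_i \to c_i \cdot m_j$ while keeping $s_i$ as the associated stabilizer. The new quantity is $(c_i m_j) s_i = (c_i s_i) \cdot m_j$, which is a product of elements of $\mu_{i+1}$; tracking outcomes through $O(c_i s_i) \cdot O(m_j)$ preserves (ii). The updates to $V$ in Case 4 preserve (i) by the same mechanism as Case 3, and adding $m$ itself with outcome $O(m)$ preserves (i) trivially.

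Once the two invariants are established, the conclusion is immediate. For any $u \in \tilde U$, the Zassenhaus procedure (Algorithm \ref{alg: Zassenhaus algorithm}) expresses $u$ both as a product $\prod_\alpha c_\alpha$ of elements of $C_l$, with associated stabilizer $s = \prod_\alpha s_{c_\alpha} \in \langle S_0 \rangle$, and as a product $\prod_\beta v_\beta$ of elements of $V_l$. By (i), $O(u) = \prod_\beta O(v_\beta)$ is known; by (ii), $O(u \cdot s) = \prod_\alpha O(c_\alpha s_{c_\alpha})$ is known. Since $u^2 = I$, we get $s = u \cdot (u \cdot s)$, and therefore Algorithm \ref{alg: unmasked stabilizers} correctly computes
\begin{equation}
O(s_j) = O(u_j) \cdot O(u_j \cdot s_j),
\end{equation}
giving a genuine syndrome for an element $s_j \in \langle S_0 \rangle$. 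Every element added to $U$ arises this way, proving the theorem.

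The main obstacle is verifying the invariants in Case 4, where both the $C$- and $V$-updates are coupled and it is easy to mis-track which Paulis the associated stabilizer refers to. A secondary subtlety is handling degeneracies in $C$ (multiple generators evolving to the same Pauli, or collapsing to the identity): the remark after the algorithm and the zero-rows of the Zassenhaus echelon form in \eqref{eq: row echelon form} must be carefully interpreted as distinct intersection elements with distinct associated $s_j$, so that each contributes its own syndrome to $U$ rather than being discarded.
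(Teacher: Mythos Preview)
Your proposal is correct and follows essentially the same approach as the paper: an induction over the measurement sequence verifying, case by case, that for every $c\in C$ with associated stabilizer $s_c$ the product $c\cdot s_c$ has a known outcome, and that elements of $V$ have known outcomes since they are products of measurements. The paper's proof treats your invariant (i) implicitly (simply noting $u\in\langle V\rangle$ implies $O(u)$ is known) and concentrates on invariant (ii), but the structure of the case analysis and the conclusion $O(s_u)=O(u)\cdot O(u\cdot s_u)$ are identical.
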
 

\begin{proof}
	Let $ u $ be an element in $ \tilde{U} $ and let its associated stabilizer be $ s_u $. We prove that both $ O(u) $ and $ O(u\cdot s_u) $ are known. This shows that $O( s_u )$  has a known syndrome given by $ O(u)\cdot O(u\cdot s_u) $.
	
	Since $u \in \langle V\rangle$, $O(u)$ is a known outcome obtained from the product of outcomes of measurements in $ V $. As $ u \in \langle C\rangle $, $ u $ has an associated stabilizer $ s_u \in \langle S_0 \rangle$.  If $ u = s_u $, then $ u \cdot s_u = I $ has known outcome $ +1 $.  
	
	To show that the outcome of $ u\cdot s_u $ is known, we first show that for any element $ c \in C $ with its associated stabilizer $ s_c $, $ c \cdot s_c$ always has a known outcome denoted by $ O(c \cdot s_c) $. We prove this by induction. 
	
	Suppose the statement is true for all elements in $ C $ before the $ k^{\mathrm{th}} $ measurements. 
	
	Then if we measure $ m $ as the $ (k+1)^{\mathrm{th}}$ measurement, an element $ c \in C $ can be updated using one of the four cases in Step 2 of the distance algorithm. In all 4 cases, the statement remains true.
	
	Case 1 and Case 3 do not update $ C$, so the statement is trivially true in these 2 cases. 
	
	In Case 2, if $ c_1 $ has an associated stabilizer $ s_{c_1}$ and $ O(c_1\cdot s_{c_1}) $ is known and $ c_2 $ has an associated stabilizer $ s_{c_2} $ and $ O(c_2 \cdot s_{c_2}) $ is known by the induction hypothesis. Then, if applying the update rule removes $ c_2 $ and updates  $  c_1 \to c_1 \cdot c_2 $, the associated stabilizer is updated to $ s_1 \cdot s_2 $, and the outcome $ O((c_1\cdot c_2 )\cdot (s_{c_1} \cdot s_{c_2})) $ is given by $ O(c_1\cdot s_{c_1})\cdot O(c_2 \cdot s_{c_2}) $ which is known. 
	
	In Case 4, if an element $ c $ in $ C $, with an associated stabilizer $ s_c $, is updated by the measurement $ m \in V $ such that $  c \to c \cdot m $, then the associated stabilizer remains the same as $ s_c $. The outcome $ O((c\cdot m) \cdot s_c) $ is given by $ O(c\cdot s_c) \cdot O(m) $ which is known. 
	
	Thus, we showed that for any element $ c \in C $ with an associated stabilizer $ s_c \in \langle S_0 \rangle $, $ O(c\cdot s_c) $ is always known. 
	
	Since $ O(s_u) = O(u\cdot s_u) \cdot O(u)$, and $ O(u\cdot s_u)  $ is a known outcome by the argument above, any element in $ \langle C \rangle \cap \langle V \rangle $ gives an unmasked stabilizer generator $ s_u\in \langle S_0 \rangle$ along with its syndrome $ O(s_u) $.
	
\end{proof}

The following lemma is needed to show the next lemma.
\begin{lemma}\label{lemma: basis indept}
	The group $ \langle C\rangle  \cap \langle V \rangle $ is independent of the choice of stabilizer generators for $ \langle C\rangle $ and $ \langle V\rangle $ up to some elements in $ \langle V \rangle $.
\end{lemma}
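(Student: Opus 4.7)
The plan is to exploit the standard group-theoretic observation that a subgroup of $\mathcal{P}_n$ generated by a set depends only on the subgroup it defines, not on any particular labeling of generators. First I would fix two generating sets $C$ and $C'$ with $\langle C\rangle = \langle C'\rangle$, and likewise $V$ and $V'$ with $\langle V\rangle = \langle V'\rangle$. By the defining property of generation, every element of $\langle C\rangle$ can be written as a product of elements of $C'$, and vice versa; similarly for $V$ and $V'$. Thus the set $\{g \in \mathcal{P}_n : g \in \langle C\rangle \text{ and } g \in \langle V\rangle\}$ is identical under either labeling, establishing the group-level invariance.

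The qualifier ``up to some elements in $\langle V\rangle$'' refers to the auxiliary bookkeeping in the distance algorithm: each element of $C$ carries an associated stabilizer in $\langle S_0\rangle$, and changing the generating set of $\langle C\rangle$ may change how a given $u \in \langle C\rangle \cap \langle V\rangle$ is decomposed as a product of generators of $C$, hence how its associated stabilizer $s_u$ is read off. I would show that any two valid decompositions of the same $u$ differ by a relation in $\langle C\rangle$; because $u$ simultaneously lies in $\langle V\rangle$, this relation can be absorbed by an element of $\langle V\rangle$ whose outcome is already known from the measurement record. So the resulting associated stabilizer, and hence any extracted syndrome, is well-defined modulo a product of known measurement outcomes.

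The main technical work is verifying that the associated-stabilizer assignment behaves consistently under the four update cases of Step~2 of the distance algorithm. I would proceed by induction on the measurement round $i$, checking that the invariant used in the proof of Theorem~\ref{theorem: known outcome} --- namely, that $O(c \cdot s_c)$ is known for every $c \in C$ with associated stabilizer $s_c$ --- is preserved under any equivalent relabeling of generators. The anticipated obstacle is that Cases~2 and~4 introduce mixed updates (removing or modifying elements of $C$ while adjusting $V$), so two different orderings of elementary basis changes could produce different intermediate labels; however, each such ambiguity contributes at worst an extra factor in $\langle V\rangle$, which is precisely the slack allowed by the lemma's statement.
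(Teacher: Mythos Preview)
Your first paragraph misreads what the lemma is actually asserting. You assume $\langle C\rangle = \langle C'\rangle$ and then observe that the intersection of two fixed groups does not depend on a choice of generators --- which is of course immediate. But that is not the situation here. The algorithm's Step~2 involves arbitrary choices (which anti-commuting element of $V$ to remove in Case~4, which element of $C$ to remove in Case~2), and different choices produce sets $C_1, C_2$ that in general generate \emph{different} subgroups of $\mathcal{P}_n$. The paper's proof makes this explicit at the outset: elements in $C_1$ are expressed in terms of elements of $\langle C_2\rangle$ together with extra generators $\{v_1,\ldots,v_k\}\subset\langle V\rangle$. So $\langle C_1\rangle$ and $\langle C_2\rangle$ agree only modulo $\langle V\rangle$, not as groups, and the qualifier ``up to some elements in $\langle V\rangle$'' refers to this discrepancy between the $C$-groups themselves, not merely to bookkeeping ambiguity in associated stabilizers. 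Consequently $\langle C_1\rangle\cap\langle V\rangle$ and $\langle C_2\rangle\cap\langle V\rangle$ are not the same set by a one-line argument, and your opening paragraph does not establish the lemma.

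The paper's actual argument is an induction on measurements that maintains a much stronger joint invariant: $\langle V_1\rangle=\langle V_2\rangle$, and for every $c_1\in\langle C_1\rangle$ with associated stabilizer $s$ there exists $c_2\in\langle C_2\rangle$ with the same associated $s$ such that $c_1=c_2\cdot v$ for some $v\in\langle V\rangle$ (and vice versa). One then checks, case by case, that this invariant survives each of the four update rules; Case~4 is the delicate one, since one run of the algorithm may remove some $a\in V_1$ while the other removes a different $b\in V_2$, and the paper handles it by embedding both updates into enlarged generating sets $G_1,G_2$ and invoking basis-independence of the stabilizer update rule. Your third paragraph gestures at a four-case induction, but because you never formulate the correct invariant (you are still tracking ``labels'' on a fixed group rather than a correspondence between two different groups), the inductive step for Case~4 would not close as written.
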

\begin{proof}
	We prove this by considering two choices of generators for $ C $ and $ V $, with the elements in the two choices for $ C $ differing by some elements in $ \langle V\rangle $. This is necessary since Case 4 of the distance algorithm will remove an arbitrary anti-commuting element from $ V $ and multiply it to some elements in $ C $. We show that for both choices, we obtain the same set of unmasked stabilizers.  One property that we will make use of heavily here is that the stabilizer update rules in Lemma \ref{lemma: stab update rule} are basis independent. 
	
	Let $ V_1,V_2 $ be two arbitrary bases for $ \langle V \rangle$ such that $ \langle V_1\rangle =  \langle V_2\rangle = \langle V\rangle$. Let $ C_1,C_2 $ be two sets for $ C $. Elements in $ C_1 $ can be expressed in terms of elements in $ \langle C_2 \rangle $ and some generators $ \{v_1, \cdots, v_k\} \subset \langle V \rangle $ and vice versa. Note that for $ C_1 $ and $ C_2 $, we must be careful that there may be elements that have the same Pauli operator but different associated stabilizers, thus we require that in our choices the groups formed by the associated stabilizers of the elements in the two sets must be equal, that is for any element $c_1 \in \langle C_1\rangle$, $\exists c_2 \in \langle C_2\rangle$ with the same associated stabilizer generator as $c_1$ such that $c_1 = c_2 \cdot v$, with $v \in \langle V\rangle$, and vice versa.
	
	Initially, both choices of generators are equivalent in finding the unmasked stabilizers. Next, we show that this holds true after an arbitrary measurement $ m $, by checking for all 4 cases of the commutation relations on $ C_1,V_1 $.  We first check for Case 1 and Case 2. Then, we check the update when it is Case 3 for both $ C_1,V_1 $ and $ C_2,V_2 $. Lastly, we check the update when it is Case 3 for $ C_1,V_1 $ and either Case 3 or 4 for $ C_2,V_2 $ without loss of generality. 

	If the update is according to Case 1 for $ C_1, V_1 $, it must also be updated by Case 1 for $ C_2, V_2 $. $ m $ is added to $ V_i $, so $ \langle V_1\rangle = \langle V_2 \rangle $. It is clear that elements in $ \langle C_1\rangle  \cap \langle V_1 \rangle $ still differ from $ \langle C_2 \rangle  \cap \langle V_2 \rangle $ up to $ \langle \{v_1, \cdots, v_k\} \rangle $ and are equivalent in finding the unmasked stabilizers. 
	
	If it is Case 2 for $ C_1,V_1 $, then it must be Case 2 for $ C_2, V_2 $. We add $\{ v_1,\cdots, v_k \}$ to both $ C_1$ and $C_2$, so that they generate the same group, and update them using the stabilizer update rules in Lemma \ref{lemma: stab update rule}, then the sets are updated to $ \{v_1,\cdots, v_k\} \cup C_i' \cup \{m\}$, with $ C_i \to C_i' $, $i \in \{1,2\}$. The updated sets still generate the same group as the stabilizer update rules are basis independent. Removing $ \{v_1,\cdots, v_k\}\cup  \{m\}$ from both sets, it is clear that elements in $ C_i'$ are equivalent up to elements in $\langle \{v_1,\cdots, v_k\} \rangle$. $ C_i' $ are exactly the updated $ C_i $ where $ \{v_1,\cdots, v_k\}$ has not been added to the sets. Since $  \langle V_1\rangle = \langle V_2 \rangle $ and $ \{v_1,\cdots, v_k\} \in \langle V_i \rangle $ after the update, an element in $\langle C_1\rangle $ still differs from $ \langle C_2 \rangle$ by elements in $ \langle \{v_1, \cdots, v_k \}\rangle$.
 
    Additionally, the associated stabilizer generators for the updated $C_i$ generate the same group. If $s$ is an associated stabilizer generator such that it is associated with $c_1\in \langle C_1 \rangle$ and $c_2 \in \langle C_2 \rangle$ and $c_1= c_2 \cdot v$ for some $v \in \langle V\rangle$ before the update, then $[v,m]=0$ implies that either both $c_1$ and $c_2$ are in the updated groups $\langle C_1'\rangle$ and $\langle C_2'\rangle$ respectively or they both anti-commute with $m$ and are not in the respective updated groups, in which case $s$ is not in the group generated by the associated stabilizers for both $C_i'$.  
    
    Therefore, after the update, for an element $c_1$ in $ \langle C_1\rangle  \cap \langle V_1 \rangle $ that has an associated stabilizer generator $s$, we can still find $c_2 \in \langle C_2 \rangle$ with the same associated stabilizer $s$, and since $c_1$ and $c_2$ differ by an element $v \in \langle V\rangle$ before the update, we still have $v\in V$ after the update, so $c_2 \in \langle C_2 \rangle  \cap \langle V_2 \rangle $ after the update.
	
	If it is Case 3 for both $ C_1,V_1 $ and $ C_2,V_2 $, the updates only change $ V $.  Since the update for $ V $ is the same as the stabilizer update rules with $ V $ as the set of stabilizers, this update is independent of the choice of basis for $ V $, so after the update, $ \langle V_1\rangle = \langle V_2 \rangle $. For any element $ c_1 $ in $  \langle C_1\rangle $ with associated stabilizer generator $s_c$, $ c_1 \cdot v_{c_1} \in \langle C_2\rangle$ for some $ v_{c_1}  \in \langle v_1, \cdots, v_k  \rangle $ with the same associated stabilizer generator. Then this implies $ [m,v_{c_1}] =0 $, otherwise, $ m $ anti-commutes with some elements in either $ C_1 $ or $ C_2 $. Therefore, $\{v_1, \cdots, v_k \} $ remains in $ \langle V_i \rangle$, so it follows that the statement remains true for Case 3. 
	
	If it is Case 4 for $ C_1,V_1 $ and either Case 3 or 4 for $ C_2, V_2 $, by the same reasoning as in Case 3, after the update by $ m $, $ \langle V_1\rangle = \langle V_2 \rangle $. Suppose an element $ a \in V_1 $ is removed and an element $ b\in V_2 $ is removed. By the stabilizer update rules, some elements in $ C_1 $ are updated with $ a $ and some in $ C_2 $ are updated with $b $, so the group generated by the associated stabilizer generators remains the same for both $C_i$. The update works the same for $ C_1, C_2 $ if we consider updating $G_1= \{a, b, v_1, \cdots, v_k\} \cup C_1 $ and $G_2 = \{b, a, v_1, \cdots, v_k\} \cup C_2  $ as the stabilizer sets. Updating by the measurement $ m $ such that $ a $ is removed from $ G_1 $ and $ b $ is removed from $ G_2 $, we have $ G_1 \to G_1' = \{m, ab, v_1', v_2',\cdots, v_k'\}\cup C_1' $ and $  G_2 \to G_2' = \{m, ab, \tilde{v}_1, \tilde{v}_2,\cdots,\tilde{v}_k\}\cup C_2' $, where $ v_i \in G_1$ is updated to $ v_i' $ and $ v_i \in G_2 $ is updated to $ \tilde{v}_i $. 
	
	First note that $ G_1, G_2 $ generate the same group, so after the update, $ \langle G_1'\rangle =\langle G_2'\rangle $. Further, the updated $ C_i $ is the same as $C_i'$, as both are updated by the same elements $ a $ or $ b $. Since $\{b, a, v_1, \cdots, v_k\} \subset \langle V\rangle $, after the update, both $\{m, ab, v_1', v_2',\cdots, v_k'\}$ and $  \{m, ab, \tilde{v}_1, \tilde{v}_2,\cdots,\tilde{v}_k\} $ are subsets of the updated $ \langle V\rangle $. 
	
	If we remove $\{m, ab, v_1', v_2',\cdots, v_k'\}$ and $  \{m, ab, \tilde{v}_1, \tilde{v}_2,\cdots,\tilde{v}_k\} $ from $ G_1'$ and $ G_2'$ respectively, then the elements in $ C_1'$ differ from those in $ C_2'$ by some elements in $ \{ab\}\cup \{v_1', v_2',\cdots, v_k'\}\cup \{\tilde{v}_1, \tilde{v}_2,\cdots,\tilde{v}_k\} $ which is a subset of the updated $\langle V\rangle$. Therefore, every element in the updated $C_1 $ can still be expressed as elements in the updated $ C_2 $ and elements in the updated $ \langle V\rangle $, and vice versa.  

    One can also show that if $s$ is an associated stabilizer generator for $c_1 \in \langle C_1\rangle$ and $c_2 \in \langle C_2\rangle$ with $c_1 = c_2 \cdot v$ before the update then $s$ is still in the associated stabilizer group after the update and $c_1' = c_2' \cdot v'$, with $v' $ in the updated $ \langle V\rangle$, where $s$ is associated with both $c_1'\in \langle C_1'\rangle,c_2' \in \langle C_2'\rangle$. The proof is as follows:

    Since the stabilizer update rules are basis independent, we have $c_1 \cdot a^{\alpha} \in \langle C_1' \rangle $ and $c_2 \cdot b^{\beta} \in \langle C_2' \rangle $, with $\alpha, \beta \in \{0,1\}$ depending on the commutation relations. We consider 2 cases: $[v,m] =0$ and $\{v,m\}=0$.

    In the first case, $c_1$ and $c_2$ either both commute or anti-commute with $m$, so $\alpha = \beta$. 
    \begin{eqnarray*}
        c_1 &\to& c_1' = c_1 \cdot a^{\alpha}\\
        c_2 &\to& c_2' = c_2 \cdot b^{\alpha}
    \end{eqnarray*}
        
    $c_1'\cdot c_2' = v\cdot (ab)^{\alpha} \in \langle V\rangle$, since $v \in \langle V\rangle$ and $ab \in \langle V\rangle $.

    In the second case, we instead obtain $\alpha = \beta +1 (\mathrm{mod}\: 2)$, but again $c_1'\cdot c_2' = v \cdot a^{\alpha}\cdot b^{\alpha+1} \in \langle V\rangle$, since both $v\cdot b$ or $v \cdot a$ is in $\langle V\rangle$ and $ab \in \langle V\rangle$. 
    
    In all 4 Cases, we show that $ \langle C_1\rangle  \langle V_1\rangle = \langle C_2\rangle \langle V_2\rangle = \langle C\rangle \langle V \rangle $ and $  \langle V_1 \rangle =  \langle V_2 \rangle =  \langle V\rangle $ after the update and we show that $\langle C_1\rangle  \cap \langle V_1\rangle$ gives the same set of unmasked stabilizers as $\langle C_2\rangle  \cap \langle V_2\rangle$. By induction, the set of unmasked stabilizers that one can find using $ \langle C\rangle  \cap \langle V\rangle $ is basis independent and we have the freedom for a change in the generators of $ C $ and $ V $ during any step of the distance algorithm. 
\end{proof}

\begin{lemma}\label{lemma: compute at the end}
    Suppose a stabilizer $ s_u \in \langle S_0\rangle $ is unmasked after some measurements through calculating the generators of $\langle C\rangle \cap \langle V \rangle$, then $ s_u $ can be unmasked through calculating $\langle C\rangle \cap \langle V \rangle$ after all future measurements. In particular, $ s_u $ can be unmasked at the end of the entire measurement sequence. 
\end{lemma}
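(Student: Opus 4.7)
The plan is to prove the lemma by induction on the number of measurements processed after the point at which $s_u$ becomes unmasked. The base case (no further measurements) is trivial. For the inductive step, it suffices to establish the following invariant: if there exists $u \in \langle C \rangle \cap \langle V \rangle$ whose expression over the current generators of $C$ has associated stabilizer $s_u$, then after a single additional measurement $m$ is processed through one of the four cases of the distance algorithm, some $u' \in \langle C' \rangle \cap \langle V' \rangle$ in the updated sets also has associated stabilizer $s_u$. I would then do case analysis on which of the four update rules applies to $m$.

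Cases 1 and 3 are immediate. In Case 1, $C$ is untouched and $m$ is simply appended to $V$, so $u$ itself serves as $u'$. In Case 3, $C$ is untouched and $V$ is updated via the stabilizer update rules; since $u \in \langle C \rangle$ and every $c \in C$ commutes with $m$, $u$ commutes with $m$ and therefore remains in the new $\langle V \rangle$. For Case 2, I would write $u = \prod_i c_i^{a_i}$ in the old $C$; the assumption $u \in \langle V \rangle$ forces the number of $c_i$'s with $a_i = 1$ that anti-commute with $m$ to be even. Substituting each surviving anti-commuting $c_i$ by its updated form $c_i \cdot c_j$, and, when $a_j = 1$, absorbing the removed $c_j$ through the odd parity of the remaining replacements, reproduces the same Pauli operator $u$ in the new $\langle C\rangle$; the additional $s_{c_j}$ factors in the associated stabilizer cancel pairwise modulo $2$, so the associated stabilizer is still $s_u$, and $u \in \langle V_{\mathrm{new}}\rangle = \langle V_{\mathrm{old}}, m\rangle$ is automatic.

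The main obstacle is Case 4, where $m$ anti-commutes simultaneously with elements of $C$ and of $V$. Here I would first invoke Lemma \ref{lemma: basis indept} to work in a basis of $V$ containing the removed generator $m_j$. Expanding $u = \prod_i c_i^{a_i}$ and letting $I_1$ index the anti-commuting $c_i$'s with $a_i = 1$, substitution using $c_i \to c_i \cdot m_j$ (which in Case 4 leaves the associated stabilizer $s_{c_i}$ unchanged) yields $\prod_{i \in I_1}(c_i m_j) \cdot \prod_{i \notin I,\, a_i = 1} c_i = m_j^{|I_1|} \cdot u$, still with associated stabilizer $s_u$. If $u$ commutes with $m$ then $|I_1|$ is even, the product equals $u$, and $u$ remains in $\langle V_{\mathrm{new}}\rangle$ because $u \in \langle V_{\mathrm{old}}\rangle$ and commutes with $m$. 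If $u$ anti-commutes with $m$ then $|I_1|$ is odd, the product equals $u \cdot m_j$, which lies in $\langle C_{\mathrm{new}}\rangle$, and since $u, m_j \in \langle V_{\mathrm{old}}\rangle$ with $u \cdot m_j$ commuting with $m$, it also lies in $\langle V_{\mathrm{new}}\rangle$. In either subcase the associated stabilizer is $s_u$, completing the induction and in particular yielding that $s_u$ is unmasked through $\langle C_l\rangle \cap \langle V_l\rangle$ at the end of the sequence.
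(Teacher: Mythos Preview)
Your proof is correct and complete. It differs from the paper's argument mainly in how the inductive step is organized. The paper invokes Lemma~\ref{lemma: basis indept} once, at the outset, to change generators so that $u$ itself sits as a single generator in both $C$ and $V$; then the analysis collapses to just two cases depending on whether the next measurement commutes or anti-commutes with $u$, with no need to track products or parities. Your argument instead keeps the original generators and does the bookkeeping explicitly, splitting over all four update rules and tracking the parity $|I_1|$ of anti-commuting factors in the expansion $u=\prod_i c_i^{a_i}$. This is more hands-on but has the virtue of being essentially self-contained: your invocation of Lemma~\ref{lemma: basis indept} in Case~4 is in fact unnecessary, since $m_j$ is already a generator of $V$ by the algorithm's specification, so your argument does not really depend on the basis-independence lemma at all. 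The paper's route is shorter; yours makes the mechanism by which $u$ (or $u\cdot m_j$) survives in $\langle C_{\mathrm{new}}\rangle\cap\langle V_{\mathrm{new}}\rangle$ more transparent at the level of the actual update rules.
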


\begin{proof}
    We want to show that finding $\langle C\rangle \cap \langle V \rangle$ can be done at the end of updating the sets $C$ and $V$ with all the measurements from that round. We show this by first assuming that a stabilizer is unmasked in the midst of the measurements, then we show that it will still be unmasked after the next measurement is made, regardless of what it is. 
        
    If the measurements consist of only those that fall under Case 1, then the statement holds trivially. 
	
	Let $ \{m_1, m_2,\cdots, m_q\} $ be a series of measurements. Suppose we were able to obtain the outcome for the stabilizer associated with $ u \in \langle C\rangle \cap \langle V \rangle$ after measuring $ m_i $ with $ 1\leq i\leq q $. Let $ s_u\in \langle S_0 \rangle $ be the unmasked stabilizer associated with $ u $. If the measurements end here, calculating $ \langle C\rangle \cap \langle V \rangle $ gives the syndrome for $s_u$. 
	
	Using Lemma \ref{lemma: basis indept}, the stabilizers for $ C$ and $ V $ are changed so that $ u $ is in the new sets of $ C $ and $ V $. 
   
    If we continue with the rest of the measurements in this series, then for a new measurement $m_{i+1}$, it either commutes or anti-commutes with $u$. In the first case, $u$ remains in $C$ and in $V$, so $u \in \langle C\rangle \cap \langle V \rangle$, and $ s_u $ is unmasked. In the second case, $\{m_{i+1}, u\}=0$, and $ C $ and $ V $ are updated according to Case 4 in Step 2 of the distance algorithm. Then, some element $ m \in V $ that anti-commutes with $ m_{i+1} $ is removed from $ V $. If $ m \neq u $, then $ u $ is updated to $ u \cdot m $ in both sets $ C $ and $ V $, so $ u\cdot m \in \langle C\rangle \cap \langle V \rangle $, giving the syndrome for $ s_u $. If $ m=u $, then the update gives $ m\cdot u = I $ in the set $ C $, so we can infer the outcome $ O(s_u) $ using the fact that $ O((m\cdot u)\cdot s_u = s_u) $ is a known outcome by the proof in Theorem \ref{theorem: known outcome}.  
    
    Therefore, we can calculate $\langle C\rangle \cap \langle V \rangle$ and obtain the syndrome for $ s_u $ after the $ (i+1)^{\mathrm{th}}$ measurement, if $ s_u $ can be unmasked using the first $ i $ measurements. We can conclude from here that we can simply calculate $\langle C\rangle \cap \langle V \rangle$ at the end of updating $ C $ and $ V $ with all the measurements to obtain the full set of unmasked stabilizers from the distance algorithm. 
\end{proof}

%M_i independent of measurement order and basis

\begin{theorem}\label{theorem: sufficient}
	The distance algorithm outputs the entire set of unmasked stabilizers for $\langle S_0\rangle $ together with their syndromes. 
\end{theorem}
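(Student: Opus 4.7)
The plan is to establish the converse to Theorem \ref{theorem: known outcome}: every unmasked stabilizer of $\langle S_0 \rangle$ is captured in the output set $U$ together with its syndrome. Combined with Theorem \ref{theorem: known outcome}, this gives that $U$ generates exactly the unmasked stabilizer subgroup. By Lemma \ref{lemma: compute at the end}, it suffices to examine $\langle C \rangle \cap \langle V \rangle$ only at the end of the measurement sequence, so the task reduces to showing that after all $l$ rounds, every unmasked $s \in \langle S_0 \rangle$ has an element of $\langle C_l \rangle \cap \langle V_l \rangle$ whose associated stabilizer is $s$.

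I would proceed by induction on the number $k$ of measurements processed, maintaining three joint invariants. First, that every $v \in \langle V_k \rangle$ is a Pauli operator whose outcome on the initial state is a determined function of the observed outcomes $O(m_1), \ldots, O(m_k)$, and conversely any Pauli with such a deterministic outcome lies in $\langle V_k \rangle$. Second, that every $c \in C_k$ with associated stabilizer $s_c$ satisfies $c \cdot s_c \in \langle V_k \rangle$, with the algorithm recording its outcome correctly. Third, that $\langle C_k \cup V_k \rangle$ equals the ISG after round $k$. Once these are in place, completeness follows: for an unmasked $s$, the second and first invariants jointly furnish a Pauli $v \in \langle C_l \rangle \cap \langle V_l \rangle$ whose associated stabilizer is $s$, and Algorithm \ref{alg: unmasked stabilizers} then recovers $s$ with its syndrome $O(s) = O(v) \cdot O(v \cdot s)$.

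The inductive step verifies that each of the four update cases of the algorithm preserves the invariants. Cases 1 and 3 affect only $V$ and are essentially applications of the stabilizer update rules; their verification leverages Lemma \ref{lemma: basis indept} so that the specific choice of which anticommuting generator to remove is inconsequential. Case 2 removes one $c_j$ from $C$, multiplies it into the remaining anticommuting $c_i$'s, and adds $m$ to $V$; the associated-stabilizer bookkeeping preserves the second invariant, and the third is preserved because the ISG update replaces a generator anticommuting with $m$ by $m$ itself. Case 4 combines the two mechanisms, and here basis independence is again critical since the choice of $m_j \in V$ absorbed into the anticommuting elements of $C$ is arbitrary.

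The main obstacle I anticipate lies in the nontrivial direction of the first invariant, namely showing that if $O(P)$ is deducible from the measurement outcomes after round $k$, then $P$ in fact lies in $\langle V_k \rangle$. I would argue by contrapositive: assuming $P \notin \langle V_k \rangle$, one constructs two stabilizer states, each initialized within the codespace of $S_0$, that yield identical distributions over the measurement outcomes through round $k$ but disagree on the value of $P$ on the initial state. Such a construction draws on the freedom in the $+1$ eigenspace of $S_0$ and on the random outcomes of measurements anticommuting with previously accumulated information, and it demonstrates that $O(P)$ cannot be a deterministic function of the measurement outcomes, contradicting the assumption and closing the proof.
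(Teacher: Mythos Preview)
Your second invariant—that $c \cdot s_c \in \langle V_k \rangle$ for every $c \in C_k$—is false, and this breaks the argument. Take $S_0 = \{Z_1, Z_3\}$, measure $m_1 = X_2$ (Case~1), then $m_2 = Y_1 Z_2$ (Case~4: it anticommutes with both $Z_1 \in C$ and $X_2 \in V$). The algorithm removes $X_2$ from $V$, adds $Y_1 Z_2$, and updates $Z_1 \to Z_1 X_2$ in $C$. Now $C_2 = \{Z_1 X_2, Z_3\}$, $V_2 = \{Y_1 Z_2\}$, and for $c = Z_1 X_2$ with $s_c = Z_1$ one has $c \cdot s_c = X_2 \notin \langle V_2 \rangle$. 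What Theorem~\ref{theorem: known outcome} actually establishes is only that $O(c \cdot s_c)$ is a recorded product of past outcomes; Case~4 deliberately discards $m_j$ from $V$ while absorbing it into $C$, so membership in $\langle V_k \rangle$ is not preserved. Your first invariant is also shaky: ``outcome on the initial state'' is ill-defined for a generic $v \in \langle V_k \rangle$ (which need not commute with $\langle S_0 \rangle$), and if you instead mean outcome on the \emph{current} state, then the hard direction you flag is essentially the theorem itself—an element of $\langle C_k \rangle$ has determined current outcome exactly when its associated stabilizer is unmasked, which is what you are trying to prove lands in $\langle C_k \rangle \cap \langle V_k \rangle$.

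The paper's proof avoids these issues by a direct induction on the number $k$ of measurements rather than by characterizing $\langle V_k \rangle$. If $s_0$ becomes unmasked at step $k{+}1$, then $m_{k+1}$ must have deterministic outcome, hence $m_{k+1} \in \langle C_k \cup V_k \rangle$; writing $m_{k+1} = c_{k+1} \cdot v_{k+1}$ shows that $c_{k+1} \in \langle C \rangle \cap \langle V \rangle$ after the measurement, unmasking its associated stabilizer $s_{k+1}$. Then $s_0 \cdot s_{k+1}$ was already unmasked by step $k$, and the inductive hypothesis together with Lemma~\ref{lemma: compute at the end} furnishes the required element of $\langle C \rangle \cap \langle V \rangle$ with associated stabilizer $s_0$.
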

\begin{proof}
	By induction, suppose all the unmasked stabilizers that can be obtained in the first $k$ measurements can be found using the distance algorithm in Subsection \ref{subsec: algorithm}. We want to show that this also holds for the first $k+1$ measurements. 
	
	Suppose $s_0\in S_0$ is a stabilizer generator whose syndrome can be obtained using the first $k+1$ measurements, so $ O(s_0)$ is given by the outcome of some of the first $ k $ measurements and $ O(m_{k+1})$, where $O(m_i) \in {\pm 1}$ is the outcome of the measurement $m_i$. 
	
	Since $m_{k+1}$ must be measured last, the outcome of $ m_{k+1} $ must be deterministic. We know that $s_0$ is in the stabilizer set in round 0, and that its outcome is pre-determined, if no error occurs. This means that we know in the situation of no error on the code what the outcome of $s_0$ will be if we measure it at round 0. But because errors may have occurred, the rounds of measurements allow us to compare the two outcomes for $ s_0 $, to determine the errors that have occurred at round 0. Thus, the outcome of $m_{k+1}$ must be deterministic in nature if we have fixed the outcomes for $m_i$ with $i\leq k$ and errors $E$ on the code. 
	
	In other words, $m_{k+1}$ is in $\langle C\cup V \rangle$, prior to the $(k+1)^{\mathrm{th}}$ measurements. 
	
		\begin{eqnarray}
		&& m_{k+1} \in \langle S \rangle = \langle C \cup V\rangle\\
		\implies&& m_{k+1} = c_{k+1}\cdot v_{k+1} \\
		&&\mathrm{where}\; c_{k+1} \in \langle C \rangle , v_{k+1} \in \langle V \rangle
	\end{eqnarray}
	
	The outcome for $c_{k+1}$ can be obtained from measuring $ m_{k+1} $. $c_{k+1}$ must not be identity operator, otherwise $ O(s_0) $ can be obtained from the first $ k $ measurements. Let $ s_{k+1} $ be the stabilizer associated with $ c_{k+1} $, then $s_{k+1} \in \langle S_0 \rangle $ is an unmasked stabilizer generator. Since $ c_{k+1} =v_{k+1}\cdot m_{k+1} $, $ c_{k+1}\in \langle V\rangle $ too.
	
	 If $s_{k+1}  =  s_0$, then $ O(s_0) $ is found. For the case where $ s_{k+1} \neq s_0 $, we first note that the outcomes of $ s_{k+1} $ and $ s_0 $ are independent of each other, so to find $ s_{0} $ using $ s_{k+1} $, the only way is that $ s_0\cdot s_{k+1} $ has a known outcome using the first $ k $ measurements. 
	 
	 Since $ s_0\cdot s_{k+1} $ is unmasked in the first $ k $ measurements, by the induction hypothesis, $  s_0\cdot s_{k+1} $ can be found using the distance algorithm after some measurements. By Lemma \ref{lemma: compute at the end}, we know that if a stabilizer is unmasked by computing $ \langle C \rangle \cap \langle V\rangle  $ at some point, then it can be obtained by computing $ \langle C \rangle \cap \langle V\rangle  $ after all future measurements. Therefore, there exists some element $ \tilde{c} $ in $ \langle C \rangle$, such that it has the associated stabilizer  $ s_0\cdot s_{k+1} $, and $ \tilde{c} $ is in $ \langle C \rangle \cap \langle V\rangle$ after $ k+1 $ measurements.
	 
	 Then $ \tilde{c}\cdot c_{k+1} $ is an element in $ \langle C \rangle \cap \langle V\rangle$ with the stabilizer $ s_0 $ associated with it. Thus, we showed by induction that $ s_0 $ can be unmasked using the distance algorithm, if it is an unmasked stabilizer. 
	
\end{proof}

The following proposition shows that a stabilizer cannot be measured if it is removed from $C$ because of the update rules in Case 2 of the distance algorithm. 

\begin{proposition}
    \label{statement4} After a stabilizer $ c $ is removed from $C$, we permanently masked the syndrome of the stabilizer $s_c \in \langle S_0\rangle$ associated with $ c $ if it has not been unmasked from previous measurements.
\end{proposition}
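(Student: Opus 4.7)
The plan is to introduce the \emph{associated stabilizer group} $\mathcal{A}(C) \subseteq \langle S_0\rangle$, defined as the subgroup generated by the associated stabilizers of the elements currently in $C$, and to show that once $s_c$ is excised from $\mathcal{A}(C)$ by Case 2, it can never be reinstated by any further measurements, including measurements added beyond the original sequence. The starting observation is that, by induction on the rounds, the associated stabilizers of the elements of $C$ always form an independent set in $\langle S_0\rangle$: this holds at round $0$ where $C = S_0$, and is preserved under each of the four cases of Step 2 of the algorithm. Cases 1 and 3 do not touch $C$ at all; Case 4 updates only the Pauli part of each element and leaves the associated stabilizer unchanged; for Case 2 one verifies by a direct calculation that the new associated stabilizers $\{t_k : k \notin I\cup\{j\}\}\cup\{t_i \cdot t_j : i \in I\}$ remain independent whenever the old ones were, where $I$ is the set of indices of the anti-commuting $c_i$'s with $i\neq j$.

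Next, I would use the same linear-algebra calculation to show that Case 2 strictly removes $s_c$ from the associated stabilizer group. With $t_j = s_c$, assuming $s_c = \sum_{k \notin I\cup\{j\}} \alpha_k\, t_k + \sum_{i \in I} \beta_i\, (t_i + t_j)$ over $\mathbb{F}_2$ and collecting coefficients in the independent basis $\{t_1, \dots, t_{|C|}\}$ forces $\beta_i = 0$ for all $i \in I$ together with $\sum_i \beta_i = 1 \pmod{2}$, a contradiction. Hence $s_c \notin \mathcal{A}(C_{\mathrm{after}})$. Combined with the observation that no case of the algorithm ever enlarges $\mathcal{A}(C)$, this yields the monotonicity statement: once $s_c$ leaves $\mathcal{A}(C)$, it stays out for every subsequent round, and crucially also for any extension of the measurement sequence, since the update rules depend only on the current $C$, $V$ and the incoming measurement.

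The final step is to connect $\mathcal{A}(C)$ with unmasking. By Theorem \ref{theorem: known outcome} together with algorithm \ref{alg: unmasked stabilizers}, every stabilizer added to $U$ appears as the associated stabilizer of some $u \in \langle C\rangle \cap \langle V\rangle$, so it lies in $\mathcal{A}(C)$ at the round in which it is unmasked; by Lemma \ref{lemma: compute at the end} this round may be taken arbitrarily late. Thus, if $s_c \notin \langle U\rangle$ prior to Case 2, as assumed in the hypothesis, and $s_c \notin \mathcal{A}(C_{\mathrm{after}})$, then $s_c$ cannot appear in $U$ after any further measurements added to the sequence, which is precisely what it means for $s_c$ to be permanently masked.

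The main obstacle is verifying independence of the associated stabilizers as a genuine invariant of $C$. The subtlety, already flagged in the remark following the algorithm, is that two elements of $C$ can share the same Pauli while carrying different associated stabilizers, so the Pauli part of $\langle C\rangle$ and the associated-stabilizer part need to be tracked separately. The clean way to handle this is to treat the associated stabilizer as an auxiliary label attached to each element of $C$; the linear-algebra argument above then applies uniformly to this label, independently of what happens in the Pauli sector.
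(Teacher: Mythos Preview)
Your approach is essentially the same as the paper's, and in fact you supply more detail at the crucial step: the paper simply asserts that ``any stabilizer generator that is not associated with an element in $\langle C\rangle$ at this point will still not be associated with any element after any additional sequence of measurements,'' whereas you prove this explicitly via the monotonicity of $\mathcal{A}(C)$ and the linear-algebra check that Case~2 genuinely ejects $s_c$. That part is fine and arguably cleaner than the paper.

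There is, however, a gap in your final sentence. You conclude that ``$s_c$ cannot appear in $U$ after any further measurements added to the sequence, which is precisely what it means for $s_c$ to be permanently masked.'' That identification is not the definition: \emph{permanently masked} means the outcome of $s_c$ can never be obtained by any measurement protocol, not merely that the algorithm fails to put $s_c$ into $U$. To pass from ``the algorithm will never output $s_c$'' to ``no measurement sequence can reveal $O(s_c)$'' you need the completeness result, Theorem~\ref{theorem: sufficient}, which says that every stabilizer whose outcome is in principle obtainable is in fact found by the algorithm. The paper invokes this theorem explicitly at the corresponding point; you should do the same. Citing Theorem~\ref{theorem: known outcome} alone is not enough, since that only gives the soundness direction (everything in $U$ is genuinely unmasked), not the completeness direction you need here.
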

\begin{proof}
	Suppose $ c\in C $ is removed after the $ k^{\mathrm{th}}$ measurement and $ s_c $ associated with $ c $ has not been unmasked by the distance algorithm. 
		
	Since Theorem \ref{theorem: known outcome} and Theorem \ref{theorem: sufficient} demonstrate that every unmasked stabilizer from the first $ k$ measurements can be obtained using the distance algorithm, $ s_c $ is either temporarily or permanently masked by the measurement sequence. 
	
	Applying Lemma \ref{lemma: compute at the end}, after the $ k^{\mathrm{th}} $ measurement, the set of stabilizers that can be unmasked by the distance algorithm are those associated with the generators of $ \langle C\rangle \cap \langle V \rangle$. If the measurement sequence ends here, then any generator that has not been unmasked but is associated with some stabilizer in $ \langle C\rangle  $ is temporarily masked. This is because we can measure and unmask it in the future, for example, by directly measuring the generator in the next round. Further, using the update rules in the distance algorithm, any stabilizer generator that is not associated with an element in $ \langle C\rangle $ at this point will still not be associated with any element after any additional sequence of measurements. Thus, it will not be unmasked by the distance algorithm. Applying Theorem \ref{theorem: sufficient} again, this implies that the generator cannot be unmasked by all future measurements and therefore is permanently masked.
	
	Since $ s_c $ is not associated with any generator in $ \langle C\rangle $ after the $ k^{\mathrm{th}}$ measurement, it is permanently masked. 
	
	From here, we can infer that the set of stabilizers from $ \langle S_0 \rangle $ that we can unmask from the given sequence and further measurements can only be the set $ \langle T \rangle \langle U \rangle$.  Therefore, the permanently masked stabilizers $ P $ is given by $ \langle S_0 \rangle \backslash \langle T\rangle \langle U\rangle $. This precisely corresponds to the stabilizers that are removed from $ C $ due to Case 2 in Step 2 of the distance algorithm with associated stabilizers that have not been unmasked yet. 

\end{proof}

\section{Initialization of a Floquet Code}\label{sec:init}

A special case of dynamical code is a Floquet code where the sequence of measurements is periodic. Each cycle in a Floquet code consists of measuring the same set of measurements. To understand the behavior of Floquet codes, it is important to learn how the stabilizer groups may evolve from cycle to cycle. Here, we show several interesting properties of Floquet codes including its code initialization and masking properties.

In this section, we do not consider the effects of errors on initialization.  We instead assume that the error-correcting properties of the Floquet code to initialize and maintain the code space.  Instead, we focus on the question of how the code space evolves under the sequence of measurements defining the code and how long it takes to reach a periodically repeating state.

The proofs for some of the results shown here become exceedingly simple by making use of the results and techniques developed for the distance algorithm. 

Let $ M $ be a measurement sequence, consisting of $ l $ measurements in the ordered set $ \{m_1,m_2,\cdots, m_l\} $. We define one cycle as performing the entire measurement sequence $ M $ once. Furthermore, we assume that it is an empty stabilizer group at the start of the first measurement cycle. Let the set of stabilizer generators after measuring $ m_i $ in the $ j^{\mathrm{th}} $ measurement cycle be denoted by $ S_j^{(i)} $. 

The following result holds for all cycles $ j $:
\begin{theorem} \label{theorem: subset_floquet}
The instantaneous stabilizer generators after measuring $ m_i $ in the previous cycle is a subset of the instantaneous stabilizer generators after measuring $ m_i $ in the current cycle, that is $ S_j^{(i)} \subseteq  S_{j+1}^{(i)} $. 
\end{theorem}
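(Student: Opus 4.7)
The strategy is to prove a \emph{monotonicity} property for measurement updates and then induct on a globally indexed measurement step, exploiting the periodicity of the Floquet schedule.

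The key intermediate step is a Monotonicity Lemma: if $G_1 \subseteq G_2$ are two Pauli subgroups (taken modulo signs) and both are updated by measuring the same Pauli $m$ according to the stabilizer update rules of Lemma~\ref{lemma: stab update rule}, then the updated groups satisfy $G_1' \subseteq G_2'$. I would prove this by first observing that all three cases of Lemma~\ref{lemma: stab update rule} can be written uniformly as $G' = \langle m,\, G \cap Z(m) \rangle$, where $Z(m) \subseteq \mathcal{P}_n$ is the centralizer of $m$. Indeed, if $\pm m \in G$ this collapses to $G' = G$; if $m$ commutes with every element of $G$ but $\pm m \notin G$ it becomes $\langle G, m \rangle$; and in the anticommuting case the products $s_1 s_i$ produced by the replacement rule all lie in $G \cap Z(m)$, so the generated group is again $\langle m, G \cap Z(m) \rangle$. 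Since $G_1 \cap Z(m) \subseteq G_2 \cap Z(m)$, the lemma is immediate.

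I would then reindex the measurements globally, writing $G_t$ for the ISG after the $t$-th cumulative measurement, so that $G_0$ is the empty group assumed at the start of cycle $1$ and $G_{(j-1) l + i} = \langle S_j^{(i)} \rangle$. Periodicity forces $m_{t+l} = m_t$. I would prove $G_t \subseteq G_{t+l}$ for every $t \geq 0$ by induction on $t$. The base case $t = 0$ is $\{I\} \subseteq G_l$. For the inductive step, $G_{t+1}$ comes from $G_t$ by the stabilizer update for $m_{t+1}$, while $G_{t+l+1}$ comes from $G_{t+l}$ by the update for $m_{t+l+1} = m_{t+1}$, so the Monotonicity Lemma upgrades $G_t \subseteq G_{t+l}$ to $G_{t+1} \subseteq G_{t+l+1}$. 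Specializing to $t = (j-1) l + i$ recovers $\langle S_j^{(i)} \rangle \subseteq \langle S_{j+1}^{(i)} \rangle$, which is the claim.

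The main obstacle to be careful about is the role of the measurement signs. The same Pauli $m_{t+1}$ measured at times $t+1$ and $t+l+1$ acts on physically distinct states (the second being the first further evolved by an extra period of measurements), so the two measurement outcomes generally disagree and the signed generator sets need not literally satisfy a set inclusion. The induction and the Monotonicity Lemma only go through at the level of unsigned Pauli subgroups, so I would state the theorem (and read the notation $S_j^{(i)} \subseteq S_{j+1}^{(i)}$) as asserting the inclusion $\langle S_j^{(i)} \rangle \subseteq \langle S_{j+1}^{(i)} \rangle$ of the underlying subgroups of $\mathcal{P}_n$ modulo phases. No other step of the argument depends on the specific signs, so once this convention is fixed the proof goes through cleanly.
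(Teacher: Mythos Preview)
Your proof is correct, and it takes a somewhat different route from the paper. The paper also argues by induction, but instead of isolating an explicit Monotonicity Lemma with the closed form $G' = \langle m,\, G \cap Z(m)\rangle$, it packages the same monotonicity through the $C$/$V$ machinery of the distance algorithm: one runs step~2 of the algorithm on two parallel copies with identical $V$ (equal to the smaller ISG) and with $C$ set respectively to $\emptyset$ and to the difference $S_{j+1}^{(i)} \setminus S_j^{(i)}$; since the $V$-parts evolve identically and $C \cup V$ always generates the current ISG, the containment persists through every measurement. The paper's induction is on the cycle index and invokes the algorithm's update rules, whereas yours is on a globally indexed single measurement step and invokes the centralizer formula directly.

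What each approach buys: the paper's version reuses tools already developed in the preceding section, so no new lemma is needed and the reader sees the distance-algorithm framework doing additional work. Your version is more self-contained and arguably cleaner---the uniform description $G' = \langle m, G \cap Z(m)\rangle$ makes the monotonicity a one-line consequence of $G_1 \cap Z(m) \subseteq G_2 \cap Z(m)$, and the global-time induction avoids the separate base/inductive cycle analysis. Your explicit discussion of the sign convention is also a useful clarification that the paper leaves implicit (it adopts unsigned Paulis throughout the algorithm section, so the inclusion is indeed meant at the level of groups modulo phases).
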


\begin{proof}
We prove this by induction. 
	
We first show that this is true for the first two cycles of measurements. It holds that $ S_1^{(1)}$ is a subset of $ S_2^{(1)} $, since $S_1^{(1)}$ consists of a single measurement $ m_1 $, and after measuring $ m_1 $ in the second cycle, $ m_1 $ must be in $ S_2^{(1)} $. To show that after each measurement, the updated $S_1^{(1)}$ will still be a subset of the updated $ S_2^{(1)} $, we apply step 2 of the Distance Algorithm with $ C$ = $S_2^{(1)}\backslash S_1^{(1)}$ and $ V= S_1^{(1)}$, with the measurement sequence given by $ M\backslash\{m_1\}$. After the $ i^{\mathrm{th}} $ measurement, $ V$ is updated to $ S_1^{(i)} $. Thus, $ S_1^{(i)} \subset V^{(i)} \cup C^{(i)}= S_2^{(i)}$, where $ V^{(i)} $ denotes the updated $ V $ and $ C^{(i)} $ denotes the updated $ C $, after the $ i^{\mathrm{th}} $ measurement.
	
Suppose this is true for the first $ k $ cycles, we want to show that it is true for the first $ k+1 $ cycles. By the induction hypothesis, $ S_{j-1}^{(i)} $ is a subset of $ S_j^{(i)} $, $\forall i \leq l$, $ \forall j<k-1 $. We want to show that $ S_k^{(i)} \subseteq S_{k+1}^{(i)},  \forall i $. Since $ S_{k-1}^{(l)} $ is a subset of $ S_{k}^{(l)} $, we can run step 2 of the distance algorithm to two sets of $ C,V $, initializing the first set as $ C_{a_0} = \{\}, V_{a_0} = S_{k-1}^{l} $, and the second set as $ C_{ b_0}  = S_{k}^{(l)}\backslash S_{k-1}^{(l)}, V_{b_0} = S_{k-1}^{(l)} $. We have $ V_{a_0}= V_{b_0}$, but $   C_{a_0} \subseteq C_{ b_0}  $ and $C_{b_0}\cup V_{b_0} = S_{k}^{(l)} $. Running the distance algorithm with the set of measurements $ M $, it is easy to see that $ V_{a_0} = V_{b_0} $ after all $ i $ measurements, and since $ C_{a_0} $ remains empty, $ V_{a_0} \cup C_{a_0} \subseteq V_{b_0} \cup C_{b_0} $. Therefore, $ S_{k}^{(i)} \subseteq S_{k+1}^{(i)} , \forall i $. 
	
\end{proof}

\begin{theorem}\label{theorem: new_floquet}
If there are $ k $ new generators from $  \langle S_{j+1}^{(0)} \rangle / \langle S_{j }^{(0)}\rangle$, then there can be at most $ k $ more new stabilizer generators added in the next cycle. Furthermore, the number of generators can only increase due to a measurement $ m_i $ in the $ (j+1)^{\mathrm{th}} $ cycle if in the $ j^{\mathrm{th}} $ cycle, the number of generators in the stabilizer group increased by 1 after measuring $ m_i $. 
\end{theorem}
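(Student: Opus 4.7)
The plan is to prove the stronger per-measurement claim first and then obtain the aggregate bound by summing. By Lemma \ref{lemma: stab update rule}, a single measurement either leaves the number of independent stabilizer generators unchanged (Cases 1 and 2) or increases it by exactly one (Case 3); the increment-by-one occurs precisely when the measurement commutes with all current generators and lies outside the group they generate. Writing $\Delta_c(i) \in \{0,1\}$ for the increment contributed by measuring $m_i$ in cycle $c$, and using the identification $S_{c+1}^{(0)} = S_c^{(l)}$, a simple telescoping gives
\begin{equation*}
\dim\bigl(\langle S_{c+1}^{(0)}\rangle / \langle S_c^{(0)}\rangle\bigr) = \sum_{i=1}^{l} \Delta_c(i).
\end{equation*}
Thus the aggregate bound of at most $k$ new generators in the following cycle reduces to the pointwise inequality $\Delta_{j+1}(i) \leq \Delta_j(i)$, which is itself the second assertion of the theorem.

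The key step is verifying this pointwise inequality. Suppose $\Delta_{j+1}(i) = 1$; equivalently, $m_i$ commutes with every element of $S_{j+1}^{(i-1)}$ and $m_i \notin \langle S_{j+1}^{(i-1)}\rangle$. By Theorem \ref{theorem: subset_floquet} applied at the intermediate step $i-1$, we have $S_j^{(i-1)} \subseteq S_{j+1}^{(i-1)}$ and hence $\langle S_j^{(i-1)}\rangle \subseteq \langle S_{j+1}^{(i-1)}\rangle$. Both Case 3 hypotheses descend to the smaller set: $m_i$ commutes with everything in $S_j^{(i-1)}$ (since it commutes with a superset), and $m_i \notin \langle S_j^{(i-1)}\rangle$ (otherwise it would lie in the larger group, a contradiction). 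So $m_i$ triggers Case 3 in cycle $j$ as well, giving $\Delta_j(i) = 1$. This is exactly the content of the second assertion.

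I do not anticipate any substantive obstacle: the argument reduces to a direct application of the cycle-to-cycle monotonic inclusion from Theorem \ref{theorem: subset_floquet} together with the commuting/not-in-group criterion that defines Case 3. The only care required is the bookkeeping identification $S_{c+1}^{(0)} = S_c^{(l)}$ and the observation that Theorem \ref{theorem: subset_floquet} is stated in intermediate form $S_j^{(i)} \subseteq S_{j+1}^{(i)}$ for every $i$, so it applies at each step of the measurement sequence and not just at the cycle boundary.
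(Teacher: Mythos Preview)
Your proposal is correct and follows essentially the same approach as the paper: both arguments hinge on the observation that the Case~3 condition (commutes with all current generators, not in the generated group) descends from $S_{j+1}^{(i-1)}$ to the subset $S_j^{(i-1)}$ via Theorem~\ref{theorem: subset_floquet}. You have simply made the telescoping over the per-measurement increments $\Delta_c(i)$ explicit, whereas the paper leaves that bookkeeping implicit.
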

	
\begin{proof}
	
The condition for the set of stabilizer generators to increase with a measurement $ m $ is $ [m, s] =0, \forall s \in  S$ and $  m \notin \langle S \rangle $.  Since $ S_{j }^{(i)}\subseteq S_{j+1}^{(i)}$, if the condition holds for $S_{j+1}^{(i)}  $, then it must also hold for $ S_{j }^{(i)} $, thus an increase in the number of stabilizer generators can only happen due to the set of measurements that causes the number of generators to increase in the previous cycle. 
\end{proof}

Theorem \ref{theorem: new_floquet} implies that in the worst case scenario, it takes $n-1$ cycles to initialize a Floquet code with $n$ stabilizer generators. 

The following theorem shows that the worst case scenario exists for some measurement sequence:

\begin{theorem}
There exists a measurement sequence for a Floquet code that takes $ n-1 $ cycles to fully initialize for a code with $ n $ stabilizer generators. 
\end{theorem}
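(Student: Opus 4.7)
The plan is to give an explicit construction. Since $k_j \coloneqq |S_j^{(0)}| - |S_{j-1}^{(0)}|$ is non-increasing in $j$ by Theorem~\ref{theorem: new_floquet} and must sum to $n$, a sequence that exhibits the pattern $(k_1, k_2, \ldots, k_{n-1}) = (2, 1, 1, \ldots, 1)$ will fully initialize in exactly $n-1$ cycles. I would therefore aim to design a measurement sequence whose per-cycle generator additions realize this pattern.

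My candidate construction extends the $n=3$ example. On $n$ qubits consider the sequence consisting of a seed measurement $m_1 = X_1$, followed by a cascade $m_i = Z_1 X_i$ for $i = 2, \ldots, n$, and then a block of ``disrupting'' multi-qubit measurements (e.g.\ weight-two $Z_iZ_{i+1}$ operators arranged appropriately) whose role is to reduce the ISG back to two generators by the end of cycle $1$ while leaving $X_1$ commuting with the cycle-end ISG and absent from it. For $n=3$ one can check directly that $(X_1, Z_1X_2, Z_1X_3, Z_2Z_3)$ has $k_1 = 2$ (cycle~$1$ ends with $\langle Z_2Z_3, X_2X_3\rangle$) and $k_2 = 1$ ($X_1$ commutes with this ISG and is not in it, so it is added at the start of cycle~$2$). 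I would verify an analogous behaviour for general $n$ by tracing the stabilizer update rules from Lemma~\ref{lemma: stab update rule}.

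The verification then proceeds in three steps. First, I would run step~2 of the distance-algorithm-style trace for cycle~$1$ on the proposed sequence and count the two positions at which the generator count strictly increases, establishing $k_1 = 2$. Second, I would use the monotonicity clause of Theorem~\ref{theorem: new_floquet}, which restricts the set of positions that may contribute to $k_{j+1}$ to those that already contributed to $k_j$, to show that at most one measurement (the persistent seed $m_1 = X_1$) can add a new generator in every subsequent cycle. Third, by induction on $j \geq 2$, I would show that the ISG at the start of cycle~$j$ commutes with $X_1$ but does not contain it, so that $X_1$ contributes a genuinely new generator at each such cycle, producing exactly $k_j = 1$ until the count reaches $n$.

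The hard part is designing the disrupting block of measurements so that all three requirements hold simultaneously for arbitrary $n$: (i) cycle~$1$ yields $k_1 = 2$ and not $k_1 > 2$, which requires enough anti-commutation to neutralise all but two of the cascade additions; (ii) the cycle-end ISG commutes with $X_1$ and omits $X_1$, so that $X_1$ can keep adding across cycles; and (iii) the eventual ISG still has full rank $n$, so that the sequence really converges to an $n$-generator code rather than stabilising at fewer generators. Achieving (i) pushes the disruption to be aggressive while (iii) requires it to be gentle; resolving this tension is the principal design obstacle, and I expect the explicit block to need size growing with $n$ (for instance a chain of $ZZ$-type measurements coupled to an auxiliary partner that repopulates qubit $n$'s degree of freedom one cycle at a time).
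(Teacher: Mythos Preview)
Your target pattern $(k_1,\ldots,k_{n-1})=(2,1,\ldots,1)$ is exactly what the paper realises, and your $n=3$ sequence checks out. The gap is that you never supply the general-$n$ construction: the ``disrupting block'' is left unspecified, and the tension you yourself flag between requirements (i) and (iii) is precisely the obstacle you have not resolved. A proposal that ends with ``the hard part is designing the disrupting block'' is a plan, not a proof; nothing you have written guarantees that a block satisfying all three conditions simultaneously exists for every $n$, and the inductive step you sketch (``by induction on $j\geq 2$, $X_1$ commutes with the cycle-start ISG but is not in it'') cannot be carried out until that block is fixed.

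The paper sidesteps this search by working recursively and abstractly rather than with fixed concrete Paulis. It writes a base $n=3$ sequence in terms of abstract commuting stabilizers $s_i$ and their destabilizers $d_{s_i}$, arranged so that one full cycle implements the shift $s_1\mapsto s_1$, $s_i\mapsto s_{i+1}$ on the generators present so far. To pass from $n-1$ to $n$ generators, it \emph{inserts} a short subsequence---obtained by a variable substitution in the same base-case template---just before the final measurement of the existing cycle; the insert effects $s_{n-1}\mapsto s_n$ while commuting with all already-stabilised generators, so earlier cycles are untouched and each subsequent cycle pushes the newest generator up by one index. This recursive ``shift'' insertion is the missing ingredient in your argument: it replaces the ad hoc search for a disrupting block meeting (i)--(iii) with a systematic one-step extension, and it is what turns the $n=3$ base case into a proof for all $n$.
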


\begin{proof}
We show this by giving a recursive construction of measurement sequence such that it takes $ n-1 $ cycles to fully initialize the code with $ n $ stabilizers. 

Let $ \{s_i: 1 \leq i\leq n\} $ denote a set of independent stabilizer generators for the fully initialized code. Let $ d_{s_i} $ denote the destabilizer for $ s_i $. The construction works in a way that $ s_i$ is added to the set of stabilizer generators after $ i-1 $ cycles and each cycle increases the number of generators by 1 even though the same set of measurements is made in each cycle. The rest of the generators are updated in a way that $ s_1 $ is mapped to $ s_1 $ and $ s_i $ is mapped to $ s_{i+1} $ after one cycle. The mapping is simply a way to keep track of how the stabilizer generators change with measurements as shown in Figure \ref{fig: init_floquet} for illustration. 
	
	\begin{figure}[ht]
		\centering
		\includegraphics[width= 0.75 \linewidth]{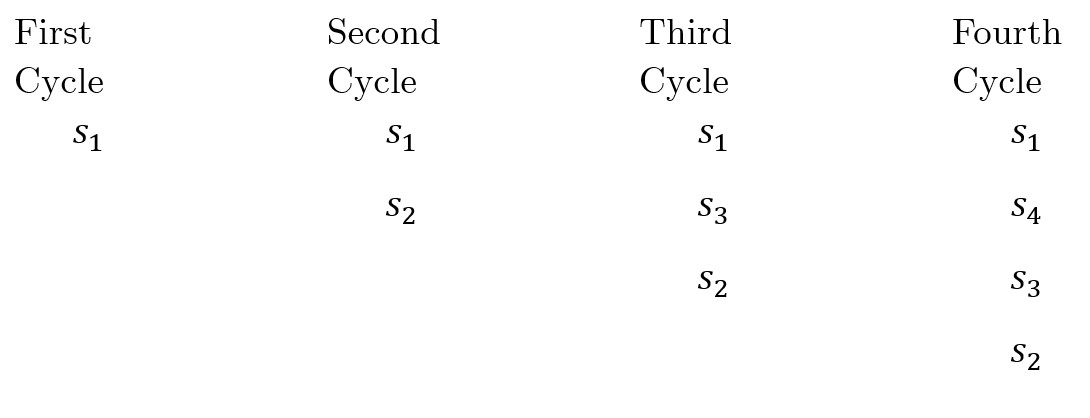} 
		
		\caption{The figure illustrates the pattern in updating the stabilizers with a set of measurements. Each column gives the  set of stabilizer generators at the beginning of the cycle.}
		\label{fig: init_floquet}
		
	\end{figure}

First, we construct a measurement sequence to initialize a Floquet code with three stabilizer generators. Consider the following sequence of measurements: 
\begin{equation*}
    M= \{ s_1,d_{s_1} d_{s_2}d_{s_3}, s_1 d_{s_2}d_{s_3}, d_{s_1}s_2 s_3, s_2\} 
\end{equation*}
Repeating this sequence of measurements gives the following sets of stabilizer generators at the end of each cycle, with $ S_i^{(0)}  $ indicating that it is a set of stabilizer generators after the $ i^{\mathrm{th}} $ cycle of measurements: $ S_1^{(0)} = \{s_1, s_2\}, S_2^{(0)} = \{s_1,s_2,s_3\}, S_3^{(0)}=\{s_1,s_2,s_3\},S_k^{(0)} = S_3^{(0)} $ for $ k>3 $. 
	To construct the sequence for 4 stabilizers, the idea is to insert a measurement sequence right before measuring $ s_2 $ in $ M $.  The set of stabilizer generators right before measuring $ s_2 $ is given by $ \{d_{s_1}s_2 s_3, d_{s_1} s_2, d_{s_1}s_3\} $ , which one can verify by the stabilizer update rule. The goal is to insert a sequence to perform the changes as illustrated in Figure  \ref{fig:seq_insert_floquet}.
	
		\begin{figure}[ht]
		\centering
		\includegraphics[width= 0.8 \linewidth]{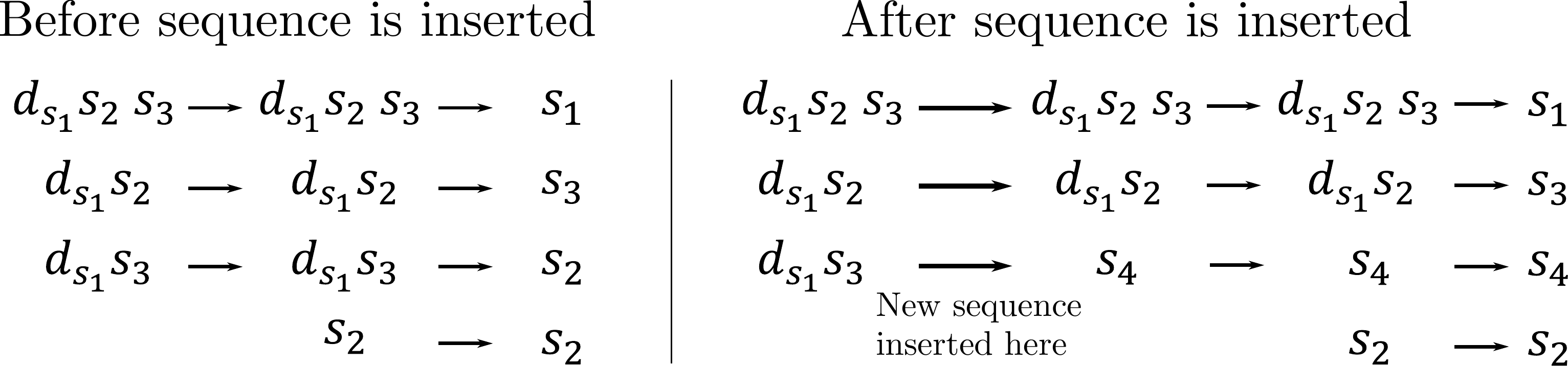} 
		
		\caption{The figure illustrates the effect of adding an additional sequence of measurements so that a new stabilizer generator is added by the end of the $ 3^{\mathrm{rd}} $ cycle. The new sequence is inserted into the original sequence $M$, at the position before $s_2$ is measured.}
		\label{fig:seq_insert_floquet}
		
	\end{figure}
	One can do this by making use of a sequence similar to $ M $: 
 \begin{equation*}
     M_3= \{ d_{s_1} d_{s_2}d_{s_3}, s_1 d_{s_2}d_{s_3}, d_{s_1}s_2 s_3,s_1\}
 \end{equation*}
This maps $ s_1 \to s_1, s_2 \to s_3 $  when $ n= 3 $. To do this for the $ 4^{\mathrm{th}} $ cycle, we replace the variables in $ M_3 $ with the elements from the stabilizer group right before where the new sequence will be inserted using this mapping: $ s_1 \to d_{s_1}s_2 s_3 $, $ s_2 \to d_{s_1}s_3 $, and $ s_3 \to s_4 $.  Using the updated variables, $ d_{s_1} $ will be mapped to $ d_{d_{s_1}s_2 s_3 } $ which anti-commutes with $ d_{s_1}s_2 s_3  $ but commutes with all the other generators in the group including $ d_{s_1}s_2  $. Thus, the new sequence consists of measurements that commute with $ d_{s_1}s_2 $, so it will not be changed by the update from this set of new measurements, when it is inserted in the location indicated in Figure \ref{fig:seq_insert_floquet}.
	
This trick can be applied recursively for more stabilizers. To extend to $ n $ stabilizers from $ n-1 $ stabilizers. Right before measuring the last measurement $ s_2 $, we replace $ s_1 $ with $ d_{s_1}s_2 s_3, s_2  $ with the stabilizer generator that is evolved from $ s_{n-1} $, and replace $ s_3 $ with $ s_n $. 
	
One can verify that inserting the new measurement sequence does not affect the previous cycles. Since $ \langle S_j^{(i)} \rangle \subseteq  \langle S_{j+1}^{(i)} \rangle$, the stabilizer generators before making the measurements in the new sequence are a subset of the current cycle, so the new sequence will act as identity to those generators. One can also verify that in the future cycles, $ s_1 $ will still be mapped to $ s_1 $ and $ s_i\to s_{i+1} $ because the new sequence maps $ s_{n-1}\to s_n $ while preserving the other mappings.
\end{proof}

\begin{example}
We provide an alternative construction below that looks deceivingly simple, using constant cycle size and also purely local measurements (in 1D), yet takes $O(n)$ time to initialize/learn the syndrome:

Round 1) Measure $X_1$

Round 2) Measure $X_2 X_3$, $X_6 X_7$, and so on ($X_{4i+2} X_{4i+3}$)

Round 3) Measure $Z_1 Z_2$, $Z_5 Z_6$, and so on ($Z_{4i+1} Z_{4i+2}$)

Round 4) Measure $X_4 X_5$, $X_8 X_9$, and so on ($X_{4i} X_{4i+1}$)

Round 5) Measure $Z_3 Z_4$, $Z_7 Z_8$, and so on ($Z_{4i+3} Z_{4i+4}$)

Round $k>5$) Repeat Round $((k-2) \mod 4) +2$.

We explicitly show the ISGs for $n =10$, in which case we have round 1, followed by cycles of 4 rounds of measurements given by: $\mathrm{Round\; 1}: \mathrm{Measure\;} X_2 X_3, X_6 X_7$. $\mathrm{Round\; 2}: \mathrm{Measure\;} Z_1Z_2, Z_5Z_6,Z_9Z_{10}$. $\mathrm{Round\; 3}: \mathrm{Measure\;} X_4X_5,X_8X_9$. $\mathrm{Round\; 4}: \mathrm{Measure\;} Z_3Z_4,Z_7Z_8$

\begin{eqnarray*}
    &\mathrm{Round\;1)}& X_1\\
    &\mathrm{Round\;2)}& X_1,X_2 X_3, X_6 X_7, \\
    &\mathrm{Round\;3)}& Z_1Z_2, Z_5Z_6,Z_9Z_{10}, X_1X_2X_3\\
    &\mathrm{Round\;4)}& X_1X_2X_3,Z_1Z_2,X_4X_5,X_8X_9\\
    &\mathrm{Round\;5)}&Z_1Z_2,Z_3Z_4,Z_7Z_8,X_1X_2X_3X_4X_5\\
    &\mathrm{Round\;6)}&X_1X_2X_3X_4X_5, X_2X_3,X_6X_7, Z_1Z_2Z_3Z_4\\
    &\mathrm{Round\;7)}& X_1X_2X_3X_4X_5X_6X_7, Z_1Z_2Z_3Z_4, Z_1Z_2, Z_5Z_6,Z_9Z_{10}\\
    &\mathrm{Round\;8)}& X_1X_2X_3X_4X_5X_6X_7,Z_1Z_2Z_3Z_4Z_5Z_6, X_4X_5,X_8X_9,Z_1Z_2 \\
    &\mathrm{Round\;9)}& X_1X_2X_3X_4X_5X_6X_7X_8X_9, Z_1Z_2Z_3Z_4Z_5Z_6, Z_3Z_4,Z_7Z_8, Z_1Z_2\\
    &\mathrm{Round\;10)}& X_1X_2X_3X_4X_5X_6X_7X_8X_9,Z_1Z_2Z_3Z_4Z_5Z_6Z_7Z_8, X_6X_7, Z_3Z_4,Z_1Z_2\\
    &\mathrm{Round\;11)}& X_1X_2X_3X_4X_5X_6X_7X_8X_9,Z_1Z_2Z_3Z_4Z_5Z_6Z_7Z_8, X_6X_7, X_2X_3, Z_1Z_2Z_3Z_4\\
    &\mathrm{Round\;12)}& X_1X_2X_3X_4X_5X_6X_7X_8X_9,Z_1Z_2Z_3Z_4Z_5Z_6Z_7Z_8, Z_1Z_2Z_3Z_4, Z_1Z_2,Z_5Z_6,Z_9Z_{10}\\
    &\mathrm{Round\;13)}&X_1X_2X_3X_4X_5X_6X_7X_8X_9,Z_1Z_2Z_3Z_4Z_5Z_6Z_7Z_8Z_9Z_{10}, Z_1Z_2Z_3Z_4Z_5Z_6, X_4X_5, X_8X_9, Z_1Z_2
\end{eqnarray*}

Notice that for the chain of $X_1X_2X_3X_4\cdots X_n$ to form, it takes $n$ rounds. Thus, the initialization requires at least $(n-1)/ 4 =O(n)$ cycles.
\end{example}
\begin{theorem}
It takes at most $k$ cycles to unmask all the stabilizers from an ISG, with $k$ given by the number of cycles required to initialize a Floquet code. 
\end{theorem}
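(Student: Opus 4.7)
The plan is to run the distance algorithm starting from a fully initialized ISG $\mathcal{S}$ at the start of a cycle (after the code has reached steady state), and to show that after $k$ more cycles of the periodic measurement sequence every stabilizer of $\mathcal{S}$ has its masking status resolved — either unmasked or permanently masked — so that no temporarily masked stabilizers remain.

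The key observation is that the set $V$ tracked by the distance algorithm evolves under exactly the same stabilizer update rules as the initialization process itself, which starts from the empty stabilizer group. Inspecting Cases 1--4 of Step 2 of the algorithm, when a measurement $m$ commutes with $\langle V\rangle$ it is added to $V$ precisely when $m\notin\langle V\rangle$, and when $m$ anti-commutes with some element of $V$ the update to $V$ is the standard stabilizer update from Lemma \ref{lemma: stab update rule}. These are identical to the dynamics governing initialization from the empty stabilizer group. Thus, after $c$ cycles of running the algorithm on $\mathcal{S}$, the group $\langle V\rangle$ coincides with the initialization ISG after $c$ cycles. Since by hypothesis initialization completes within $k$ cycles, we conclude $\langle V\rangle = \langle \mathcal{S}\rangle$ at the end of cycle $k$.

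Combining this with the algorithm's invariant that $\langle C \cup V\rangle$ equals the current ISG, together with the fact that at steady state the ISG at the start of every cycle is $\mathcal{S}$, we deduce $\langle C\rangle \subseteq \langle \mathcal{S}\rangle = \langle V\rangle$ at the end of cycle $k$. Every element of $\langle C\rangle$ therefore lies in $\langle V\rangle \cap \langle C\rangle$, so by Algorithm \ref{alg: unmasked stabilizers} its associated stabilizer from $\mathcal{S}$ is unmasked; the only stabilizers of $\mathcal{S}$ not associated with any current element of $C$ are those removed in Case 2 updates, which are permanently masked by Proposition \ref{statement4}. The main obstacle is the case analysis verifying the evolution claim for $V$, particularly Case 2 where $m$ is added to $V$ without the explicit check $m\notin\langle V\rangle$ present in Case 1; however, since $\langle C \cup V\rangle$ is always a commuting group, $V$ commutes with $C$, and the assumption in Case 2 that $m$ anti-commutes with $C$ forces $m\notin\langle V\rangle$ automatically, so the identification $\langle V\rangle = \langle S_c^{(0)}\rangle$ with the initialization ISG is preserved throughout.
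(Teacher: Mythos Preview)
Your proposal is correct and follows essentially the same approach as the paper: both run the distance algorithm with $C$ initialized to the ISG and $V$ empty, observe that $V$ evolves exactly as initialization from the trivial group so that after $k$ cycles $\langle V\rangle$ equals the current ISG, and conclude $\langle C\rangle \subseteq \langle V\rangle$ so every surviving associated stabilizer is unmasked while those removed in Case~2 are permanently masked. Your write-up is in fact more careful than the paper's brief proof, supplying the Case-2 verification that $m\notin\langle V\rangle$ and the invariant $\langle C\cup V\rangle = $ current ISG; the only minor difference is that you explicitly restrict to an ISG at the start of a cycle in steady state, whereas the paper nominally treats an arbitrary $S_j^{(i)}$ but relies on the same mechanism.
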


\begin{proof}
We show that for an ISG $S_j^{(i)}$, it takes at most $k$ cycles to unmask, by applying the distance algorithm with $C= S_j^{(i)}$ and $V$ as the empty set. 

Updating $C$ and $V$ using step 2 of the distance algorithm, we see that it takes at most $k$ cycles for $V$ to be fully initialized. After initialization, $C \subset V$. Further, we know that $\langle C\rangle \cap \langle V\rangle $ gives a complete set of unmasked stabilizers. Thus, after at most $k$ cycles, all the stabilizer generators in $S_j^{(i)}$ are either permanently masked stabilizers or unmasked stabilizers.                                              
\end{proof}

\begin{corollary}
    For a Floquet code with $k=O(1)$ cycles to initialize and $m= O(1)$ measurements per cycle, then to fully determine the unmasked stabilizers for an ISG, it takes at most $m\cdot k = O(1)$ measurements. 
\end{corollary}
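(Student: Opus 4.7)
The plan is to derive this corollary as an immediate consequence of the preceding theorem, which asserts that at most $k$ cycles are needed to unmask all stabilizers of any given ISG, where $k$ is the number of cycles required for initialization of the Floquet code. The only additional ingredient needed is a count of the total number of measurements, obtained by combining the per-cycle measurement count with the cycle upper bound.

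I would proceed in two short steps. First, I would invoke the theorem directly to bound the number of cycles needed to determine the unmasked stabilizers of an ISG by $k$. Second, since each cycle of the Floquet code executes exactly $m$ Pauli measurements (one full repetition of the periodic measurement sequence), the total number of measurements made within $k$ cycles is exactly $m \cdot k$. Under the hypothesis $k = O(1)$ and $m = O(1)$, the product $m \cdot k = O(1)$, which gives the stated bound.

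There is essentially no obstacle here; the corollary is arithmetic combined with the cycle bound already established. The only minor subtlety worth noting in the write-up is that the theorem's bound applies to any ISG of the form $S_j^{(i)}$, so the corollary holds uniformly across ISGs, justifying the phrasing ``an ISG'' in the statement. I would conclude by remarking that this means, for such Floquet codes, the unmasked stabilizer structure and hence the unmasked distance of the code can be determined in constant time per ISG, making the distance algorithm particularly efficient in this regime.
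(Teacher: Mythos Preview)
Your proposal is correct and matches the paper's approach: the paper states this corollary without proof, treating it as an immediate arithmetic consequence of the preceding theorem (at most $k$ cycles to unmask) together with the count of $m$ measurements per cycle. Your two-step derivation is exactly the intended reasoning.
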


\section{Geometrically local dynamical code}
\label{sec: geometrically-local-codes}

The honeycomb code and other generalizations of Floquet codes are geometrically local codes. In this section, we consider some properties of dynamical codes (not necessarily Floquet codes with repeating cycles) for which the measurements are all geometrically local.

Based on our definition and discussion of distance from Section \ref{sec: dynamical code}, any ISG with geometrical locality will subject the dynamical code as a whole to its bound on distance and trade-offs in code parameters. Further, any neighboring pair of ISGs that form a geometrically local subsystem code will satisfy similar code parameter bounds. 

We summarize previous results on transversal gates, distance and trade-offs in codes with geometric locality: 

\textbf{Bounds on distance:}

\begin{theorem} [Bravyi and Terhal]\cite{bravyi_no-go_2009}
	
	If the generators of the gauge group $\mathcal{G}$ are geometrically local, then the distance satisfies $d \leq 3rL^{D-1}$, where $r$ is the width of the local gauges. In particular, local stabilizer codes have distance given by $O(L^{D-1})$.
\end{theorem}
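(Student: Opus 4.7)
The plan is to reproduce the standard Bravyi--Poulin--Terhal cleaning argument adapted to subsystem codes. First, partition the $D$-dimensional cubic lattice of linear size $L$ into $m = \lfloor L/(3r) \rfloor$ disjoint slabs $B_1, \ldots, B_m$ perpendicular to the $x_1$-axis, each of thickness $3r$. The factor $3$ is chosen so that each $B_j$ can be written as a union $B_j^{-} \cup B_j^{\mathrm{mid}} \cup B_j^{+}$ of three sub-slabs of width $r$, with $B_j^{\mathrm{mid}}$ sitting between $B_j^{-}$ and $B_j^{+}$; since gauges have width at most $r$, no gauge generator touches both the region to the left of $B_j^{\mathrm{mid}}$ and the region to the right.

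Next, I would invoke the Cleaning Lemma for subsystem codes: if a region $R$ is \emph{correctable} (no non-trivial dressed logical is supported entirely in $R$), then every dressed logical admits a representative with no support on $R$. The key claim is that for some index $j$, the complement $R_j := \bigcup_{i \neq j} B_i$ is correctable, so every logical can be represented within the single slab $B_j$ of $|B_j| \leq 3r L^{D-1}$ qubits. To establish the claim, use the disconnection property: $R_j$ decomposes into two halves $A_-$ and $A_+$ separated by $B_j^{\mathrm{mid}}$, with no gauge connecting them. A non-trivial logical supported in $R_j$ would split as $L = L_- \otimes L_+$ with $L_\pm$ supported on $A_\pm$, and the locality of gauges forces each $L_\pm$ to commute with the entire gauge group restricted to its side, making each piece a bare logical on its own half. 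A counting argument over the $m$ choices of $j$ then ensures at least one $j$ for which $R_j$ supports no logical, since the total number of independent logicals is finite while $m$ can be made larger.

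Putting the steps together, the Cleaning Lemma produces a representative of the minimum-weight dressed logical supported entirely in $B_j$, and hence $d \leq |B_j| \leq 3r L^{D-1}$. The second part of the statement, that local stabilizer codes have $d = O(L^{D-1})$, follows immediately by specializing $\mathcal{G} = \langle S \rangle$ and absorbing $r$ (a constant for local codes) into the big-$O$.

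The main obstacle I expect is the correctability step for subsystem codes: the dressed-logical setting requires more care than the original stabilizer argument, because dressed logicals only commute with the stabilizer center rather than the full gauge group, and the decomposition $L = L_- \otimes L_+$ must be shown to respect the dressed logical structure. A fallback is to run the argument directly in terms of the stabilizer center $Z(\mathcal{G}) \cap \mathcal{G}$ on each half, together with the observation that the gauge disconnection across $B_j^{\mathrm{mid}}$ promotes any one-sided dressed logical to a bare logical, at which point the standard BPT argument applies verbatim.
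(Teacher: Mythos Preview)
The paper does not supply a proof of this theorem at all: it appears in Section~\ref{sec: geometrically-local-codes} in a list headed ``We summarize previous results on transversal gates, distance and trade-offs in codes with geometric locality,'' and is simply quoted from \cite{bravyi_no-go_2009} as background. There is therefore no in-paper argument to compare your proposal against; what you have written is a reconstruction of the original Bravyi--Terhal proof, not of anything in this manuscript.

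On the substance of your sketch: the overall architecture (slab decomposition perpendicular to one axis, buffers of width $r$ so that no gauge generator straddles the middle sub-slab, Cleaning Lemma to confine a logical to a single slab) is indeed the Bravyi--Terhal strategy. The weak point is your ``counting argument over the $m$ choices of $j$.'' As stated it does not go through: $m = \lfloor L/(3r)\rfloor$ is fixed by the lattice, not a free parameter you can make large, and ``the total number of independent logicals is finite'' is not by itself enough to force some $R_j$ to be correctable. In the original proof the mechanism is different: one shows that if \emph{every} slab complement supported a non-trivial bare logical, the factorization $L = L_-\otimes L_+$ across each buffer would produce too many independent commuting logicals (more than $k$), a contradiction. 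Your fallback paragraph gestures at the right idea (promote one-sided dressed logicals to bare logicals via gauge disconnection), but you should make explicit that the contradiction is with the \emph{rank} of the bare logical group, not merely with finiteness.
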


\textbf{Bounds on trade-offs in code parameters}

\begin{theorem}[Bravyi, Poulin, Terhal]\cite{bravyi_tradeoffs_2010}
	
	Local stabilizer codes satisfy the following constrain:
	\begin{equation*}
		k \leq \frac{cn}{d^{\alpha}},\:\alpha = \frac{2}{D-1}
	\end{equation*}
	
\end{theorem}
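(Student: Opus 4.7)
The plan is to follow the strategy of Bravyi, Poulin and Terhal, which combines the Bravyi--Terhal cleaning lemma with an entropic bound for the maximally mixed codeword state on carefully chosen subregions of the lattice. The inequality is known to be sharp (saturated by the toric code in $D=2$ and by 3D color codes in $D=3$), so the proof must avoid wasteful counting.

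First, I would introduce $\rho_{\mathcal{C}} = \Pi_{\mathcal{C}}/2^k$, so that $S(\rho_{\mathcal{C}}) = k$, and recall that a region $R$ is correctable whenever every logical operator admits a representative supported outside $R$; by the Bravyi--Terhal cleaning lemma this holds automatically whenever $R$ has diameter less than $d$. For such $R$ the reduced state $\rho_R$ is independent of the logical content of the code, and it satisfies a stabilizer-specific area law of the form $S(\rho_R) \le c_1 |\partial R|$, since its entropy is controlled by the number of stabilizer generators straddling $\partial R$. I would then consider a tripartition of the lattice into three pieces $A$, $B$, $C$ with $B$ a buffer region of thickness $\gtrsim d$ separating $A$ from $C$; two applications of the cleaning lemma turn this into the mutual-information identity $I(A:C)_{\rho_{\mathcal{C}}} = 2k$, which through strong subadditivity and the area law translates into $k \le c_2 |\partial B|$.

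Next I would iterate this tripartition hierarchically in each of the $D$ coordinate directions, covering the lattice with boxes of side length $\ell$. Each box contributes $O(\ell^{D-1})$ boundary qubits and there are $O(n/\ell^D)$ boxes. Carefully summing the local inequalities and invoking cleaning across each axis to avoid double-counting, the bound schematically takes the form
\begin{equation*}
k \;\lesssim\; \frac{n}{\ell^D} \cdot \ell^{D-1} \;=\; \frac{n}{\ell},
\end{equation*}
with the optimization constraint $\ell \sim d^{1/(D-1)}$ forced by balancing the cleaning thickness (which must scale linearly in $d$ in one direction) against the transverse face area (which scales as $\ell^{D-1}$). Substituting the optimal $\ell$ yields $k\, d^{2/(D-1)} \le c\, n$, as claimed.

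The main obstacle is extracting the sharp exponent $2/(D-1)$ rather than the weaker $1/(D-1)$ that a single-direction slab argument would produce. The sharp scaling requires a telescoping application of strong subadditivity across multiple coordinate axes, carefully exploiting the fact that $I(A:C) = 2k$ whenever both $A$ and $C$ are correctable. Keeping the resulting constants dimension-independent, and verifying that the iterated cleaning operations combine consistently on the stabilizer group without double-counting boundary qubits, is the most delicate step and is where I would need to mimic closely the stabilizer-specific area-law manipulations of the original BPT argument; everything else follows routinely from the BT cleaning lemma and standard entropic inequalities for stabilizer states.
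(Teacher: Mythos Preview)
The paper does not prove this statement; it is quoted from \cite{bravyi_tradeoffs_2010} as one of several background results at the start of Section~\ref{sec: geometrically-local-codes}, with no accompanying argument. There is therefore no in-paper proof to compare against, and the question is simply whether your sketch stands on its own as a reconstruction of the Bravyi--Poulin--Terhal argument.

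The overall architecture you describe (partition the lattice, invoke cleaning on correctable pieces, convert this into an inequality for $k$) is the right template, but two of the concrete steps do not hold.

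First, the identity $I(A:C)_{\rho_{\mathcal C}}=2k$ is false. Take $A$ and $C$ correctable and separated by the buffer $B$. Purifying $\rho_{\mathcal C}$ with a reference $R$, correctability of $A\cup C$ (Union Lemma) gives $I(A\cup C:R)=0$, hence $S_B=S_{AC}+k$, and therefore $I(A:C)=S_A+S_C-S_B+k$, an area-law quantity rather than $2k$. For two small, well-separated balls in the toric code it is in fact zero. So the mutual-information route as written does not produce the inequality $k\le c_2|\partial B|$ you want. BPT's actual input is their holographic/absorption lemma: a correctable region can be absorbed into its exterior boundary while preserving $k$ and not decreasing $d$; for stabilizer codes this is a direct counting of independent generators, and strong subadditivity plays no role.

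Second, and this is the essential gap, your own bookkeeping already gives only the weak exponent. You obtain $k\lesssim n/\ell$ with $\ell\sim d^{1/(D-1)}$, which is $k\,d^{1/(D-1)}\lesssim n$; the line ``substituting the optimal $\ell$ yields $k\,d^{2/(D-1)}\le cn$'' is an arithmetic slip. You then correctly flag the missing factor and attribute its recovery to a ``telescoping application of strong subadditivity across multiple coordinate axes,'' but nothing is carried out, and that is not the mechanism BPT use. The extra factor of $d^{1/(D-1)}$ in BPT comes from a sharper correctability criterion for the boxes themselves: a BT-style cleaning \emph{internal} to a hypercube shows that a putative logical operator confined to it can be pushed onto a single $(D{-}1)$-face, so the cube is already correctable once $\ell^{D-1}\lesssim d$, not merely once $\ell^{D}\lesssim d$. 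Combining this with the absorption step is what upgrades the exponent from $1/(D-1)$ to $2/(D-1)$. That internal-cleaning step is the heart of the BPT proof, and it is absent from your proposal.
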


\begin{theorem}[Bravyi]\cite{bravyi_subsystem_2011}

	For subsystem codes with gauge group that has local gauges, the following is satisfied in 2D:
	\begin{equation*}
		kd = O(n)
	\end{equation*}
	
\end{theorem}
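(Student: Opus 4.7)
The plan is to adapt the Bravyi–Terhal cleaning-lemma approach to the subsystem setting in two dimensions. The first step is to establish a subsystem cleaning lemma: if a region $A$ of the qubit lattice is \emph{correctable} in the sense that no non-trivial dressed logical operator is supported inside $A$, then every bare logical operator admits a representative, modulo the gauge group $\mathcal{G}$, whose support avoids $A$. The argument parallels the stabilizer case, but one cleans using elements of $\mathcal{G}$ rather than $\langle S\rangle$, and the correctability hypothesis must be phrased in terms of dressed (not bare) logicals because a bare logical operator is required to commute with all gauge generators, including those straddling $\partial A$.

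The second ingredient is a geometric partition. Let $L=\sqrt{n}$ and $r=O(1)$ be the range of the local gauge generators. Tile the $L\times L$ lattice by axis-aligned squares of side $d/3$ separated by buffer corridors of width $r$, so that gauge generators never straddle two squares. Each square has diameter strictly less than $d$, so it cannot support a non-trivial dressed logical operator; hence each square is correctable in the subsystem sense above. By the cleaning lemma, every bare logical representative can be pushed off any single chosen square while staying supported on the remainder.

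The crux is a symplectic counting argument. Apply the cleaning lemma to a checkerboard-colored subset of the squares simultaneously, pushing every bare logical representative onto the complementary set $B$, which consists of the uncolored squares plus the buffer corridors. By a pigeonhole choice of which checkerboard to use, one can ensure $|B| = O(n/d)$ qubits. The $2k$ bare logical generators form a symplectically non-degenerate subspace inside the Pauli algebra on $B$ quotiented by $\mathcal{G}$ restricted to $B$, so a dimension count gives $2k \le 2|B| = O(n/d)$, i.e.\ $kd = O(n)$.

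The main obstacle is the last symplectic count: in a subsystem code the bare logical algebra is $Z(\mathcal{G})/\langle S\rangle$, not the naive commutant of the stabilizers, and gauge generators whose support is entirely within $B$ impose further linear constraints that must be tracked carefully when converting the geometric bound $|B| = O(n/d)$ into an algebraic bound on $k$. The Bacon–Shor code saturates $kd = \Theta(n)$, so no step can afford to lose more than a constant factor; in particular, the size of the buffer corridors and the fraction of squares in each checkerboard class must be tuned so that the $O(d)$ overhead per square cancels against the $n/d^2$ squares, leaving exactly the linear-in-$n/d$ bound on $k$.
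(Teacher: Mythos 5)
The paper does not actually prove this statement; it is quoted as background from Bravyi's paper, so your proposal can only be measured against the known argument there. Judged on its own terms, it has two genuine gaps. The decisive one is the claim that a square of side $d/3$ is correctable because ``its diameter is strictly less than $d$.'' The distance constrains the \emph{weight} (number of qubits in the support) of a dressed logical operator, not its diameter: a $(d/3)\times(d/3)$ square contains $\Theta(d^2)$ qubits and may perfectly well support a dressed logical of weight $\geq d$. The only criterion that follows directly from the definition of $d$ is that regions with \emph{fewer than $d$ qubits} are correctable, which limits you to squares of side $O(\sqrt{d})$; with those, even a repaired counting argument yields only $k\sqrt{d}=O(n)$, strictly weaker than the theorem. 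For stabilizer codes one can upgrade small correctable regions to $O(d)$-sized boxes via the BPT expansion lemma, but that lemma fails for subsystem codes (gauge generators obstruct the expansion step), and this failure is precisely why the 2D subsystem bound is $kd=O(n)$ rather than $kd^{2}=O(n)$ and why Bravyi's proof does not proceed by declaring $O(d)$-sized squares correctable. Any proof along your lines must either establish correctability of side-$\Theta(d)$ regions for subsystem codes (which is the hard content, not a one-line consequence of the distance) or find a different mechanism, as Bravyi does.

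The second gap is the counting step. If you clean all $2k$ bare logical generators off a single checkerboard class, their supports land on the complementary class plus the corridors, and the complementary class alone contains $\Theta(n)$ qubits; no pigeonhole over the two colorings makes $|B|=O(n/d)$, so the dimension count $2k\le 2|B|$ gives nothing. The standard way to localize the count to the small corridor region $C$ is the two-region pairing argument: clean the $\bar X_i$ off a correctable region $A$ and the $\bar Z_i$ off a disjoint correctable region $B$ with $A\cup B$ covering everything except $C$; the anticommutation of each conjugate pair must then be generated inside $C$, giving $k\le |C|$ (with the care you already flag about quotienting by gauge elements supported in $C$). Finally, a factual slip: the Bacon--Shor code does not saturate $kd=\Theta(n)$ --- it has $k=1$, $d=\sqrt{n}$, so $kd=\sqrt{n}$; the examples showing tightness of the bound are different constructions in Bravyi's paper with $k$ and $d$ both of order $\sqrt{n}$.
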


\begin{example}
	The Hastings and Haah honeycomb code \cite{Hastings2021dynamically} has 2D ISGs that allow for fault tolerant Clifford gates, and have distance $O(\sqrt{n})$. It also satisfies the BPT bound for all ISGs. 
\end{example}

Outside of geometric locality, the connectivity of a code can constrain its code parameters:

\begin{theorem}[Baspin, Guruswami, Krishna, Li] \cite{baspin2023improved}

	For a code whose connectivity graph $G$ has separation profile $s_G(r)\leq O(r^c)$ for some $c \in (0,1]$, it holds that
	\begin{equation*}
		kd^{\frac{1-c^2}{c}}= O(n)
	\end{equation*}
	\begin{equation*}
		d=O(n^c)
	\end{equation*}

\end{theorem}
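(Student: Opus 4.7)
The plan is to combine the Bravyi--Poulin--Terhal \emph{cleaning lemma} with a recursive graph decomposition driven by the separation profile. The cleaning lemma states that if a qubit region $R$ is correctable --- meaning no nontrivial dressed logical operator is supported within $R$ --- then every logical operator admits an equivalent representative supported on the complement $R^c$. This is the crucial bridge that turns graph-theoretic separations of the connectivity graph $G$ into structural constraints on the code's logical operators.

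First I would establish the distance bound $d = O(n^c)$. Apply the separation profile to $G$: there exists a vertex set $S$ of size at most $s_G(n) \leq O(n^c)$ whose removal splits $G$ into pieces of size at most a constant fraction of $n$. Recurse on a piece that still contains a nontrivial logical operator (one must exist in at least one side, for otherwise both sides are correctable, all logicals can be cleaned into $S$, and we are already done with a logical of weight $O(n^c)$). This produces a nested sequence of subgraphs whose sizes shrink geometrically, and at each level the associated separator is itself bounded by $O(r^c)$ where $r$ is the current subgraph size. Terminating the recursion at the first level where the separator has weight comparable to the subgraph itself yields a nontrivial logical representative of weight $O(n^c)$.

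Next, for the tradeoff $k \cdot d^{(1-c^2)/c} = O(n)$, I would run the recursive separator decomposition more carefully and stop it at the scale $r^* \sim d^{1/c}$, i.e.\ the smallest subgraph size at which the separator $O((r^*)^c) \lesssim d$ can no longer force the existence of an in-region logical of weight below $d$. This produces a partition of the qubits into $O(n/r^*)$ regions of size roughly $r^*$, glued together by a separator hierarchy whose total size sums to $O(n/r^* \cdot (r^*)^c)$. The cleaning lemma, applied locally to each region, bounds the number of independent dressed logical operators sourced from each region by its local separator contribution. Summing these local bounds, together with the distance constraint that forces logicals to straddle sufficiently many regions, yields $k = O(n/r^* \cdot (r^*)^c) = O(n \cdot d^{(c-1)/c})$, which rearranges to the claimed $k\, d^{(1-c^2)/c} = O(n)$.

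The hard part will be obtaining the precise exponent $(1-c^2)/c$ from the recursion; a naive one-level argument gives only $kd = O(n)$-type bounds. The delicate step is optimizing the stopping scale $r^*$ against the distance so that the separator budget at each recursion level contributes proportionally, and then summing a geometric-like series whose ratio depends on $c$. A secondary subtlety is that the theorem is stated for general stabilizer (and, in the applications here, subsystem) codes: the cleaning lemma must be invoked with dressed logical operators and the gauge group, so the recursion should track the full gauge group rather than just the stabilizer group, and one must verify that correctability in the subsystem sense is still preserved under the recursive decomposition.
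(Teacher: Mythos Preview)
This theorem is not proved in the present paper at all. It appears in Section~\ref{sec: geometrically-local-codes} as one of several background results quoted from the literature (here, \cite{baspin2023improved}), alongside the Bravyi--Terhal, Bravyi--Poulin--Terhal, and Bravyi--K\"onig bounds. The paper merely states it and moves on; there is no proof in the text to compare your proposal against.

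That said, a remark on your sketch itself: your final arithmetic does not close. From $r^* \sim d^{1/c}$ and $k = O\bigl((n/r^*)\cdot (r^*)^c\bigr) = O\bigl(n\cdot (r^*)^{c-1}\bigr)$ you get $k = O\bigl(n\, d^{(c-1)/c}\bigr)$, which rearranges to $k\, d^{(1-c)/c} = O(n)$, not $k\, d^{(1-c^2)/c} = O(n)$. The exponent you derive is off by a factor of $(1+c)$, so the one-level separator count you wrote down is genuinely weaker than the claimed bound. You flag this yourself as ``the hard part,'' but as written the proposal does not indicate what additional idea recovers the extra factor; in the original reference this comes from a more refined accounting of how many independent logical degrees of freedom can be supported across the separator hierarchy, not just from choosing the stopping scale. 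If you intend to reconstruct the full proof, that is the step that needs an actual mechanism rather than an appeal to optimizing $r^*$.
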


\textbf{Bounds on transversal gates:}
\begin{theorem}[Bravyi and K$\mathrm{\ddot{o}}$nig]\cite{bravyi_classification_2013}

    Suppose a unitary operator $U$ implementable by a constant-depth quantum circuit preserves the codespace $\mathcal{C}$ of a topological stabilizer code on a $D$-dimensional lattice, $D \geq 2$. Then the restriction of $U$ onto $\mathcal{C}$ implements an encoded gate from the set in the $D^{\mathrm{th}}$ level of Clifford hierarchy.  
\end{theorem}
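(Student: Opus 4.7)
The plan is to argue by induction on the spatial dimension $D$, relying on two main ingredients. The first is the \emph{light cone property} of constant-depth circuits: if $U$ has depth $O(1)$, then for any operator $A$ supported on a region $R$, the conjugate $U A U^\dagger$ is supported on the $O(1)$-neighborhood of $R$. The second is the \emph{cleaning lemma} for topological stabilizer codes (in the spirit of Bravyi and Terhal), which states that a logical operator can be deformed by stabilizers so as to avoid any region whose complement is topologically trivial; equivalently, a logical operator can be chosen to lie on any topologically nontrivial cycle of appropriate codimension.

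For the base case $D = 2$, I would show that $U$ must implement a logical Clifford. Represent a logical Pauli $P$ on a string (a codimension-$1$ region). Then $U P U^\dagger$ is supported on an $O(1)$-thickening of this string. Using the cleaning lemma together with the fact that this thickened string has trivial topology in its complement's direction, one can reduce $U P U^\dagger$ modulo stabilizers to an operator on a thinner string-like region. The key step is to show that any unitary logical operator confined to such a region must itself be a logical Pauli, which uses that any locally acting operator with support on a homologically nontrivial curve that preserves the code acts trivially off the codespace and Pauli-like on it. Hence $U$ maps Paulis to Paulis logically, so its encoded action lies in $\mathcal{C}^{(2)}$.

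For the inductive step, assume the theorem in all dimensions less than $D$. By the definition of the Clifford hierarchy, it suffices to show $U P U^\dagger$ acts on the codespace as an element of $\mathcal{C}^{(D-1)}$ for every logical Pauli $P$. Choose a representative of $P$ supported on a codimension-$1$ membrane $M$. Then $U P U^\dagger$ is supported in an $O(1)$-thickening $\mathcal{N}(M)$ of $M$. I would then view the stabilizers and logical operators of the original code restricted to $\mathcal{N}(M)$ as defining an effective topological stabilizer code on the $(D-1)$-dimensional manifold $M$, with a circuit of constant depth acting on it (namely the portion of $U P U^\dagger$ inside $\mathcal{N}(M)$, combined with cleaning moves). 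Invoking the induction hypothesis on this $(D-1)$-dimensional code places the action of $U P U^\dagger$ on the codespace in $\mathcal{C}^{(D-1)}$, and therefore places the action of $U$ in $\mathcal{C}^{(D)}$.

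The main obstacle is the inductive reduction: one must justify that restricting a $D$-dimensional topological stabilizer code to a codimension-$1$ slab yields a genuine topological stabilizer code in $D-1$ dimensions, and that the effective circuit induced on that slab by $U P U^\dagger$ is both constant-depth and preserves the restricted codespace. This likely requires a careful disentangling argument to peel off portions of $U$ that act far from $M$ (using the light cone to see they do not touch the cleaned $P$), together with topological input to identify the induced logical algebra on $M$ with the logical algebra of an honest lower-dimensional topological code. Bookkeeping between equality of operators and equality modulo stabilizers on the codespace must be maintained throughout the reduction, as each cleaning move is only valid on the codespace.
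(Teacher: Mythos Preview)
The paper does not give its own proof of this theorem; it is quoted as a known result of Bravyi and K\"onig. The paper does, however, prove generalizations (its Theorems~\ref{main theorem: stabilizer codes} and~\ref{main theorem: sub stabilizer codes}), and those proofs follow the original Bravyi--K\"onig and Pastawski--Yoshida strategy, which is quite different from your inductive scheme. There one partitions the lattice into $D+1$ cleanable regions $R_0,\ldots,R_D$, cleans logical Paulis $L_1,\ldots,L_D$ off the enlarged regions $R_1^+,\ldots,R_D^+$, and forms iterated group commutators $L_1' = U L_1 U^\dagger L_1^\dagger$ and $L_i' = L_i L_{i-1}' L_i^\dagger L_{i-1}'^{\dagger}$. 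A support-tracking argument shows $L_D'$ is confined to the bare-cleanable $R_0$ and hence trivial on the codespace; unwinding gives $L_i' \in \mathcal{C}^{(D-i)}$ and thus $U \in \mathcal{C}^{(D)}$. No induction on $D$ and no dimensional reduction of the code is used.

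Your proposal has a genuine gap at the inductive step. The induction hypothesis applies to a \emph{constant-depth circuit} that \emph{preserves the codespace} of a $(D-1)$-dimensional topological stabilizer code. But $U P U^\dagger$ restricted to the slab $\mathcal{N}(M)$ is none of these things: it is a single unitary operator, not a circuit, and there is no reason it factors as a product of geometrically local gates of bounded depth on the slab, nor that it preserves any well-defined restricted codespace. Moreover, the obstacle you yourself flag---that restricting a $D$-dimensional topological stabilizer code to a codimension-one slab yields a bona fide $(D-1)$-dimensional topological stabilizer code---is not a bookkeeping issue but a substantive one that does not hold in general without additional structure. The group-commutator argument avoids both difficulties entirely by never leaving the original $D$-dimensional code and never needing $U P U^\dagger$ to be a circuit.
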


\begin{theorem}[Pastawski and Yoshida] \label{PY bound on transversal gate}\cite{pastawski_fault-tolerant_2015}

    The above result applies to subsystem codes with generators of the gauge group $\mathcal{G}$ being geometrically local: Assuming that the code has a loss threshold $p_l > 0$ and a code distance $d= \Omega(\log^{1-1/D}(n))$, any locality-preserving logical unitary, fully supported on a $D$-dimensional region, has a logical action included in $\mathcal{C}^{(D)}$.
\end{theorem}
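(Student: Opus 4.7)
The plan is to adapt the topological disentangling argument of Bravyi--K\"onig to the subsystem setting, using the loss threshold to control the extra gauge degrees of freedom. First, I would invoke the subsystem cleaning lemma: any bare logical operator can be supported away from a region $A$ whenever $A$ is correctable, i.e.\ contains no nontrivial dressed logical operator. The loss threshold $p_l > 0$ is precisely the input that upgrades this from a statement about specific small regions to a statement about \emph{every} sufficiently sparse region: a constant fraction of the qubits can be erased, so by a union-bound / covering argument any collection of balls whose total volume is below the threshold is jointly correctable, even after being thickened by the constant light-cone radius of the depth-$O(1)$ unitary $U$.

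Second, I would set up the standard $D+1$-coloring of the lattice by ``fat'' regions $A_0, A_1, \ldots, A_D$, scaled so that (i) each color is a disjoint union of balls of diameter much larger than the light cone of $U$, and (ii) any single color class occupies a small enough fraction of the volume to sit below the threshold from the previous step. The key topological fact is that in $D$ spatial dimensions such a coloring exists in which every $(j{+}1)$-fold intersection reduces to regions of effective dimension $D-j$, so the combinatorics of the nerve matches a $D$-simplex. Using the distance bound $d = \Omega(\log^{1-1/D} n)$, I would verify that the ball radius can be taken large enough that the cleaning radius still fits inside each monochromatic ball.

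Third, I would run the iterative disentangling: for each logical Pauli $\overline{P}$, pick via cleaning a bare representative avoiding $A_0$, restrict $U \overline{P} U^\dagger$ to obtain operators supported on increasingly refined intersections of the colored regions, and at each refinement step collect a commutator identity. After $D$ rounds of peeling off color classes, the leftover piece is supported on a single intersection region where it must act trivially on the code (since by construction it is a bare logical cleanable away), forcing an algebraic identity that expresses $U\overline{P}U^\dagger$ as a product of operators whose iterated commutators with Paulis eventually land in the stabilizer. Unpacking this gives the defining recursion $U \overline{P} U^\dagger \in \mathcal{C}^{(D-1)}$ modulo gauge, which by induction places $U$ in $\mathcal{C}^{(D)}$; the base case $D=2$ reduces to a subsystem analogue of the Bravyi--K\"onig 2D argument combined with the gauge-cleaning step.

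The main obstacle will be that for subsystem codes the cleaning lemma only applies to \emph{bare} logicals, while $U \overline{P} U^\dagger$ may a priori be only a dressed logical, i.e.\ it can have nontrivial gauge content on boundaries between colored regions. The loss threshold is exactly what controls this: it guarantees that along each interface the gauge operators spread by $U$ fit inside a correctable footprint, so they can be absorbed into the gauge group without changing the logical action. Making this gauge-bookkeeping precise --- in particular, verifying that the inductive commutator identities descend from dressed to bare representatives at each refinement step without accumulating an uncontrolled Clifford-hierarchy level --- is the technical heart of the proof.
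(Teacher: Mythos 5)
This theorem is not proved in the paper; it is quoted verbatim from Pastawski and Yoshida \cite{pastawski_fault-tolerant_2015} as background, and the paper's own contribution is the generalization in Theorem~\ref{main theorem: sub stabilizer codes}, whose proof mirrors PY's original argument. Measured against that proof, your overall plan is the right one: loss threshold $\Rightarrow$ correctable balls in each unit cell, a $(D+1)$-coloring by balls and fattened skeleta, iterated group commutators $L_i' = L_i L_{i-1}' L_i^\dagger (L_{i-1}')^\dagger$, and the distance/light-cone bookkeeping that keeps the thickened regions $R_m^+ = \mathcal{B}(R_m, 2^{m-1} s_U)$ cleanable. That much is the same skeleton the paper uses.

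Where your proposal goes astray is exactly at the point you flag as the ``technical heart.'' You propose to resolve the dressed-versus-bare tension by having the loss threshold ``absorb the gauge operators spread by $U$ along interfaces into the gauge group.'' That is not how the actual argument works, and I don't see how such an absorption step would terminate the commutator recursion. The correct mechanism is structural, not probabilistic: the $D+1$ colored regions are \emph{not} interchangeable. The region $R_0$ (a union of small balls) is built to be bare-cleanable (it supports no non-trivial \emph{dressed} logical), whereas $R_1, \dots, R_D$ are only dressed-cleanable (they support no non-trivial \emph{bare} logical). The representatives $L_i$ are taken as dressed logicals and cleaned from the dressed-cleanable regions $R_i^+$. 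After $D$ rounds of commutation the operator $L_D'$ is forced onto $R_0$; since $R_0$ supports no non-trivial dressed logical, $L_D'$ has trivial logical action, and unwinding the recursion gives $L_i' \in \mathcal{C}^{(D-i)}$ and hence $U \in \mathcal{C}^{(D)}$. The loss threshold plays no role at all in the gauge bookkeeping: its sole job is in constructing $R_0'$ in the first place, guaranteeing via a union-bound (the paper's sub-lemma on finding correctable balls in cells of volume $\Omega(\log n)$) that each cell contains a correctable constant-radius ball so the coloring can be built. If you rewrite your third and fourth paragraphs to center the argument on the bare/dressed asymmetry between $R_0$ and the higher-index regions, rather than on loss-threshold absorption at interfaces, you will have essentially reconstructed the PY proof.
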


\subsection{Non-Clifford gates on 2D dynamical code?}
%Introduction
One important question in this paper is whether a dynamical code can offer new ways of implementing fault tolerant non-Clifford gate on a 2D lattice, which has previously been shown to be not possible for static geometrically local stabilizer or subsystem codes. Several open questions have been raised in this regard: One, dynamical code can allow for qubits to participate as ancilla in some rounds and physical qubits in others. It is unclear if this allows for more efficient implementations of non-Clifford gates. Two, we raise the question of whether it is possible to implement a logical gate by first performing unitaries to bring the code to a non-stabilizer state before making measurements to project it back to a (possibly different) codespace. 

In this section, we consider the property of long range connectivity: Unlike geometrically local stabilizer codes or subsystem codes, a dynamical code can allow for long range stabilizers that cannot be expressed as local gauges in certain rounds, as the long range stabilizers can be measured using gauges spread across multiple rounds. The Bacon Shor code, for instance, has non-local stabilizers measured in either choice of gauges, forming a subsystem code with local gauges. Unlike these codes, a dynamical code has multiple rounds of ISGs where neighboring ISGs may not form a geometrically local subsystem code. Long range stabilizers may not be measured across multiple rounds so a dynamical code can allow for some amount of non-local stabilizers, not present in the usual stabilizer framework. Further, one can also shuffle ancillas to connect some physical qubits on certain rounds of the code to obtain limited long-range connectivity or measure non-local stabilizers. 

We restrict the problem to the case where a non-Clifford gate can be directly implemented by a constant depth circuit on a particular round of ISG where some stabilizers are non-local, and this can take the code to either the same codespace or a different codespace, stabilized by another round of ISG. These two ISGs can be connected by a sequence of measurements. While we cannot rule out the existence of other ways for gate implementation on dynamical codes, it is natural to focus on gates that directly map between stabilizer codes. To keep error propagation under control, only constant depth circuits are considered for logical gates. As an example of 2D implementation with a non-Clifford gate, the doubled color code \cite{bravyi2015doubled} can implement a non-Clifford T gate by introducing additional ancillas for long range stabilizers and performing gauge fixing.

We show two separate results regarding the connection of long-range connectivity and logical gates from higher Clifford hierarchy. These can be seen as a generalization of the results by Pastawski and Yoshida (PY) to geometrically local stabilizer codes and subsystem codes with limited non-local stabilizers without local gauges supported on some physical qubits. First, we show that for both geometrically local stabilizer codes and geometrically local subsystem codes with some amount of long-range connectivity, if there exist logical $X$ and $Z$ representatives for a logical qubit that are fully supported on a region far enough from the set of qubits with long range connectivity, then one cannot find a transversal gate that implements a logical single qubit gate beyond $\mathcal{C}^{(D)}$ on that logical qubit. Secondly, we also show that if there are not too many qubits with long range connectivity, the code does not support any gates from higher Clifford hierarchy. 

These new results provide a lower bound on the number of qubits that must support long range stabilizers before one can ask the question of whether a code has a gate from a higher Clifford hierarchy. An intuition that can be gained from here is that in general having a small amount of connectivity will not improve the code's performance at supporting gates from higher Clifford hierarchy. One can show that if there is such an improvement, then it is limited to the logical operators that can be fully supported by the region given roughly by $Q$, the set of qubits with long range connectivity. This implies that the distance of the code for this logical qubit is upper bounded by roughly the size of $Q$, i.e. either a mostly geometrically local code has a good distance but no non-Clifford gate, or the code has a bad distance but supports a non-Clifford gate in 2D. This result also holds for all spatial dimensions. 

We build on PY's results and assume similar mild assumptions for the family of quantum codes, with the additional assumption that the codes also have some long-range stabilizers supported on $Q$ physical qubits. The proof borrows several ideas from  \cite{pastawski_fault-tolerant_2015}, although it is technically different and shows a different result.

The following theorems and definitions will be important for the subsequent proofs.

\begin{theorem}(Cleaning Lemma) \cite{bravyi_no-go_2009}\cite{pastawski_fault-tolerant_2015}
If a subset $R$ supports no logical operator (except the one with trivial action), then any logical operator $P$ can be cleaned from $R$.
\end{theorem}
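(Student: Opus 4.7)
The plan is to encode the hypothesis as a statement about symplectic complements in the binary symplectic space $V_R$ of Paulis on $R$ modulo phases, and then recover the cleaned representative by symplectic duality. Let $S_R = \{s|_R : s \in \langle S\rangle\} \subseteq V_R$ be the subspace of restrictions of stabilizers to $R$, and let $S_R^\perp \subseteq V_R$ denote its symplectic complement (the Paulis on $R$ that commute with every $s|_R$). An element $Q \in V_R$ sits in $S_R^\perp$ precisely when $Q \otimes I_{R^c}$ commutes with every stabilizer, i.e., when $Q \otimes I_{R^c} \in N(\langle S\rangle)$. The hypothesis that $R$ supports no non-trivial logical operator then forces $Q \otimes I_{R^c} \in \langle S\rangle$, so $Q \in S_R$. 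This yields the key inclusion $S_R^\perp \subseteq S_R$.

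To clean a given logical operator $P$, write $P = P_R \otimes P_{R^c}$. The goal is to find $s \in \langle S\rangle$ with $s|_R = P_R$ in $V_R$; then $s \cdot P$ is supported on $R^c$ and represents the same logical operator as $P$. Since $P \in N(\langle S\rangle)$, it commutes in particular with every stabilizer of the form $Q \otimes I_{R^c}$ with $Q \in S_R^\perp$, which means $P_R$ is symplectically orthogonal to all of $S_R^\perp$. Hence $P_R \in (S_R^\perp)^\perp = S_R$ by the double-perp identity in the non-degenerate symplectic space $V_R$, and the required $s$ exists.

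The only subtle point is tracking the phase: the element $s$ produced above satisfies $s|_R = \pm P_R$ as signed Pauli operators, but any residual sign is a global $\pm 1$ on the full operator and is harmless because $P$ and $sP$ agree as elements of $N(\langle S\rangle)/\langle S\rangle$. The main obstacle, conceptually, is recognizing that the geometric condition on $R$ is exactly the algebraic condition $S_R^\perp \subseteq S_R$; once this is in place, the conclusion is a one-line symplectic duality argument.
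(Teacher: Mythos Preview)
The paper does not prove this statement; it merely quotes the Cleaning Lemma and cites \cite{bravyi_no-go_2009} and \cite{pastawski_fault-tolerant_2015} for its proof. Your argument is correct and is essentially the standard symplectic-duality proof from those references: the hypothesis that $R$ supports no non-trivial logical operator is exactly the statement $S_R^\perp \subseteq S_R$, and then $(S_R^\perp)^\perp = S_R$ in the non-degenerate symplectic space $V_R$ gives $P_R \in S_R$, producing the cleaning stabilizer.
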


\begin{theorem}(Union Lemma for stabilizer codes)\cite{bravyi_no-go_2009}\cite{pastawski_fault-tolerant_2015}
For a stabilizer code, let $R_1$ and $R_2$ be two disjoint sets of qubits. Suppose there exists a complete set of stabilizer generators $S$ such that the support of each generator overlaps with at most one of $\{R_1,R_2\}$. If $R_1$ and $R_2$ are bare cleanable, then the union $R_1 \cup R_2$ is also bare cleanable. 
	
\end{theorem}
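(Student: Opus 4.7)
The plan is to build a cleaned representative by peeling the support off $R_1$ and $R_2$ in two stages, using the separation of the generating set $S$ to prevent the second stage from spoiling the first. First I would partition $S$ into three disjoint classes suggested by the hypothesis: $S_1$ consisting of the generators whose support meets $R_1$ (and hence avoids $R_2$), $S_2$ consisting of those whose support meets $R_2$ (and hence avoids $R_1$), and $S_0$ consisting of those whose support avoids both. Any stabilizer then factors as $s = s_1 \cdot s_2 \cdot s_0$ with $s_i \in \langle S_i\rangle$, and, crucially, $s_1|_{R_2} = I$ and $s_2|_{R_1} = I$.

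Next, given a logical operator $L$, I would invoke bare cleanability of $R_1$ to replace it by an equivalent representative $L'$ with $L'|_{R_1} = I$. Then I would invoke bare cleanability of $R_2$ to obtain a stabilizer $s' \in \langle S\rangle$ such that $L'' := L' \cdot s'$ satisfies $L''|_{R_2} = I$. Decomposing $s' = s'_1 \cdot s'_2 \cdot s'_0$ and restricting to $R_2$, the only piece of $s'$ that can act nontrivially on $R_2$ is $s'_2$, so the equation $(L' \cdot s')|_{R_2} = I$ collapses to $L'|_{R_2} \cdot s'_2|_{R_2} = I$ (up to the usual sign). This says that multiplying $L'$ by $s'_2$ alone suffices to kill its support on $R_2$.

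Therefore I would take as the final representative $\widetilde{L} := L' \cdot s'_2$. On $R_2$ it is trivial by the calculation above, and on $R_1$ it is trivial because $L'|_{R_1} = I$ by the first cleaning step and $s'_2|_{R_1} = I$ since $s'_2 \in \langle S_2\rangle$ has no overlap with $R_1$. Hence $\widetilde{L}$ acts as identity on $R_1 \cup R_2$ and is logically equivalent to $L$, which exhibits $R_1 \cup R_2$ as bare cleanable.

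The conceptual obstacle to anticipate is the naive worry that cleaning from $R_2$ might reintroduce weight on the already-cleaned region $R_1$; the disjoint-support hypothesis on the generators is exactly what allows the cleaning stabilizer $s'$ to be split and only its $\langle S_2\rangle$-part retained, so that $R_1$ is left untouched by the second stage. No other step requires substantive calculation beyond tracking the restrictions $|_{R_1}$ and $|_{R_2}$ of the Pauli factors.
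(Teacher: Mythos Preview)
Your proof is correct and is essentially the standard argument for the Union Lemma. The paper itself does not prove this statement; it quotes it with citations to \cite{bravyi_no-go_2009} and \cite{pastawski_fault-tolerant_2015}, where the same two-stage cleaning argument (partitioning the generating set according to which of $R_1$, $R_2$ each generator meets, then cleaning in sequence and retaining only the $\langle S_2\rangle$-part of the second cleaning stabilizer) appears.
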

\begin{theorem}(Union Lemma for subsystem codes)\cite{pastawski_fault-tolerant_2015}
For a subsystem code, let $R_1$ and $R_2$ be two disjoint sets of qubits. Suppose there exists a complete set of gauge group generators such that the support of each generator overlaps with at most one of $\{R_1,R_2\}$. If $R_1$ and $R_2$ are dressed cleanable, then the union $R_1 \cup R_2$ is also dressed cleanable.   
\end{theorem}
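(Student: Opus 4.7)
The plan is to exploit the decomposition of the gauge group afforded by the hypothesis: since each gauge generator's support intersects at most one of $R_1, R_2$, I can partition the generators into three disjoint classes --- those whose support meets $R_1$, those whose support meets $R_2$, and those whose support meets neither --- and hence factor any $g \in \mathcal{G}$ (up to sign) as $g = g_1 g_2 g_0$, where $g_i$ is a product of generators from the $i$-th class. Crucially, $g_2$ has no support on $R_1$, and $g_1, g_0$ have no support on $R_2$. I am working with dressed logicals, so two representatives are equivalent whenever they differ by an element of $\mathcal{G}$, and cleaning concerns only supports, so signs arising from reordering the $g_i$ will be irrelevant.

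Given any nontrivial dressed logical $P$, I first apply dressed cleanability of $R_2$ to produce $h \in \mathcal{G}$ with $P h$ supported outside $R_2$. Decompose $h = h_1 h_2 h_0$ as above and set $\tilde P := P h_2 h_0$, which is still gauge-equivalent to $P$. Since $P h = P h_1 h_2 h_0$ (up to sign) has no support on $R_2$, and $h_1$ alone also has no support on $R_2$, the operator $\tilde P = (P h) h_1$ has support disjoint from $R_2$. Thus I have produced a representative of $P$ cleaned off $R_2$, and the key additional control is that the gauge element I used to do so is decomposed according to the $R_1 / R_2 / \text{neither}$ partition.

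Next I apply dressed cleanability of $R_1$ to $\tilde P$: there exists $g \in \mathcal{G}$ such that $\tilde P g$ is supported outside $R_1$. Decompose $g = g_1 g_2 g_0$ in the same way, and consider the operator $\tilde P g_1 g_0$, which is gauge-equivalent to $\tilde P$ and hence to $P$. On one hand, the support of $\tilde P g_1 g_0$ is contained in $\mathrm{supp}(\tilde P) \cup \mathrm{supp}(g_1) \cup \mathrm{supp}(g_0)$, each of which is disjoint from $R_2$, so it has no support on $R_2$. On the other hand, $\tilde P g_1 g_0$ equals $(\tilde P g) g_2^{-1}$ up to sign, and both $\tilde P g$ and $g_2$ are supported outside $R_1$, so it has no support on $R_1$ either. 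Therefore $\tilde P g_1 g_0$ is supported outside $R_1 \cup R_2$ and witnesses dressed cleanability of the union.

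The main obstacle is not logical --- the argument is a two-step cleaning --- but bookkeeping: one must verify that the factorization $g = g_1 g_2 g_0$ is available at the level of gauge group elements (not just generators), and that the signs picked up by rearranging the three factors are themselves gauge elements, so that equivalence modulo $\mathcal{G}$ is preserved at every step. This is handled by observing that any commutator $[g_i, g_j]$ lies in $\mathcal{G}$, so choosing a fixed ordering for the decomposition affects only the overall sign and not the coset of $P$ in $\mathcal{N}(\mathcal{G})/\mathcal{G}$, which is all that cleaning requires.
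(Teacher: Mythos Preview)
The paper does not supply its own proof of this statement; it is quoted as a cited result from Pastawski--Yoshida and used as a black box in the later arguments on geometrically local codes. Your argument is correct and self-contained. The key device --- partitioning the gauge generators according to which of $R_1, R_2$ (if any) their support meets, so that every $g \in \mathcal{G}$ factors (up to sign) as $g_1 g_2 g_0$ with $g_1$ supported off $R_2$ and $g_2$ supported off $R_1$ --- is exactly what the hypothesis is designed to enable, and your two-step cleaning exploits it cleanly. One minor simplification: in the first step you need not decompose $h$ at all; taking $\tilde P := Ph$ directly already gives a representative supported off $R_2$, and the decomposition of $g$ in the second step does all the real work. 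Your remark about signs is also harmless but unnecessary: supports are insensitive to global phases, so the ordering of the $g_i$ never affects the conclusion.

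For comparison, the proof in the cited reference takes the dual route: it assumes a bare logical $L$ supported on $R_1 \cup R_2$, writes $L = L|_{R_1} \cdot L|_{R_2}$ by restriction, and uses the same generator partition to show that each restriction separately commutes with all of $\mathcal{G}$, hence is itself a bare logical supported on its respective $R_i$ and therefore trivial by dressed cleanability of $R_i$. Your cleaning-based argument and this restriction-based argument are the two sides of the cleaning-lemma duality between ``$R$ supports no nontrivial bare logical'' and ``every dressed logical can be pushed off $R$ by a gauge element''; both are valid and of comparable length.
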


\begin{definition}\cite{pastawski_fault-tolerant_2015}
A region $R$ is bare-cleanable (dressed cleanable) if it supports no non-trivial dressed (bare) logical operators.
\end{definition}

\begin{definition}
The complement of a region $ M $ on a set of physical qubits is denoted by $ \overline{M} $.
\end{definition}

\begin{definition}
$\mathcal{B}(R, r)$ is an $r$-neighbourhood of a region $R$ which includes region $R$ and all particles within distance $r$ to it. The spread $s_U$ is defined as the smallest distance such that $\forall A: \mathrm{supp}(UAU^{\dagger})\subseteq \mathcal{B}(\mathrm{supp}(A), s_U)$.
\end{definition}

\begin{definition}
The boundary of a region $ M $, denoted by $ \partial M $, is defined as the smallest support of the set of stabilizer generators that have non-trivial overlap with both $ M $ and $ \overline{M} $ over all possible choices of basis set for $ \langle S\rangle $. Denote the boundary of $ M $ that lies outside of $ M $ as $ \partial_{+} M $ and the boundary of $ M $ that lies inside $ M $ as $ \partial_{-} M $. 
	
\end{definition}

\subsection{Geometrically local stabilizer codes with long range stabilizers} \label{subsec: geo_local_codes}
\begin{theorem}\label{main theorem: stabilizer codes}
Consider a family of stabilizer codes with geometrically local stabilizer generators embedded in $D$ spatial dimension, but support long range connectivity on some of the qubits. For any code $C$ in the family, let $ Q $ denote the set of qubits that supports long range connectivity and non-local stabilizers. Suppose the distance of the family of codes grows at least logarithmically with system size: $d=\Omega(\log(n))$. Then, given any constant depth circuit that is local with respect to the connectivity of the code $ C $ and implements a logical single qubit gate $ U $, there exists $ H = \mathcal{B}(Q, (2^{D+1}+1) s_U+c) $, where $ c $ is a constant that depends on $ s_U $, $ D $ and the radius of the local stabilizer generators of $ C $, such that if a pair of logical representatives $\{x_L,z_L\}$ for a logical qubit $ q_L $ is fully supported on $ \overline{H} $, then $ U $ must belong to $\mathcal{C}^{(D)}$ with respect to its logical action on $ q_L $.
	
\end{theorem}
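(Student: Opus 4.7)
The plan is to adapt the Pastawski--Yoshida proof of Theorem \ref{PY bound on transversal gate} by running their cleaning and partitioning argument entirely inside $\bar{H}$, so that the long-range stabilizers supported on $Q$ never enter the picture. The recursive characterization $U\in\mathcal{C}^{(k)}$ iff $UPU^{\dagger}P^{\dagger}\in\mathcal{C}^{(k-1)}$ for every Pauli $P$ reduces the theorem to showing that, after $D$ nested commutators with logical Pauli representatives of $q_L$, the resulting operator acts trivially on the codespace once restricted to $q_L$. Since $\{x_L, z_L\}$ are supported on $\bar{H}$ by hypothesis, we may use those representatives (or translates of them obtained by cleaning) throughout the induction and stay safely away from $Q$.

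First I would set up a checkerboard-style partition of a slab inside $\bar{H}$ into $2^D$ color classes, with each cell of diameter much less than $d=\Omega(\log n)$ and separated by buffers wider than the radius $r$ of the local generators. Inside $\bar{H}$ the only stabilizer generators touching the slab are the geometrically local ones, so the Union Lemma for stabilizer codes applies to any union of $2^D-1$ classes, and each individual cell is bare cleanable by the Cleaning Lemma. Consequently any logical representative of $q_L$ that happens to sit inside the slab can be moved, by cleaning, to any single class of the partition.

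The induction then mirrors Pastawski--Yoshida's: at level $k$ I write $UP U^{\dagger}$ for $P$ a cleaned representative on a single color class as a product of local operators on the $2^D$ classes; conjugation by $U$ expands the support of each piece by at most $s_U$. This is why the slab must lie at distance at least $(2^{D+1}+1)s_U+c$ from $Q$: after $D$ levels of nested commutators, the accumulated spread is at most $\sim 2^D s_U$, and the remaining slack (one extra $s_U$ plus the constant $c$ absorbing the cell diameters and the local-generator radius $r$) is needed to keep the enlarged partition cells strictly inside $\bar{H}$ and to keep the buffers wider than $r$ after the spreading. At level $D$ the nested commutator, restricted to the codespace and pulled back to $q_L$ using the cleaning lemma, must act as a scalar, placing the logical action of $U$ inside $\mathcal{C}^{(D)}$.

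The main obstacle is the geometric bookkeeping: one must simultaneously arrange (i) each partition cell has diameter strictly below $d$ so cleanability holds, (ii) the buffers remain of width at least $r+s_U$ after each of the $D$ recursion steps so that the Union Lemma continues to apply, and (iii) every intermediate operator produced by the nested commutators remains supported inside $\bar{H}$. All three are linear constraints on the radius of $H$, and the choice $H=\mathcal{B}(Q,(2^{D+1}+1)s_U+c)$ is tuned to satisfy them jointly. Once this accounting is in place, the actual induction step is identical to that of Pastawski--Yoshida, because within $\bar{H}$ the code is indistinguishable from a geometrically local stabilizer code on the relevant slab.
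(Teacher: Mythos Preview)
Your high-level strategy—adapt Pastawski–Yoshida by keeping the cleaning and nested-commutator argument away from $Q$—is exactly what the paper does. But the execution you describe has two concrete problems.

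First, the partition. The paper (following PY and Bravyi–K\"onig) builds $D{+}1$ regions $R_0,\dots,R_D$ covering the whole lattice, then sets $R_m:=R_m'\setminus H'$. The nested commutators $L_1'=UL_1U^\dagger L_1^\dagger$, $L_i'=L_iL_{i-1}'L_i^\dagger L_{i-1}'^\dagger$ eliminate one region from the support at each step, so after $D$ steps the operator lives only on the bare-cleanable $R_0$. A $2^D$-colour checkerboard does not fit this counting, and your description (``write $UPU^\dagger$ as a product of local operators on the $2^D$ classes'') is a different mechanism whose relation to the $\mathcal{C}^{(D)}$ bound is not explained. Also, ``clean a logical representative to a single class'' cannot mean a single cell (that would force weight ${<}d$), and if it means a single colour class you have not said how the recursion then proceeds.

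Second, and more seriously, the claim that ``within $\bar H$ the code is indistinguishable from a geometrically local stabilizer code'' glosses over the actual technical content of the proof. Because $\bar H$ has a boundary, the regions $R_m$ have boundaries along $\partial H_1$, and when you clean $L_i$ from $R_i^+$ the resulting support acquires a piece $\partial_+R_i^+\cap H_1$. After spreading by $U$ these boundary pieces do \emph{not} land back inside the other $R_j$'s. The paper devotes most of its argument to tracking exactly these terms: it shows they sit in thin shells $\partial\mathcal{B}\!\bigl(Q,(2^{D}{+}1-2^{i-1})s_U+c\bigr)\cap\partial_+R_i^+$, that each is $O(d^{(D-1)/D})$ and hence individually bare-cleanable, that they are pairwise disjoint and disjoint from $R_0$, and that after the $D$-fold commutator only their intersection survives. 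This boundary analysis—not the PY induction itself—is what makes the radius $(2^{D+1}{+}1)s_U+c$ appear, and it is entirely missing from your sketch; without it the induction already fails at $L_1'$.
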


\begin{proof}
Let $ C $ be a code in the family with $ n $ physical qubits, and let $ Q $ be the set of qubits that supports long range stabilizers. We want to construct $ D+1 $ bare cleanable regions, and use it to show that a constant depth circuit that respects the locality of the code $ C $ and implements the gate $ U $ must belong to $\mathcal{C}^{(D)}$ for a logical qubit $ q_L $, if it has a pair of logical representatives from the logical set $ \{x_L, y_L, z_L \}$ that can be fully supported on a region far enough from $ Q $. We denote $ r_{\mathrm{local}} $ as the maximum radius of local stabilizer generators in $C$.
	
\paragraph{Construction of D+1 regions} First we split the D dimensional lattice into unit cells with volume $ v_c = \alpha d$, where $ d $ is the distance and $ \alpha $ is a constant. For a fixed constant $ r = O(1) $, we can pick a ball of  radius $ r $ from each cell, such that each ball is spatially disjoint from the other balls and from the boundaries of the lattice by a constant physical distance that is at least $r_{\mathrm{local}}$ (See Figure \ref{fig: longrange}).
	
Let $ R_0' $ denote the union of the balls. $ R_0' $ may have overlap with $  H' := \mathcal{B}(Q,  (2^{D}+1) s_U+c')$, where $c'>2r_{\mathrm{local}}$. We deform $H'$ locally so that the balls that overlap with $ H' $ are now contained in $ H' $. Let $c_0$ denote the maximum physical distance of the deformation from the boundary of $H'$. Then, we set $c=c_0 +c'$ and define $H_1$ as $ H_1 := \mathcal{B}(Q,  (2^{D}+1) s_U+c)$, with $ H' \subseteq H_1 $. 
	
Let $ R_0 := R_0'\backslash H' $. This is the set of balls that are far away from $H'$. We want to show that $ R_0 $ forms a bare cleanable region. Since each ball is $ o(d) $, they are each bare cleanable. Further, we have picked the balls to be supported away from the boundaries of the unit cells, such that the balls are at least $2r_{\mathrm{local}}$ apart, so any local stabilizer generator overlaps with at most 1 ball and since $ R_0 $ does not overlap with $ H' $, any long range stabilizers that are only supported on $ Q $ are spatially disjoint from $ R_0 $. Thus, the Union Lemma can be applied to $ R_0 $, and we conclude that $ R_0 $ forms a bare cleanable region. 
	
Using $ R_0'$ as the set of mutually disjoint balls on the code, one can draw lines connecting these balls, then fatten the lines to form a covering for all the physical qubits, following the same construction in PY’s result. This gives a covering of the full lattice with $ R_m' $ for $ m \in [0, D] $. $ R_m' $ consists of $ m $ dimensional connected components. $ R_D' $ consists of $ D $ dimensional skewed cells. See Figure \ref{fig: longrange} for illustration. 

\begin{figure}[ht]
    \centering
    \includegraphics[width=\linewidth]{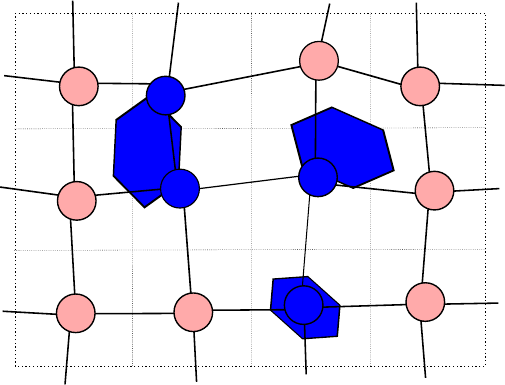} 
    
    \caption{An illustration of the construction of 3 regions $R_0,R_1,R_2$ for a 2D lattice. The unit cells are represented by the dotted lines in the background. The pink (light shade) and blue (dark shade) balls are picked from each cell such that they are spatially disjoint. The blue (dark shade) hexagons represent $H'$. $R_0'$ is the set of all the balls. $R_0$ is given by the set of pink balls that have no overlap with $H'$. After merging in the blue balls with $H'$, the entire blue region gives the new $H'$. } 
    \label{fig: longrange} 
\end{figure}

As the new regions can overlap with $H'$, we define the regions as follow: $ R_m := R_m'\backslash  H'$, for $1\leq m \leq D$. We show that each region $ R_m $ is bare cleanable. The volume of each connected component $ r_m' $ of  $ R_m' $ is at most $ O(\alpha d^{\frac{m}{D}})=O(d)$. Since any region with volume less than the distance $ d = \Omega(\log(n)) $ is bare cleanable, $ r_m'\backslash H'=O(d) $ is also a bare cleanable object. Since each $ r_m' $ is spatially disjoint, and each $ r_m'\backslash H'$ is also spatially disjoint. Supported away from the qubits with long range stabilizers, $ r_m'\backslash H'$ can also be cleaned using local stabilizers that have no overlap with other $ m$-dimensional objects. Thus, we can apply the Union Lemma to conclude that $ R_m $ is bare cleanable. 
	 
Given a constant depth circuit, we have to check that each region $R_m^+ := \mathcal{B}(R_m, 2^{m-1} s_U)$ is still bare cleanable for $ m>0 $, which is a condition required for the next part of this proof. For each connected component $ r_m' $ in $ R_m' $, any Pauli operator supported on $ r_m'\backslash H'$ is spread locally since it is at least $(2^{D}+1)s_U$ away from $Q$, the region that does support long range connectivity. Further, $\mathcal{B}(r_m'\backslash H', 2^{m-1} s_U)$ has no overlap with $Q$, so cleaning $ \mathcal{B}(r_m'\backslash H', 2^{m-1} s_U)$ does not spread Pauli operators supported on one of the $r_m'\backslash H'$ to other objects in $ R_m^+$ by any long range stabilizers. Since $ r_m'\backslash H' $ increases by a constant radius of $ 2^{m-1} s_U =O(1)$, $ \mathcal{B}(r_m'\backslash H', 2^{m-1} s_U) $ remains bare cleanable. By choosing $ r $ to be greater than $  2^{m} s_U$, each object of the form $ \mathcal{B}(r_m'\backslash H', 2^{m-1} s_U) $ is disconnected from the other objects in $ R_{m}^+$. Therefore, the union of $ R_m^+ $ is still bare cleanable.
	 
\paragraph{Proof that $ U $ lies in $ \mathcal{C}^{(D)} $ }
	 
Next we prove that $ U $ must be in $ \mathcal{C}^{(D)} $, with respect to the logical qubit $ q_L $. Suppose the logical representatives $\{x_L,z_L\}$ are fully supported on $\overline{H} := \overline{\mathcal{B}(H_1,2^D s_U)}$. The proof holds if we replace $H_1$ with $H'$ but for simplicity and better clarity we will use $H_1$ instead. 
	 
Without assuming that $ H$ is bare cleanable, we pick two arbitrary logical representatives $ L_1 $ and $ L_2 $ for the logical qubit $ q_L $ such that they are fully supported on $ \overline{\mathcal{B}(H_1,2^D s_U)} $. We show that $ U \in \mathcal{C}^{(2)} $ with respect to the logical qubit for $ D=2 $ before generalizing the proof to arbitrary dimension. For $D=2$, $ H_1 = \mathcal{B}(Q, 5 s_U+c), $ $  R_1^+ = \mathcal{B}(R_1,s_U) $, $R_2^+ = \mathcal{B}(R_2,2s_U) $
	 
$ L_1 $ can be cleaned from $ R_1^+ $, since $ R_1^+ $ is dressed cleanable. After cleaning, $L_1$ is supported on:
\begin{equation*}
    D_1 = \overline{\mathcal{B}(H_1,4 s_U)} \backslash R_1^+ \cup \partial_+ R_1^+
\end{equation*}
	 
$ L_2 $ can be cleaned from $ R_2^+ $, and is supported on:
\begin{equation*}
     D_2 = \overline{\mathcal{B}(H_1,4 s_U)} \backslash R_2^+ \cup \partial_+ R_2^+
\end{equation*}

Let $ L_1' := UL_1U^{\dagger}L_1^{\dagger} $ and $ L_2' = L_2L_1'L_2^{\dagger}L_1'^{\dagger} $. We show that $L_2'$ is an identity operator up to $\pm 1$ phase. This will imply that $ L_1' $ is a Pauli operator so $ U$ is at most a Clifford gate on $q_L$. 

$L_1'$ is supported fully on $\mathcal{B}(D_1, s_U)$ which is given by:
\begin{eqnarray*}
    &&\mathcal{B}(D_1, s_U) \\
    &=& \mathcal{B}(\overline{\mathcal{B}(H_1,4 s_U)} \backslash R_1^+ , s_U) \cup \mathcal{B}(\partial_+ R_1^+, s_U) \\
    &=&\overline{\mathcal{B}(H_1,3 s_U)} \backslash R_1 \cup \mathcal{B}(\partial_+ R_1^+, s_U) 
\end{eqnarray*}
The first term is in $R_0 \cup R_2$. The second term can be expanded as follows: 
\begin{eqnarray*}
    && \mathcal{B}(\partial_+ R_1^+, s_U) \\
    &=& \mathcal{B}(R_0 \cap \partial_+ R_1^+, s_U)\cup \mathcal{B}(R_2 \cap \partial_+ R_1^+, s_U) \cup  \mathcal{B}(H_1 \cap \partial_+ R_1^+, s_U)
\end{eqnarray*}

First two terms are in $R_0$ and $R_2$ respectively. The last term is $\mathcal{B}(\partial \mathcal{B}(Q,  4 s_U+c) \cap \partial_+ R_1^+, s_U) $, with a \textit{local} spread of constant $ s_U $, since the support is at least $ s_U $ away from $ Q $.

We find that $ L_1'$ is fully supported on $ R_0,R_2 $ and $\mathcal{B}(\partial \mathcal{B}(Q,  4 s_U+c) \cap \partial_+ R_1^+, s_U) $.
 	
$ L_2'$ is fully supported on the region that supports $L_1'$, by treating $L_2$ as a unitary operator acting on $L_1'$. Treating $L_1'$ as a unitary operator acting on $L_2$, we also obtain that $ L_2'$ is fully supported on $\mathcal{B}(D_2, 2 s_U)$, which is given by:
\begin{eqnarray*}
    &&\mathcal{B}(D_2, 2 s_U) \\
    &=& \mathcal{B}(\overline{\mathcal{B}(H_1,4 s_U)} \backslash R_2^+ , 2 s_U) \cup \mathcal{B}(\partial_+ R_2^+,2 s_U) \\
    &=&\overline{\mathcal{B}(H_1,2 s_U)} \backslash R_2 \cup \mathcal{B}(\partial_+ R_2^+, 2s_U) 
\end{eqnarray*}
The first term is contained within $R_0 \cup R_1$. The second term can be expanded as: 
\begin{eqnarray*}
    && \mathcal{B}(\partial_+ R_2^+, s_U) \\
    &=& \mathcal{B}(R_0 \cap \partial_+ R_2^+, 2s_U)\cup \mathcal{B}(R_1 \cap \partial_+ R_2^+, 2s_U) \cup  \mathcal{B}(H_1 \cap \partial_+ R_2^+, 2s_U)
\end{eqnarray*}
First two terms are in $R_0$ and $R_1$ respectively. The last term can be rewritten as $\partial \mathcal{B}(Q,  3 s_U+c) \cap \partial_+ R_2^+ $, with a spread of constant $ 2 s_U $ which is local since the support is at least $ s_U$ away from $ Q $.
 	
Thus, $ L_2' $ must be supported on $ R_0 $ and the intersection between the two different boundaries each with some constant spread. However this is upper bounded by the support on the union of $ R_0 $ and the overlap of $ R_1^+ $ and $ R_2^+ $ with a constant local spread of $ 3 s_U $. These boundary objects have volume at most $O(d^{\frac{D-1}{D}})$ by construction, since they have a constant width in one of the dimensions. Thus, each object is bare cleanable, and by construction, they are spatially disjoint objects. Further, they are supported only on the boundary of $H_1$, so they can be cleaned by local stabilizers. Thus, by the Union Lemma, they form a bare cleanable set. As the boundary objects have no overlap with $ R_0 $ since they must be fully supported on the boundary of $\mathcal{B}(Q,  4 s_U+c)$ and $\mathcal{B}(Q, 3 s_U+c)$ with a constant spread of $2s_U$ at most, which is disjoint from $ R_0 $ by construction, the union of $ R_0 $ and the boundaries is also bare cleanable. Hence, $ L_2' $ is an identity operator. This implies that $ L_1' $ is a Pauli operator so $ U$ is at most a Clifford gate on $q_L$. 

Next, we generalize this to $ D $ dimension with $D+1$ regions. Pick $ D $ arbitrary logical representatives that are fully supported on $ \overline{\mathcal{B}(H_1, 2^Ds_U)} $, where $H_1 = \mathcal{B}(Q,  (2^{D} +1)s_U+c)$ as defined earlier. 

Each $L_i$ is cleaned from $ R_i^+ $. The support of $L_i$ is given by:
\begin{equation*}
    D_i = \overline{\mathcal{B}(H_1,2^D s_U)} \backslash R_i^+ \cup \partial_+ R_i^+
\end{equation*}

Define $ L_1' = UL_1U^{\dagger}L_1^{\dagger} $ and $L_i' = L_iL_{i-1}'L_i^{\dagger}L_{i-1}'^{\dagger}  $ for $ i>1 $ recursively. 

Using similar arguments as in the $D =2$ case, one can show that $L_i'$ is supported on:
\begin{equation*}
    \bigcup_{j \neq i} R_j \cup \mathcal{B}(\partial_+ R_i^+ ,  2^{i-1}s_U)
\end{equation*}

$\mathcal{B}(\partial_+ R_i^+ ,  2^{i-1}s_U)$ can be written as: 

\begin{equation*}
    \mathcal{B}(\bigcup_{j}R_j \cap \partial_+ R_i^+ ,  2^{i-1}s_U) + \partial \mathcal{B}(Q,  (2^{D} +1-2^{i-1})s_U+c)\cap\partial_+ R_i^+ 
\end{equation*}

The second term in the above equation is a boundary term. One can show that $ L_{D}' $ must be supported on the union of $ R_0 $ and the intersection between the different boundary terms. The latter is upper bounded by $ \bigcap_{k \neq 0}R_k^+ $, with a spread of at most $  2^{D-1}s_U $, which consists of 1 dimensional objects with constant width in other dimensions. These objects are bare cleanable since they are of volume $O(d)$. It is also easy to verify that each bare cleanable object is spatially disjoint and is supported outside of $ Q $ and $ R_0 $. Thus, $ L_{D}' $ is bare cleanable and must be a trivial operator. We can then recursively deduce that $ L_{i}' \in \mathcal{C}^{(D-i)} $. Then, $ U \in \mathcal{C}^{(D)} $.

\end{proof}

\begin{corollary}

($H$ is bare cleanable) In the case where $ H = O(d) $ is bare cleanable, we can show that $ R_0 \cup H $ forms a bare cleanable region since each connected component is bare cleanable and the stabilizer generators only overlap with at most one connected component. For any logical qubit, it is bare cleanable from $ H $, so it can be fully supported on $ \overline{H} $, and it follows that any logical gate $ U $ must lie in $ \mathcal{C}^{(D)} $.
		
\end{corollary}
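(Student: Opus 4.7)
The plan is to reduce this corollary to the main theorem (Theorem \ref{main theorem: stabilizer codes}) by showing that, under the extra hypothesis that $H$ is itself bare cleanable with $|H|=O(d)$, one can always arrange for \emph{any} logical qubit to have a pair of representatives supported on $\overline{H}$, thereby removing the qualifier ``if a pair of logical representatives $\{x_L,z_L\}$ for $q_L$ is fully supported on $\overline{H}$'' from the hypothesis of the main theorem. Once every logical qubit satisfies this condition, the conclusion $U \in \mathcal{C}^{(D)}$ applies uniformly.

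First I would recall that $R_0$ was already proven bare cleanable inside the proof of Theorem \ref{main theorem: stabilizer codes}, and was constructed to lie in $\overline{H}$ with its constituent balls separated from $H$ by a physical distance greater than $r_{\mathrm{local}}$ (the radius of the local stabilizer generators). The hypothesis of the corollary adds that $H$ is bare cleanable. To invoke the Union Lemma for stabilizer codes on $R_0 \cup H$, I need a complete set of stabilizer generators such that each generator overlaps with at most one of $R_0$ and $H$. I would produce such a generating set by splitting the stabilizer group into the subgroup generated by the local stabilizers and the subgroup generated by the long-range stabilizers: the long-range generators are, by hypothesis, supported on $Q \subseteq H$, so they do not touch $R_0$; and each local generator, having radius at most $r_{\mathrm{local}}$, cannot bridge the gap between $R_0$ and $H$ by construction of $R_0$. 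Hence the Union Lemma applies and $R_0 \cup H$ is bare cleanable.

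Next, given any logical qubit $q_L$ with arbitrary initial representatives $x_L, z_L$, I would invoke the Cleaning Lemma on the bare-cleanable region $H$ alone to produce representatives $x_L', z_L'$ that are fully supported on $\overline{H}$. This step is essentially mechanical once bare cleanability of $H$ is granted. With these representatives in hand, the hypothesis of Theorem \ref{main theorem: stabilizer codes} is satisfied for the logical qubit $q_L$, and the conclusion $U \in \mathcal{C}^{(D)}$ follows directly.

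The only nontrivial step is the Union Lemma application, i.e.\ the verification that a generating set exists whose elements respect the $\{R_0,H\}$ partition; the main obstacle to a careless argument here is the presence of long-range stabilizers, which a priori could straddle $R_0$ and $H$. That possibility is ruled out precisely by the definition $H \supseteq \mathcal{B}(Q,(2^D+1)s_U+c)$ together with the placement of $R_0$ in $\overline{H}$: every long-range stabilizer, being supported on $Q$, lies inside $H$, and every local stabilizer is too small to reach $R_0$ from $H$ or vice versa. Once this geometric observation is made, the rest of the argument is a one-line appeal to the main theorem.
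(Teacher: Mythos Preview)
Your proposal is correct and follows essentially the same line as the paper: establish that $R_0\cup H$ is bare cleanable via the Union Lemma (using that long-range generators sit inside $Q\subseteq H$ while local generators cannot bridge the $R_0$--$H$ gap), clean any logical pair off $H$ using the Cleaning Lemma, and then feed the resulting $\overline{H}$-supported representatives into Theorem~\ref{main theorem: stabilizer codes}. Your more explicit justification of why no generator can straddle $R_0$ and $H$ is a welcome elaboration of what the paper only asserts.
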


\begin{corollary}(Multiple logical qubits with representatives supported on $\overline{H}$) Suppose $Q_A$ is the set of qubits with logical representatives that can be fully supported on $\overline{H}$ and $Q_B$ is the set of qubits with logical representatives on $H$. Then for any unitary operator that can be implemented by a constant depth circuit and can be decomposed into a tensor product structure of the form $\mathcal{H}_{Q_A}\otimes \mathcal{H}_{Q_B}$, the logical gate acting on $\mathcal{H}_{Q_A}$ must be in $\mathcal{C}^{(D)}$.
    
\end{corollary}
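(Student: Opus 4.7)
The plan is to show that the proof of Theorem \ref{main theorem: stabilizer codes} transfers essentially verbatim to this multi-qubit setting, using the tensor product assumption to decouple the action on $\mathcal{H}_{Q_A}$ from that on $\mathcal{H}_{Q_B}$. First I would verify that, by the hypothesis and by taking products, every logical Pauli operator supported entirely on the logical qubits $Q_A$ admits a physical representative fully contained in $\overline{H} := \overline{\mathcal{B}(H_1, 2^D s_U)}$. This is because any logical Pauli on $Q_A$ is a product of single-qubit logical $X$ and $Z$ operators, each of which by hypothesis has a representative in $\overline{H}$, and the product of supports in $\overline{H}$ is still in $\overline{H}$.

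Next, I would pick $D$ arbitrary logical Pauli operators $L_1, L_2, \ldots, L_D$ on $\mathcal{H}_{Q_A}$, with representatives supported in $\overline{H}$, and build the nested commutators $L_1' = U L_1 U^\dagger L_1^\dagger$ and $L_i' = L_i L_{i-1}' L_i^\dagger L_{i-1}'^{\dagger}$ for $2 \leq i \leq D$, as in the proof of the main theorem. The entire geometric support analysis there — constructing $R_0, R_1, \ldots, R_D$, cleaning each $L_i$ from $R_i^+$, and tracking boundary spreads — depends only on where the representatives sit on the physical lattice, not on which logical qubits they represent. Hence the identical argument shows that $L_D'$ is supported on a bare-cleanable region and therefore acts trivially as a logical operator.

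To translate this back into a statement about $U_A$, I would use the tensor product structure $U = U_A \otimes U_B$. For any Pauli $L_i$ supported on the $\mathcal{H}_{Q_A}$ factor, $U L_i U^\dagger = (U_A L_i U_A^\dagger) \otimes I_B$, so every nested commutator $L_i'$ is likewise of the form $L_i'{}^{(A)} \otimes I_B$ at the level of logical action. The triviality of $L_D'$ therefore says $L_D'{}^{(A)}$ is the identity on $\mathcal{H}_{Q_A}$, which by the usual backward induction on nested commutators forces $L_{i}'{}^{(A)} \in \mathcal{C}^{(D-i)}$ on $\mathcal{H}_{Q_A}$, and hence $U_A \in \mathcal{C}^{(D)}$.

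The main obstacle I expect is the bookkeeping required to promote the single-qubit conclusion of Theorem \ref{main theorem: stabilizer codes} to a genuine multi-qubit Clifford-hierarchy statement on $\mathcal{H}_{Q_A}$: the recursive definition of $\mathcal{C}^{(D)}$ demands that $U_A P U_A^\dagger \in \mathcal{C}^{(D-1)}$ for \emph{every} Pauli $P$ on $\mathcal{H}_{Q_A}$, not merely for one fixed choice. This is why the first step — verifying that an $\overline{H}$-representative exists for every logical Pauli on $Q_A$, including products across different logical qubits — is essential, and it is the place where the hypothesis on $Q_A$ must be used most carefully. Once that is in hand, applying the Theorem \ref{main theorem: stabilizer codes} construction to arbitrary choices of $L_1, \ldots, L_D$ closes the argument.
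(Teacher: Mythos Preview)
Your proposal is correct and follows the natural route the paper implicitly intends: the corollary is stated without proof, immediately after Theorem \ref{main theorem: stabilizer codes}, as a direct extension of that theorem's argument to the multi-qubit setting. Your plan of (i) noting that every logical Pauli on $Q_A$ inherits an $\overline{H}$-supported representative by taking products, (ii) rerunning the geometric nested-commutator construction verbatim for arbitrary $L_1,\ldots,L_D$ on $\mathcal{H}_{Q_A}$, and (iii) invoking the tensor product hypothesis so that each $L_i'$ acts logically only on the $Q_A$ factor, is exactly the expected unpacking, and you have correctly flagged that the tensor product assumption is what allows the backward induction to yield a statement about $U_A$ rather than the full $U$.
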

\subsection{Subsystem codes with long range stabilizers}

For a subsystem code, we cannot employ the same proof from Subsection \ref{subsec: geo_local_codes} as the boundary terms between $R_i^+$ and $H$ are only dressed cleanable. However, we are able to prove a slightly weaker statement with the requirement that the pair of logical representatives are $O(d^{1/D})$ physical distance away from $Q$, unlike in the local stabilizer code case where this only needs to be a constant $O(1)$.

\begin{theorem}\label{main theorem: sub stabilizer codes}

Consider a family of subsystem codes with geometrically local gauge generators embedded in $D$ spatial dimension, but support long range connectivity on some of the qubits. For any code $C$ in the family, let $ Q $ denote the set of qubits that supports long range connectivity and non-local gauges. Suppose the code satisfies the following assumptions. 
\begin{enumerate}
    \item Finite loss threshold: $p_l >0$	
    \item Distance grows at least logarithmically with the system size: $d=\Omega(\log(n))$		
\end{enumerate}
Then, given any constant depth circuit that is local with respect to the connectivity of the code $ C $ and implements a logical single qubit gate $ U $, there exists $ H = \mathcal{B}(Q, c) $, where $ c $ is $ O( \alpha d^{\frac{1}{D}})$ and $\alpha$ a constant, such that if a pair of logical representatives $\{x_L,z_L\}$ for a logical qubit $ q_L $ is fully supported on $ \overline{H} $, then $ U $ must belong to $\mathcal{C}^{(D)}$ with respect to its logical action on $ q_L $.

\end{theorem}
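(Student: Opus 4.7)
The proof will follow the same skeleton as Theorem~\ref{main theorem: stabilizer codes}, but every invocation of bare cleanability must be replaced with dressed cleanability, which, as Pastawski and Yoshida established (Theorem~\ref{PY bound on transversal gate}), is available here because we have assumed a finite loss threshold $p_l > 0$ and a distance $d = \Omega(\log n)$. The plan is to again construct $D+1$ regions $R_0,\dots,R_D$ covering the lattice, pick $D$ logical representatives $L_1,\dots,L_D$ of $q_L$ supported outside a thickened neighborhood of $Q$, clean $L_i$ from $R_i^+ := \mathcal{B}(R_i,2^{i-1} s_U)$, and form the nested commutators $L_1' = U L_1 U^\dagger L_1^\dagger$, $L_i' = L_i L_{i-1}' L_i^\dagger L_{i-1}'^{\dagger}$, with the aim of showing that $L_D'$ is trivial (i.e.\ an element of the gauge group $\mathcal{G}$), forcing $U \in \mathcal{C}^{(D)}$ on $q_L$.

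First I would partition the lattice into unit cells of volume $\alpha d$, pick a ball of radius $r = O(1)$ from each cell, and let $R_0'$ be their union. Defining $H' := \mathcal{B}(Q,c')$ with $c' = O(d^{1/D})$, I would merge any balls meeting $H'$ into an enlarged $H_1$ and set $R_0 := R_0' \setminus H_1$. Connecting the remaining balls by thickened skeleta as in the stabilizer case, I would obtain regions $R_m' $ of $m$-dimensional connected components and set $R_m := R_m' \setminus H_1$, each of whose connected components has volume $O(d)$. By the Union Lemma for subsystem codes, together with the PY-style dressed cleanability of each $O(d)$ component (using the loss threshold), each $R_m^+$ is dressed cleanable away from $Q$, and spatial disjointness is preserved because the $R_m$ components were deliberately chosen far from the cell boundaries and from $H_1$.

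The iterative cleaning step proceeds exactly as in Theorem~\ref{main theorem: stabilizer codes}: each cleaning of $L_i$ off $R_i^+$ pushes its support into $\overline{\mathcal{B}(H_1,2^D s_U)} \setminus R_i^+ \cup \partial_+ R_i^+$, and a tracking of supports of the nested commutators shows that $L_D'$ must lie in the union of $R_0$ with an intersection of expanded boundary sheets $\bigcap_{k \neq 0} R_k^+$ carrying a constant spread $2^{D-1}s_U$. These boundary sheets have constant width in $D-1$ directions and hence volume $O(d)$, which is the regime where the PY dressed-cleanability argument applies. Hence the full support of $L_D'$ is dressed cleanable, so $L_D'$ equals a gauge element, which recursively forces $L_i' \in \mathcal{C}^{(D-i)}$ (modulo gauge), and finally $U \in \mathcal{C}^{(D)}$ as a logical action on $q_L$.

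The main obstacle—and the reason $H$ must be thickened by $c = O(\alpha d^{1/D})$ rather than the $O(1)$ sufficient in the stabilizer case—is that in a subsystem code the boundary intersections of $R_i^+$ with $H_1$ are only \emph{dressed} cleanable, not bare cleanable, and the Union Lemma for subsystem codes only allows us to merge these boundary pieces into $R_0$ if the volume of each connected dressed-cleanable piece remains below the PY threshold. Because these pieces wrap around $\partial H_1$ in $D-1$ transverse directions, their volume scales like $(\mathrm{surface\ area\ of\ }\partial H_1) \cdot O(1) = O(c^{D-1})$; to guarantee this stays $O(d)$ we must choose $c = O(d^{1/D})$. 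With that choice, the boundary pieces are individually dressed cleanable, spatially disjoint from $R_0$ and from one another by construction, so the final Union Lemma application goes through and $L_D'$ is indeed trivial.
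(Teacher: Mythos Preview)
Your proposal has a genuine gap in the final step, and it is precisely the reason the paper's proof departs from the stabilizer-code argument rather than simply replaying it with ``dressed'' replacing ``bare'' everywhere.

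You conclude by arguing that $L_D'$ is supported on $R_0$ together with the boundary intersection pieces near $\partial H_1$, that these boundary pieces are dressed cleanable, and that therefore $L_D'$ is a gauge element. But this inference fails: a region being \emph{dressed} cleanable only means it supports no non-trivial \emph{bare} logical operator. The operator $L_D'$ is a dressed logical (the $L_i$ were cleaned using gauge elements, so each $L_i'$ is a dressed logical), and a dressed logical can perfectly well be supported on a dressed cleanable region without being a gauge element. To force $L_D'$ to lie in $\mathcal{G}$ you need its support to be \emph{bare} cleanable. In the stabilizer case the boundary pieces were bare cleanable because small-volume regions in a stabilizer code are automatically bare cleanable; in a subsystem code they are only dressed cleanable, and the Union Lemma for subsystem codes combines two dressed cleanable regions into a dressed cleanable region, which is not enough.

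The paper resolves this differently: rather than trying to control the boundary pieces, it eliminates them altogether. It enlarges $H_1$ by an entire shell $H_2$ consisting of all connected components of the $R_m'$ that touch $\partial_+ H_1$, and sets $H = H_1 \cup H_2$. Because each logical representative $L_i$ lives outside this shell, cleaning $L_i$ off $R_i^+$ never pushes support into $\partial H_1$ at all: after cleaning, $L_i$ is supported purely on $\bigcup_{k\neq i} R_k$ with no boundary remainder. Consequently $L_D'$ is supported on $R_0$ alone, which \emph{is} bare cleanable (this is where the finite loss threshold enters, via a probabilistic argument that every unit cell contains a correctable ball), and the conclusion follows. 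The reason $c = O(\alpha d^{1/D})$ is thus not your volume estimate for the boundary pieces, but simply that the buffer shell $H_2$ has thickness equal to the linear size of a unit cell, namely $(\alpha d)^{1/D}$.
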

	
\begin{proof}
		
% Introduction 

Let $ C $ be the code of interest satisfying the assumptions in the theorem, and let $ d $ be the distance of the code. Let $ U $ be a unitary gate that is implemented by a constant depth circuit. Let $r_{\mathrm{local}}$ be the smallest radius of ball that is needed to cover a local gauge generator.

If $ C $ can be embedded in a $D$ dimensional lattice, we will construct $ D+1 $ regions, with $ R_0 $ as the bare cleanable region, and $ R_i, i \in [1,D]$, as dressed cleanable regions and use them to show that a unitary $ U $ implemented by a constant depth circuit that respects the connectivity on $ C $ must belong to $\mathcal{C}^{(D)}$, for the logical qubit that can be supported outside of $H = \mathcal{B}(Q, c) $, where $ c $ is given by $ O( \alpha d^{\frac{1}{D}}) $

%			2. construct covering of lattice with cell vol clog(n)

\paragraph{Construction of D+1 regions} First we split the $ D$ dimensional lattice into unit cells with volume $ v_c = \alpha d $, where $ d $ is the distance and $ \alpha $ is a constant. 

By definition of the loss error threshold, each unit cell contains a ball of constant radius that is correctable with probability approaching unity as the system size $ n $ grows, using the following lemma.

\begin{lemma}
	The probability of finding a correctable ball of constant radius $ r $ in a unit cell approaches unity as $ n $, the number of qubits in a $ D $ dimensional lattice, increases, for a cell of volume of at least $ \log(n) $.
\end{lemma}

\begin{proof}
	Let the probability of a ball of radius $ r $ not being correctable be $ q $, which is a constant since each qubit has a fixed probability of being correctable. 
	The number of balls that can be packed in the subcell is roughly given by $k =\log(n) /(\beta r^D)  $ where $ \beta r^D $ is the volume of a ball.
	The probability of finding no ball being correctable is less than the probability of finding none of the  $\log(n) /(\beta r^D) $ balls being correctable, and this is given by 
	
	\begin{eqnarray*}
		&&\mathrm{prob} (k \mathrm{\: balls \: each \: not \: correctable\:})\\
		&=& q^{k}\\
		& =& e^{-\Omega(\log(n))} 
	\end{eqnarray*}
\end{proof}
This tends to $0$ at a rate that is polynomial in $n$. Hence, with probability approaching unity, each unit cell includes at least a ball of radius $ r $ in each cell, such that each ball is spatially disjoint from the other balls.

Let $ R_0' $ denote the union of these balls. Using $ R_0'$ as the set of mutually disjoint balls on the lattice, one can draw lines connecting these balls, then fatten the lines to form a covering for all the physical qubits, following the same construction in PY’s result. This gives a covering of the full lattice with $ R_m' $ for $ m \leq D $. 

%			3. those with overlap with Q will be merged with it. 
%			5. form bare and dressed cleanable regions outside

However, if any connected component intersects with $ H_1= \mathcal{B}( Q,(2^{D}+1) s_U+c') $, where $c' > 2r_{\mathrm{local}}$, then we continuously deform $ H_1 $ so that these objects are contained within $ H_1$. This will fatten $ H_1 $ by at most $ c_0=(\alpha d)^{\frac{1}{D}} $. We consider the union of any connected component in any region $ R_i' $ that overlaps non-trivially with $ \partial_{+} H_1 $ but does not overlap with $ H_1$. Let these components be denoted by $H_2$. Then, we define $H = H_1\cup H_2$, as the union of these connected components that surround $H_1$ and $H_1$ itself. Next, define $ R_m := R_m' \backslash H_1$. $ R_0 $ is bare cleanable and each $ R_m $ remains dressed cleanable. The volume of $ H $ is upper bounded by $O(|Q|+ |Q|d^{\frac{1}{D}})$.

%  constant depth circuit:

Given a constant depth circuit, we have to ensure that for each region $R_m^+ := \mathcal{B}(R_m, 2^{m-1} s_U)$ is still dressed cleanable, for $ m>0 $: For each connected component $ r_m' $ in $ R_m' $, $\mathcal{B}(r_m'\backslash H_1, 2^{m-1} s_U+c')$ has no overlap with $\mathcal{B}(Q,  s_U)$, so cleaning $ \mathcal{B}(r_m'\backslash H_1, 2^{m-1} s_U)$ by only local gauges does not spread Pauli operators to other connected components of $ R_m^+$ by any long range stabilizers since they are only supported on $ Q $, so it will remain geometrically local. Further, since $ r _m'\backslash H_1 $ increases by a constant size as $ 2^{m-1} s_U =O(1)$,  $ \mathcal{B}(r_m'\backslash H_1, 2^{m-1} s_U) $ remains dressed cleanable. By choosing $ r $ to be greater than $  2^{m} s_U$, $ \mathcal{B}(r_m'\backslash H_1, 2^{m-1} s_U) $ remains disconnected from the other connected components in $ R_m' $. Therefore, $ R_m^+ $ remains dressed cleanable.
	
%			6. logicals supported away
%			7. apply repeated cleaning. 
\paragraph{Proof that $ U $ lies in $ \mathcal{C}^{(D)} $ }

Suppose a pair of dressed logical representatives $\{L_x,L_z\}$ for a logical qubit are supported on $ \overline{H} $, then one can show that if $ U $ is a unitary operator that implements a logical single qubit gate by a constant depth circuit, then $ U $ must be in $ \mathcal{C}^{(D)} $ with respect to its logical action on the logical qubit.
 
First, we show the result for $ m=3 $. Let $ L_1 $ and $ L_2 $ be two arbitrary logical operators that are fully supported on $ \overline{H} $. $  R_1^+ = \mathcal{B}(R_1,s_u) $, $R_2^+ = \mathcal{B}(R_1,2s_u) $.
 
$ L_1 $ is supported on the regions that are separated from $ H_1 $ by connected components given by $H\backslash H_1=H_2$. Thus, $ L_1 $ can be cleaned from $ R_1^+ $ so that it is $s_U$ away from $H_1$. This can be done by picking the width of the connected components $r_m'$ to be large enough but still $O(1)$. After cleaning, $L_1$ is supported on the region $ R_0 \backslash R_1^+ \cup R_2\backslash R_1^+ $. Similarly, $ L_2 $ can be cleaned from $ R_2^+ $, and is fully supported on $ R_0\backslash R_2^+ \cup R_1\backslash R_2^+$. 
 
 $ L_1' = UL_1U^{\dagger}L_1^{\dagger} $ is supported on $ R_0\cup R_2 $. $ L_2' = L_2L_1'L_2^{\dagger}L_1'^{\dagger} $ is supported on $ R_0\cup R_1 $. From the commutation relation, $ L_2' $ is also supported on $ R_0\cup R_2 $. Thus, $ L_2' $ must be supported on only $ R_0 $ which is a bare cleanable region. $L_2'$ must be an Identity operator, up to some phases. This implies that $ L_1'  $ is a Pauli operator and $ U$ is at most a logical Clifford operator.

 We generalize to $ D $ dimension: First, pick $ D$ logical representatives that are fully supported on $ \overline{H} $, and label them $ L_i, i\in [1,D]$. $\forall i, L_i $  is cleaned from $ R_i^+ $, and is supported on the union of  $ R_k\backslash R_i^+, k \neq i, k \in [0, D]$. Here, we again make use of the fact that $H_1$ is fully surrounded by connected components, so we can make the components wide enough such that after cleaning, $L_i$ is supported on other dressed cleanable regions at least $2^{i-1}s_U$ away from $H_1$. Define $ L_1' = UL_1U^{\dagger}L_1^{\dagger} $ and define $L_i'$ recursively: $L_i' = L_iL_{i-1}'L_i^{\dagger}L_{i-1}'^{\dagger}  $ for $ 2 \leq i\leq D $. 
 
 Lastly, one can show that $ L_{D}' $ must be supported only on $ R_0 $. We prove this by induction on $i$. 

 Suppose $L_i'$ is supported on $R_0 \cup \bigcup_{m\geq i+1}R_m$. Since $L_{i+1}$ is supported at least $2^{i}s_U$ away from $H_1$, $\mathcal{B}(\mathrm{supp}(L_{i+1}), 2^{i}s_U)$ is supported on $\overline{H_1}$, $L_{i+1}'$ is supported on $R_0 \cup \bigcup_{m\neq i+1}R_m$. Since $L_{i+1}' = L_i'L_{i+1}L_i'^{\dagger}L_{i+1}^{\dagger} $, $L_{i+1}'$ is supported only on $\mathrm{supp}(L_i')$, so it is not supported on $\bigcup_{k< i+1}R_k$. Thus, $L_{i+1}'$ is supported on $R_0 \cup \bigcup_{m>i+1}R_m$.
 
 Thus, $ L_{D}' $ is a trivial logical operator since it is only supported on $R_0$. We can then recursively deduce that $ L_{i}' \in \mathcal{C}^{(D-i)} $. Then $ U \in \mathcal{C}^{(D)} $, and this concludes the proof. 

%			8. if o(d) -> vol of Q upper bounded by O(log(n)), surrounded by regions of lower dimension.  

\end{proof}

 \begin{corollary}
 ($H$ is $O(d)$) In the case if we can show that $ H = O(d)$ for any code $ C $, then $ H $ cannot support a dressed logical operator, so it is a bare cleanable region. Then, we can always find logical representatives for any logical qubit so that they are fully supported on $ \overline{H} $. Therefore, any logical gate $U$ for a code with $ H =O(d)$ is an element in $ \mathcal{C}^{(D)}$.
 
 \end{corollary}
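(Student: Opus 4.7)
The plan is to establish the corollary by combining the volume/weight bound implicit in the distance with the main theorem for subsystem codes. The hypothesis is that the enlarged region $H = \mathcal{B}(Q, c)$ constructed in the proof of Theorem~\ref{main theorem: sub stabilizer codes} has volume $|H| = O(d)$. The construction in that proof gave $|H| = O(|Q| + |Q| d^{1/D})$, so the hypothesis amounts to asserting $|Q|$ is small enough that $|H|$ stays below the distance $d$; I would begin by noting this explicitly.

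First, I would argue that $H$ supports no non-trivial dressed logical operator. Any dressed logical operator must have weight at least $d$ (the definition of $d_{\mathrm{subsystem}}$), so a region whose size is strictly less than $d$ cannot contain any dressed logical. In particular, since we can arrange $|H|< d$ by absorbing the constant into $O(d)$, $H$ is bare-cleanable by definition. Note that bare-cleanable is stronger than dressed-cleanable, so this simultaneously tells us that every logical representative (bare or dressed) can be pushed off $H$.

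Next, given any logical qubit $q_L$, I invoke the Cleaning Lemma on $H$ to produce bare (hence also dressed) logical representatives $x_L$ and $z_L$ for $q_L$ with support entirely in $\overline{H}$. Since this holds for every logical qubit of the code $C$, the hypothesis of Theorem~\ref{main theorem: sub stabilizer codes} is met for every logical qubit, for every constant-depth locality-preserving unitary $U$. Applying that theorem to each $q_L$ separately concludes that the logical action of $U$ on each $q_L$ lies in $\mathcal{C}^{(D)}$, so the full logical action of $U$ is in $\mathcal{C}^{(D)}$.

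The only subtlety, and likely the main technical point to verify, is the compatibility of the construction: the region $H$ used in the hypothesis must be the same $H$ appearing in the main theorem, not merely any $O(d)$-sized neighborhood of $Q$. I would check that the constants $c = O(d^{1/D})$ and the $R_m$ regions built in the proof of Theorem~\ref{main theorem: sub stabilizer codes} still go through when $|H|=O(d)$ and, in particular, that the surrounding connected components of $R_m' \cap \partial_+ H_1$ that were absorbed into $H$ are accounted for in the $O(d)$ bound. Once this is confirmed, the corollary follows without further work.
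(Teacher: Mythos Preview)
Your proposal is correct and follows the same line of reasoning the paper uses (which is essentially embedded in the corollary statement itself): the weight bound $|H|<d$ forces $H$ to support no dressed logical, hence $H$ is bare-cleanable, the Cleaning Lemma pushes all logical representatives onto $\overline{H}$, and Theorem~\ref{main theorem: sub stabilizer codes} then applies to every logical qubit. Your extra care about matching the $H$ in the hypothesis with the $H$ constructed in the theorem's proof is a reasonable sanity check, though the paper treats this as automatic.
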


For a 2D geometrically local stabilizer code with some qubits to support long range stabilizers to implement a non-Clifford gate, our results show a lower bound of $ |Q|=O(d) $ number of qubits that must have long range connectivity. Further, if any pair of logical representatives can be supported a constant physical distance away from $ Q $, then the logical qubit for these pair of representatives does not have a transversal non-Clifford logical single qubit gate. 

This means that in order for a logical qubit to have a transversal non-Clifford single qubit gate, all representatives of a logical operator must have a non-trivial overlap with $\mathcal{B}( Q,c) $, where $ c $ is some constant of $ O(1) $. By the cleaning lemma, a logical operator representative of the logical qubit can be fully supported on $ \mathcal{B}( Q,c) $. One can then conclude that the distance for the logical qubit is small as it is restricted to $O(|\mathcal{B}( Q,c)|)$ qubits.

Interestingly, our result is related to a result in \cite{baspin_quantifying_2022}. One observation in \cite{baspin_quantifying_2022} is that to create a code with distance $d+\epsilon$ that breaks the BPT bound, it will require $\Omega(d+\epsilon)$ edges. Thus, the following holds: 

\begin{theorem}
    If the support of long range stabilizers is only $O(d)$, the distance of the code does not improve. Thus, a geometrically local code with $O(d)$ qubits supporting long range stabilizers must satisfy the BPT bound. 
\end{theorem}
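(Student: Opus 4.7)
The plan is to invoke the result of Baspin et al.\ \cite{baspin_quantifying_2022} essentially as a black box, reducing the statement to a counting argument on the number of long-range edges in the connectivity graph. Recall that the BPT bound $d = O(L^{D-1})$ holds for any code whose Tanner graph is geometrically local, and that \cite{baspin_quantifying_2022} quantifies by how much this bound can be exceeded as a function of the number of non-local edges present.

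First I would make precise the correspondence between ``qubits supporting long-range stabilizers'' and ``long-range edges'' in the Tanner graph. Let $Q$ be the set of qubits carrying long-range stabilizers, with $|Q|=O(d)$. Under the LDPC assumption (constant check weight and constant qubit degree), each qubit in $Q$ participates in only a constant number of stabilizer generators, each of which contributes at most a constant number of non-local edges to the connectivity graph. Hence the total number of long-range edges satisfies $|E_{\mathrm{lr}}| \leq c \cdot |Q| = O(d)$ for some constant $c$ depending only on the locality radius $r_{\mathrm{local}}$ and the LDPC degree parameters.

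Next I would apply the quantitative statement from \cite{baspin_quantifying_2022}: any code whose distance exceeds the BPT bound by $\epsilon$ must contain $\Omega(d+\epsilon)$ long-range edges. Contrapositively, if $|E_{\mathrm{lr}}| = O(d)$ then $\epsilon = O(d)$, so the code's distance can exceed the purely local BPT bound by at most a constant multiplicative factor. In particular the asymptotic scaling $d = O(L^{D-1})$ (and the tradeoff $kd^{2/(D-1)} = O(n)$) continues to hold, which is exactly the BPT bound up to constants.

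The main obstacle, and the only nontrivial part, is the translation step: one must argue that a qubit lying in $Q$ and participating in a long-range stabilizer genuinely contributes only $O(1)$ long-range edges, not more. This is immediate under the standard LDPC assumption but would fail if high-weight stabilizers were allowed; so the proof should explicitly state the LDPC hypothesis (constant generator weight and constant qubit degree) that is implicit in the preceding discussion. With that hypothesis in place the counting is straightforward and the theorem follows directly from \cite{baspin_quantifying_2022}.
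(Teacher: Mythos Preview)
Your proposal is correct and follows essentially the same approach as the paper, which simply states the theorem as an immediate consequence of the observation from \cite{baspin_quantifying_2022} that breaking the BPT bound by $\epsilon$ requires $\Omega(d+\epsilon)$ long-range edges, without giving any further argument. Your explicit treatment of the qubits-to-edges translation under the LDPC assumption is in fact more careful than what the paper provides, which leaves that step entirely implicit.
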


We showed that if the support of long range stabilizers is only $ O(d) $, the logical gates that can be implemented using constant depth circuit must be contained in $ \mathcal{C}^{(D)} $, where $ D $ is the dimension of the lattice. 

The 2D doubled color code is an example of a geometrically local subsystem code that has non-Clifford gate in $ 2D $.

\begin{example}[2D doubled color code] \cite{bravyi2015doubled}
    The 2D doubled color code allows for a transversal implementation of T gate. The embedding gives $O(t^2)$ qubits that support long range gauge generators for a code with distance $d = 2t+1$ and $O(t^3)$ physical qubits. This gives an upper bound on the number of qubits that require long range support in a 2D geometrically local subsystem code setting in order to implement a non-Clifford gate. 
\end{example}
\section{Discussion}

A dynamical code can be constructed by a sequence of measurements. We can obtain the unmasked distance for each ISG and take the minimum to obtain an upper bound on the error correcting capacity of the dynamical code. This bound is applicable to a general circuit encompassing both unitaries and measurements.

However, there are certain challenges not addressed in this paper. First, our analysis excludes measurement errors, an important consideration when building the measurement sequence if we still want to protect the encoded information in the presence of measurement errors. We anticipate that future research will address this in the context of fault tolerance for dynamical codes. 

Additionally, exploring the initialization of a dynamical code and incorporating that into the construction of new examples could be of interest. Another concern for a dynamical code is that the last few rounds may not have complete measurements, so the distance of the ISGs in the last few rounds may not be as desired.  

Lastly, we have identified interesting theoretical lower and upper bounds on the number of qubits that must support long range stabilizers in a $ 2D $ embedding of a quantum code in order for it to support a non-Clifford gate fault tolerantly. Exploring this limit may lead to an interesting direction, potentially yielding a code that strikes an optimal balance between code distance, geometric locality and support for fault tolerant non-Clifford gates. 

\section*{Acknowledgments}
The authors would like to thank Zi-Wen Liu, Beni Yoshida, Noah Berthusen, Nathanan Tantivasadakarn, Christophe Vuillot, Margarita Davydova, Narayanan Rengaswamy, Ali Fahimniya, Hossein Dehghani and Prakhar Gupta for their helpful discussions. D.G is partially supported by the National Science Foundation (RQS QLCI grant OMA-2120575).

\medskip
%Bibliography
%\bibliographystyle{quantum}
%\bibliography{references}

\end{document}